\def\scalar#1{\textcolor{green!65!black}{#1}}
\def\fermi#1{\textcolor{red!65!black}{#1}}
\def\del{\partial}
\def\delbar{\bar\partial}
\newcommand{\defterm}[1]{\textbf{\emph{#1}}}
\def\hCS{{\rm hCS}}
\def\psBV{presymplectic BV}
\newcommand*\circleds[1]{\tikz[baseline=(char.base), transform canvas={scale=0.6}]{
            \node[shape=circle,draw,solid,inner sep=2pt] (char) {#1};}}
\newcommand*\circled[1]{\tikz[baseline=(char.base)]{
            \node[shape=circle,draw,solid,inner sep=2pt] (char) {#1};}}            
\begin{document}

\title[Constraints in the BV formalism]{Constraints in the BV formalism: \\ six-dimensional supersymmetry and its twists}

\author{Ingmar Saberi}
\address{Mathematisches Institut der Universit\"at Heidelberg \\ Im Neuenheimer Feld 205 \\ 69120 Heidelberg \\ Deutschland}
\email{saberi@mathi.uni-heidelberg.de}

\author{Brian R. Williams}
\address{School of Mathematics\\
University of Edinburgh \\ 
Edinburgh \\ 
UK}
\email{brian.williams@ed.ac.uk}

\date{September 8, 2020}

\begin{abstract}
We formulate the abelian six-dimensional $\cN=(2,0)$ theory perturbatively, in a generalization of the Batalin--Vilkovisky formalism. 
Using this description, we compute the holomorphic and non-minimal twists at the perturbative level. 
This calculation hinges on the existence of an $L_\infty$ action of the supersymmetry algebra on the abelian tensor multiplet, which we describe in detail.  Our formulation appears naturally in the pure spinor superfield formalism, but understanding it requires developing a presymplectic generalization of the BV formalism, inspired by Dirac's theory of constraints. The holomorphic twist consists of symplectic-valued holomorphic bosons from the $\N=(1,0)$ hypermultiplet, together with a degenerate holomorphic theory of holomorphic coclosed one-forms from the $\N=(1,0)$ tensor multiplet, which can be interpreted as representing the intermediate Jacobian. 
  We check that our formulation and our results match with known ones under various dimensional reductions, as well as comparing the holomorphic twist to Kodaira--Spencer theory. Matching our formalism to five-dimensional Yang--Mills theory after reduction leads to some issues related to electric--magnetic duality; we offer some speculation on a nonperturbative resolution.
\end{abstract}

\maketitle

\newpage

\spacing{1.5}

\tableofcontents


\section{Introduction}

There is a supersymmetric theory in six dimensions whose fields include a two-form with self-dual field strength. 
Concrete and direct formulations of this theory, whose field content is referred to as the tensor multiplet, have remained elusive, despite an enormous amount of work and numerous applications, predictions, and consistency checks. One main difficulty is that the theory is believed not to admit a Lagrangian description, meaning that its equations of motion---even for the free theory---do not arise from a standard covariant action functional via the usual methods of variational calculus. 

Part of the desire to better understand theories of tensor multiplets is due to their ubiquity in the context of string theory and $M$-theory. 
The tensor multiplet with $\cN = (2,0)$ supersymmetry, valued in the Lie algebra $\lie{u}(N)$, famously appears as the worldvolume theory of $N$ coincident $M5$-branes; this theory has been the topic of, and inspiration for, an enormous amount of research. The literature is too large to survey here, but we give a few selected references below. 


Our main objective in this paper is to compute the twists of the abelian tensor multiplet. (We restrict to a perturbative analysis of the free theory with abelian gauge group; as such, we do not touch on issues relating to the interacting superconformal theories expected in the nonabelian case $N>1$, although we believe that some of our structural insights should be of use in that setting as well.) On general grounds, the $\N=(2,0)$ supersymmetry algebra in  six dimensions admits two twists: a holomorphic or minimal twist, together with a non-minimal twist that is defined on the product of a Riemann surface and a smooth four-manifold. 
Just at the level of the physical fields, a first rough statement of our results is as follows.

\begin{thm*}
The abelian $\cN=(2,0)$ admits two inequivalent classes of twists described as follows.
\begin{itemize}
\item[(1)]
The holomorphic twist exists on any complex three-fold $X$ equipped with a square-root of the canonical bundle $K^{\frac12}_X$.
It is equivalent to a theory whose physical fields are a $(1,1)$-form $\chi^{1,1}$, a $(0,2)$-form $\chi^{0,2}$, and a pair of symplectic fermions $\psi^i$, $i=1,2$ which transform as $(0,1)$ forms with values in $K^{\frac12}_X$. 
These fields obey the equations  
\begin{align*}
\dbar \chi^{1,1} + \partial \chi^{0,2} & = 0 ,\\
\dbar \chi^{0,2} & = 0 , \\
\dbar \psi^i & = 0 .
\end{align*}
The gauge symmetries of this theory are parameterized by form fields $\alpha^{0,1}$ and~$\alpha^{1,0}$, together with a pair of symplectic fermion gauge fields $\xi^i$, $i=1,2$ which are sections of $K^{\frac12}_X$.
They act by the formulas
\begin{align*}
  \chi^{1,1} &\mapsto \chi^{1,1} + \partial \alpha^{0,1} + \dbar \alpha^{1,0}, \\
  \chi^{0,2} &\mapsto \chi^{0,2} + \dbar \alpha^{0,1}, \\
  \psi^i &\mapsto \psi^i + \dbar \xi^i.
 \end{align*}

\item[(2)] 
The non-minimal twist exists on any manifold of the form $M \times \Sigma$ where $M$ is a smooth four-manifold and $\Sigma$ is a Riemann surface. 
It is equivalent to a theory whose physical fields are a pair 
\[
  (\chi^{2 ; 0,0} , \chi^{1 ; 0,1}) \in  \Omega^2(M) \otimes \Omega^{0}(\Sigma)  \oplus  \Omega^1(M) \otimes \Omega^{0,1}(\Sigma) . 
\]
This pair obeys the equations of motion 
\begin{align*}
  \dbar \chi^{2 ; 0,0} + \d \chi^{1;0,1} & = 0 \\
\d \chi^{2;0,0} & = 0 .
\end{align*}
Here $\dbar$ is the $\dbar$-operator on $\Sigma$ and $\d$ is the de Rham operator on $M$. 
The theory has gauge symmetries by fields $\alpha^{1 ; 0,0}$ and~$\alpha^{0 ; 0,1}$, which act via $\chi^{2 ; 0,0} \mapsto \chi^{2 ; 0,0} + \d \alpha^{1 ; 0,0}$ and $\chi^{1 ; 0,1} \mapsto \chi^{1 ; 0,1} + \d\alpha^{0 ; 0,1} + \dbar \alpha^{1 ; 0,0}$. 
\end{itemize}
\end{thm*}

The full statements of these results appear below in Theorems~\ref{thm:twist}, \ref{thm:nm}, and~\ref{thm:nmSO4}. 
Making sense of these twists and proving the theorems rigorously requires a great deal of groundwork, which leads us to develop some general theoretical tools that we expect to be of use outside the context of six-dimensional supersymmetry.

The main subtlety of the $\cN=(2,0)$ theory, and the twists above, is that they do {\em not} arise as the variational equations of motion of a local action functional. 
Thus, our first goal is to give a precise mathematical formulation of the perturbative theory of the free $\N=(2,0)$ tensor multiplet. (Of course, a corresponding formulation of the $\N=(1,0)$ tensor multiplet follows immediately from this.)  
Throughout the paper, we make use of the Batalin--Vilkovisky (BV) formalism;
see~\cite{CG2, CosRenorm} for a modern treatment of this setup, and~\cite{Schwarz-BV, BV} for a more traditional outlook.
Roughly, the data of a classical theory in the BV formalism is a graded space of fields $\cE_{\rm BV}$ (given as the space of sections of some graded vector bundle on spacetime), together with a symplectic form $\omega_{\rm BV}$ of cohomological degree $(-1)$ on $\cE_{\rm BV}$ and an action functional. The (degree-one) Hamiltonian vector field associated to the action functional defines a differential on~$\cE_{\rm BV}$. Under appropriate conditions, this differential provides a free resolution to  the sheaf of solutions to the equations of motion of the theory, modulo  gauge equivalence. 

It is clear that this formalism does not extend to the tensor multiplet in a straightforward way. The issue arises from the presence of the self-duality constraint on the field strength of the two-form, and is independent of supersymmetry and of other details about this particular theory. 
In Lorentzian signature, self-dual constraints on real $2k$-form fields can be imposed in spacetime dimension $4 k + 2$, where $k = 0,1, \ldots$.\footnote{In the literature, such constraints are sometimes called ``chiral.'' To avoid confusion, we will reserve this term for a different constraint that can be defined on complex geometries, and that will play a large role in what follows.} We will always work in Euclidean signature in this paper, and therefore also with complexified coefficients; for us, the self-dual constraint in six dimensions therefore takes the form 
\deq{
  \alpha \in \Omega^{2}(M^6)  , \quad  \star \d \alpha = \sqrt{-1} \, \d \alpha . 
}
The Yang--Mills style action of a higher form gauge theory would be given by the $L^2$-norm $\|\d \alpha \|_{L^2} = \int \d \alpha \wedge \star \d \alpha$. 
It is clear that the self-duality condition implies the norm vanishes identically, so an action functional of Yang--Mills type is not feasible~\cite{WittenConformal}.
Writing a covariant Lagrangian of any standard form for the tensor multiplet has been the subject of much effort, and is generally thought to be impossible (although various formulations have been proposed in the abelian case; see, for example, \cite{BLNPST,APPS,HS1,HS2}). A standard BV formulation of the theory, along the lines of more familiar examples, is thus out of reach for this reason alone.

The formulation we use was motivated by the desire to understand the pure spinor superfield formalism for $\N=(2,0)$ supersymmetry; the relevant cohomology was first computed in~\cite{CederwallM5}, and was rediscovered and reinterpreted in~\cite{NV}. Roughly speaking, this formalism takes as input an equivariant sheaf over the space of Maurer--Cartan elements, or nilpotence variety, of the supertranslation algebra, and produces a chain complex of locally free sheaves over the spacetime, together with a homotopy action of the corresponding supersymmetry algebra. The resulting multiplet can be interpreted as the BRST or BV formulation of the corresponding free multiplet, according to whether the action of the supersymmetry algebra closes on shell or not; the differential, which is also an output of the formalism, corresponds in the latter case to the Hamiltonian vector field mentioned above. 

In the case of $\N=(2,0)$ supersymmetry, the action of the algebra is, as always, guaranteed on general grounds. The fields exhibit an obvious match to the content of the $(2,0)$ tensor multiplet, and the differential includes the correct linearized equations of motion. 
(In fact, the resulting multiplet contains no auxiliary fields at all.) 
One thus expects to have obtained an on-shell formalism, but the interpretation of the resulting resolution as a BV theory is subtle for a new reason: there is no obvious or natural shifted symplectic pairing.  
In fact, developing a framework for studying the multiplets produced by pure-spinor techniques 
requires a generalization of the standard formalism, which necessarily allows for \emph{degenerate} pairings.

In classical symplectic geometry, symplectic pairings that are not required to be nondegenerate are called \emph{presymplectic}. In fact, presymplectic structures have played a role in physics before, in Dirac's theory of constrained systems. In this context, the origin is clear: while symplectic structures do not pull back, presymplectic structures (which are just closed two-forms) do. Any submanifold of a symplectic phase space, such as a constraint surface, thus naturally inherits a presymplectic structure. 


The simplest situation where the issues of self-duality constraints arise occurs for $k=0$, in the context of two-dimensional conformal field theory.
Here, the constraint is precisely the condition of holomorphy, and the theory of a self-dual zero-form is just the well-known chiral boson. We take a brief intermezzo to remark on this theory briefly, to offer the reader some familiar context for our more general considerations. 

\subsubsection*{The chiral boson}
In Lorentzian signature, 
the theory of the (periodic) chiral boson describes left-moving circle-valued maps.
Working perturbatively, as we do throughout this paper, the periodicity plays no role, so that the field is simply a left-moving real function; after switching to Euclidean signature (and correspondingly complexifying), we have a theory of maps $\varphi$ that are simply holomorphic functions on a Riemann surface. 

As discussed above, one role of the BV formalism is to provide a resolution of the sheaf of solutions to the equations of motion by smooth vector bundles. For the sheaf of holomorphic functions, such a resolution is straightforward to write down: 
it is just given by the Dolbeault complex $\Omega^{0,\bu}(\Sigma)$.

The chiral boson is not a theory in the usual sense of the word, perturbatively or otherwise, as it is not described by an action functional: the equations of motion, namely that $\varphi$ be holomorphic, do not arise as the variational problem of a classical action functional. Relatedly, the free resolution $\Omega^{0,\bu}(\Sigma)$ is not a BV theory, as it does not admit a nondegenerate pairing of an appropriate kind. 
Nevertheless, there is a way to formulate the chiral boson in a slightly modified version of the BV formalism, by interpreting holomorphy (which, in this setting, is the same as self-duality) as a \emph{constraint}.

To do this, we first consider a closely related theory, the (non-chiral) free boson, which does have a description in the BV formalism.
The free boson is a two-dimensional conformal field theory whose perturbative fields, in Euclidean signature, are just a smooth complex-valued function~$\varphi$ on~$\Sigma$;
the equations of motion impose that $\varphi$ is harmonic. 
In the BV formalism, one can model this free theory by the following two-term cochain complex
\deq[eq:freebos]{
\begin{array}{ccccc}
& & \ul{0} & &  \ul{1} \\
\cE_{\rm BV} & = & \Omega^0(\Sigma) & \xto{\partial \dbar} & \Omega^2(\Sigma) .
\end{array}
}
We can equip $\cE$ with a degree $(-1)$ antisymmetric non-degenerate pairing, which in this case is just given by multiplication and integration. 
That is
\[
\omega_{\rm BV} (\varphi, \varphi^+) = \int \varphi \varphi^+
\]
where $\varphi \in \Omega^0(\Sigma)$ and $\varphi^+ \in \Omega^2(\Sigma)$. 
This is the $(-1)$-shifted symplectic form, and the differential in~\eqref{eq:freebos} is the Hamiltonian vector field associated to the free action functional, as described in general above.

Now, there is a natural map of cochain complexes 
\[
i : \Omega^{0,\bu} (\Sigma) \to \cE_{\rm BV}
\]
which in degree zero is the identity map on smooth functions, and in degree one is defined by the holomorphic de Rham operator $\partial : \Omega^{0,1}(\Sigma) \to \Omega^2 (\Sigma)$. 
We can pull back the degree $(-1)$ symplectic form $\omega$ on $\cE$ to a two-form $i^* \omega$ on $\Omega^{0,\bu}(\Sigma)$, which is closed because $i$ is a cochain map. 
Explicitly, this two-form on the space $\Omega^{0,\bu}(\Sigma)$ is $(i^* \omega)(\alpha, \alpha') = \int \alpha \partial \alpha'$. 

Since $i$ is not a quasi-isomorphism, $i^* \omega$ is degenerate, and hence does not endow $\Omega^{0,\bu}(\Sigma)$ with a BV structure. However, it is useful to think of $i^*\omega$ as a shifted \emph{presymplectic} structure on the chiral boson, encoding ``what remains'' of the standard BV structure after the constraint of holomorphy has been imposed.

In analogy with ordinary symplectic geometry, we will refer to the data of a pair $(\cE, \omega)$ where $\cE$ is a graded space of fields, and $\omega$ is a closed two-form on $\cE$, as a {\em presymplectic} BV theory.
We make this precise in Definition \ref{dfn:preBV}, at least for the case of free theories.
In the example of the chiral boson this pair is $(\Omega^{0,\bu}(\Sigma), i^* \omega_{\rm BV})$.
\\

The theory of the self-dual two-form in six-dimensions (more generally a self-dual $2k$-form in $4k+2$ dimensions) arises in an analogous fashion. 
There is an honest BV theory of a nondegenerate two-form on a Riemannian six-manifold, which endows the theory of the self-dual two-form with the structure of a presymplectic BV theory. 
Among other examples, we give a precise formulation of the self-dual two-form in \S \ref{sec:pre}.

As for standard BV theories, one would hope to have a theory of observables, techniques for quantization, and so on in the context of \psBV{} theories. To develop a theory of classical observables, we make use of the theory of \emph{factorization algebras}.
Costello and Gwilliam have developed a mathematical approach to the study of observables in perturbative field theory, of which local operators are a special case. 
The general philosophy is that the observables of a perturbative (quantum) field theory have the structure of a factorization algebra on spacetime~\cite{CG1,CG2}. 
Roughly, this factorization algebra of observables assigns to an open set $U$ of spacetime a cochain complex $\Obs(U)$ of ``observables with support contained in $U$.''
When two open sets $U$ and~$V$ are disjoint and contained in some bigger open set $W$, the factorization algebra structure defines a rule of how to ``multiply" observables $\Obs(U) \otimes \Obs(V) \to \Obs(W)$. 
For local operators, one should think of this as organizing the operator product expansion in a sufficiently coherent way.

In the ordinary BV formalism, the factorization algebra of observables has a very important structure, namely a Poisson bracket of cohomological degree $+1$ induced from the shifted symplectic form $\omega_{\rm BV}$. 
This is reminiscent of the Poisson structure on functions on an ordinary symplectic manifold, and is a key ingredient in quantization.

In the case of a presymplectic manifold, the full algebra of functions does {\em not} carry such a bracket. But there is a subalgebra of functions, called the {\em Hamiltonian} functions, that does.
This issue persists in the \psBV{} formalism, and some care must be taken to define a notion of observables that carries such a shifted Poisson structure. 
We tentatively solve this problem, and for special classes of free \psBV{} theories we provide an appropriate notion of ``Hamiltonian observables."
The corresponding factorization algebra carries a shifted Poisson structure, which is a direct generalization of the work of Costello--Gwilliam that works to include presymplectic BV theories.\footnote{The development of the theory of observables for more general \psBV{} theories is part of ongoing work with Eugene Rabinovich.}
While the Hamiltonian observables provide a way of understanding a large class of observables in \psBV{} theories, we emphasize that a full theory should be expected to contain additional, nonperturbative observables: the Hamiltonian observables of the chiral boson, for example, agree with the $\U(1)$ current algebra, and therefore do \emph{not} see observables (such as vertex operators) that have to do with the bosonic zero mode. 

Using this formalism, we formulate the abelian tensor multiplet as a  \psBV{} theory, and go on to work out the full $L_\infty$ module structure encoding  the on-shell action of supersymmetry. Our formalism is distinguished from other formulations of the abelian tensor multiplet in that it extends supersymmetry off-shell \emph{without} using any auxiliary fields, in the homotopy-algebraic spirit of the BV formalism. Using this $L_\infty$ module structure, we rigorously compute both twists; like the full theory, these are \psBV{} theories. In eliminating acyclic pairs to obtain more natural descriptions of the twisted theories, we are forced to carefully consider what it means for a quasi-isomorphism to induce an equivalence of shifted presymplectic structures; understanding these equivalences is crucial for correctly describing  both the presymplectic structure on the holomorphic twist and the action of the residual supersymmetry there. 

Our results allow us to compare concretely to Kodaira--Spencer theory on Calabi--Yau threefolds, which is expected to play a role in the proposed description of holomorphically twisted supergravity theories due to Costello and Li~\cite{CostelloLi}. It would be  interesting to try and incorporate our results into the framework of the nonminimal twist of 11d supergravity, which we expect to agree with the proposals for ``topological M-theory'' considered in the literature~\cite{TopoM,GerasimovShatashvili,NekrasovZ}; branes in topological M-theory were considered in~\cite{CederwallM5top}. However, we reserve more substantial comparisons for future work.

We are also able to perform a number of consistency checks with known results on holomorphic twists of theories arising by dimensional reduction of the tensor multiplet.
At the level of the holomorphic twist, we show that the reduction to four-dimensions yields the expected supersymmetric Yang--Mills theories. Furthermore, when we compactify along along a four-manifold we recover the ordinary chiral boson on Riemann surfaces. 

Finally, we discuss dimensional reduction to five-dimensional Yang--Mills theory at the level of the \emph{untwisted} theory. Issues related to electric--magnetic duality appear naturally and play a key role here; furthermore, obtaining the correct result on the nose requires correctly accounting for nonperturbative phenomena that are missed by our perturbative approach. Although we do not rigorously develop the presymplectic BV formalism at a nonperturbative level in this work, we speculate about a nonperturbative formulation for gauge group $\U(1)$, and argue that our proposal gives the correct  dimensional reduction on the nose at the level of chain complexes of sheaves. Doing this requires a conjectural description of the theory of abelian $p$-form fields in terms of a direct sum of two Deligne cohomology groups, which can be interpreted as a complete (nonperturbative) presymplectic BV theory in novel fashion. 

\subsection*{Previous work}
 
There has been an enormous amount of previous work in the physics literature on topics related to M5 branes and $\N=(2,0)$ superconformal theories in six dimensions, and any attempt to provide exhaustive references is doomed to fail. In  light of this, our bibliography makes no pretense to be complete or even representative. 
The best we can offer is an extremely brief and cursory overview of some selected past literature, which may serve to orient the reader; for more complete background, the reader is referred to the references in the cited literature, and in particular to the reviews~\cite{SezginSundell,BermanReview,DijkgraafFivebrane}. 

Tensor multiplets in six dimensions were constructed in~\cite{HoweSierraTownsend}. The earliest approaches to the M5-brane involved the study of relevant ``black brane'' solutions in eleven-dimensional supergravity theory~\cite{Guven}; perhaps the first intimation that corresponding six-dimensional theories should exist was made by considering type IIB superstring theory on K3 singularities in~\cite{WittenComments}. The abelian M5-brane theory was worked out, including various proposals for Lagrangian formulations, in~\cite{HS1,HS2}, in~\cite{APPS}, in~\cite{CederwallM5-action}, and in~\cite{BLNPST}, following the general framework for chiral fields in~\cite{PST}. These formulations were later shown to be equivalent in~\cite{PST-equiv}. The connection of the tensor multiplet to supergravity solutions on AdS$_4 \times S^7$ was discussed in~\cite{CKvP} with an emphasis on $\N=(2,0)$ superconformal symmetry.

As to twisting the theory, the non-minimal twist was studied in~\cite{TwistingTrouble,TwistOffShell}, and a close relative earlier in~\cite{BaulieuWest}. (The approach of the latter paper effectively made use of the twisting homomorphism appropriate to the unique topological twist in five-dimensional $\N=2$ supersymmetry; this is the dimensional reduction of the six-dimensional non-minimal twist.)  While these studies compute the nonminimal twist at a nonperturbative level, \cite{TwistingTrouble,TwistOffShell} do so only after compactification to four dimensions along the Riemann surface in the spacetime $\Sigma \times M^4$, and thus do not see the holomorphic dependence on~$\Sigma$ explicitly. Our results are thus in some sense orthogonal. The relevance of the full nonminimal twist for the AGT correspondence was emphasized in~\cite{YagiAGT}; it would be interesting to connect our results to the AGT~\cite{AGT} and 3d-3d~\cite{3d3d} correspondences. 

The holomorphic twist has, as far as we know, not been considered explicitly before, although the supersymmetric index of the abelian theory was computed in~\cite{BakGustavsson}. We expect agreement between the character of local operators in the holomorphic theory \cite{SWchar} and the index studied there, after correctly accounting for nonperturbative operators, but do not consider that question in the present work and hope to study it in the future. We note, however, that the $\PP_0$ factorization algebra arising as the Hamiltonian observables of the holomorphically twisted (1,0) theory was studied in~\cite{BrianOwenEugene} as a boundary system for seven-dimensional abelian Chern--Simons theory. (The relation between the six-dimensional self-dual theory and seven-dimensional Chern--Simons theory is the subject of earlier work by~\cite{BelovMoore}, among others.) We see both these results and our results here as progress towards an understanding of the holomorphically twisted version of the AdS$_7$/CFT$_6$ correspondence.

Recently, there has been new progress on the question of finding a formulation of the nonabelian theory; much of this progress makes use of higher algebraic or homotopy-algebraic structure. See, for example, \cite{FSS-M5}, \cite{SaemannSchmidt}, and~\cite{LP,LambertLagrangian}. It would be interesting either to study twisting some of these proposals, or to attempt to make further progress on these questions by searching for nonabelian or interacting generalizations of the twisted theories studied here. These might be easier to find than their nontwisted counterparts and offer new insight into the nature of the interacting $(2,0)$ theory. We look forward to working on such questions in the future, and hope that others are inspired to pursue similar lines of attack. 

For the physicist reader, we emphasize that we deal here with a formulation that is lacking, even at a purely classical level, in at least three respects. Firstly, we make no effort to formulate the theory non-perturbatively, even for gauge group $U(1)$; in a sense, our discussion deals only with the gauge group $\R$. (Some more speculative remarks about this, though, are given in~\S\ref{sec:dimred}.) In keeping with this, our analysis here does not yet deal carefully with issues of charge quantization; as such, the subtle issues considered in~\cite{WittenM5,HopkinsSinger} and generalized in~\cite{FreedGeneralized} make no appearance, although we expect them to play a role in correctly extending our theory to the ``nonperturbative'' setting of gauge group $\U(1)$.  
Lastly, we start with a formulation which does not involve any coupling to eleven-dimensional supergravity, and makes no attempt to connect to the M5 brane, in the sense that we ignore the formulation of the theory in terms of a theory of maps. Associated issues (such as WZW terms and kappa symmetry) therefore make no appearance, although the connection to Kodaira--Spencer theory is indicated to show how we see our results as fitting into a larger story about twisted supergravity theories in the sense of Costello and Li~\cite{CostelloLi}, as mentioned above. 

\subsection*{An outline of the paper}

We begin in \S \ref{sec:pre} by setting up a presymplectic version of the BV formalism for free theories. 
After stating some general results and reviewing a list of examples, the section culminates with a definition of the factorization algebra of Hamiltonian  observables for a class  of \psBV{} theories.
In \S \ref{sec:susy} we recall the necessary tools of six-dimensional supersymmetry and provide a definition of the $\cN=(1,0)$ and $\cN=(2,0)$ versions of the tensor multiplet in the \psBV{} formalism. 
We review the classification of possible twists, and then give an explicit description of the \psBV{} theory as an $L_\infty$ module for the supersymmetry algebra. 
We perform the calculation of the minimal twist of the tensor multiplet in \S \ref{sec:holtwist}, and of the non-minimal twist in \S \ref{sec:nonmin}. 
We touch back with string theory in \S \ref{sec:bcov}, where we relate our twisted theories to the conjectural twist of Type IIB supergravity due to Costello--Li. 
Finally, in \S \ref{sec:dimred}, we explore some consequences of our description of the twisted theories upon dimensional reduction. We perform some sanity checks with theories that are conjecturally obtained as the reduction of the theory on the M5 brane, culminating in a computation of the dimensional reduction of the \emph{untwisted} theory along a circle. Some interesting issues related to electric-magnetic duality appear naturally; we discuss these, and end with some speculative remarks on nonperturbative generalizations of our results.

\subsection*{Conventions and notations}
\label{sec: conventions}

\begin{itemize}
\item If $E \to M$ is a graded vector bundle on a smooth manifold $M$, then we define the new vector bundle $E^! = E^* \otimes {\rm Dens}_M$, where $E^*$ is the linear dual and ${\rm Dens}_M$ is the bundle of densities on $M$. 
We denote by $\cE$ the space of smooth sections of $E$, and $\cE^!$ the space of sections of $E^!$. 
The notation $\cE_c$ refers to the space of compactly supported sections of $E$. 
The notation ($\Bar{\cE}_c$) $\Bar{\cE}$ refers to the space of (compactly supported) distributional sections of $E$. 
\item The sheaf of (smooth) $p$-forms on a smooth manifold $M$ will be denoted $\Omega^p(M)$ and $\Omega^\bu (M) = \bigoplus \Omega^p(M)[-p]$ is the $\ZZ$-graded sheaf of de Rham forms, with $\Omega^p(M)$ in degree $p$.
Often times when $M$ is understood we will denote the space of $p$-forms by $\Omega^p$. 
More generally, our grading conventions are cohomological, and are chosen such that the cohomological degree of a chain complex of differential forms is determined by the (total) form degree, but always taken to start with the lowest term of the complex in degree zero. Thus $\Omega^p$ is a degree-zero object, $\Omega^{\leq p}$ is a chain complex with support in degrees zero to $p$, and $\Omega^{\geq p}(\R^d)$ begins with $p$-forms in degree zero and runs up to $d$-forms in degree $d-p$. 
\item On a complex manifold $X$, we have the sheaves $\Omega^{i,\hol}(X)$ of holomorphic forms of type $(i,0)$. 
The operator $\partial : \Omega^{i,\hol} (X) \to \Omega^{i+1, \hol}(X)$ is the holomorphic de Rham operator. 
The standard Dolbeault resolution of holomorphic $i$-forms is $(\Omega^{i,\bu}(X), \dbar)$ where $\Omega^{i,\bu}(X) = \oplus_{k} \Omega^{i,k}(X)[-k]$ is the complex of Dolbeault forms of type $(i,\bu)$ with $(i,k)$ in cohomological degree $+k$. 
Again, when $X$ is understood we will denote forms of type $(i,j)$ by $\Omega^{i,j}$. 
\end{itemize}

\subsection*{Acknowledgements}

We thank D.~Butson, K.~Costello, C.~Elliott, D.~Freed, O.~Gwilliam, Si Li, N.~Paquette, E.~Rabinovich, S.~Raghavendran, P.~Safronov, P.~Teichner, J.~Walcher, P.~Yoo for conversation and inspiration of all kinds. I.S.\ thanks the Fields Institute at the University of Toronto for hospitality, as well as the Mathematical Sciences Research Institute in Berkeley, California and the Perimeter Institute for Theoretical Physics for generous offers of hospitality that did not take place due to Covid-19. This work is funded by the Deutsche Forschungsgemeinschaft (DFG, German Research Foundation) under Germany's Excellence Strategy EXC 2181/1 --- 390900948 (the Heidelberg STRUCTURES Excellence Cluster). The work of B.R.W. is supported by the National Science Foundation Award DMS-1645877 and by the National Science Foundation under Grant No. 1440140, while the author was in residence at the Mathematical Sciences Research Institute in Berkeley, California, during the semester of Spring 2020.

\section{A presymplectic Batalin--Vilkovisky formalism}
\label{sec:pre}

In the standard Batalin--Vilkovisky (BV) formalism \cite{Schwarz-BV}, one is interested in studying the (derived) critical locus of an action functional. 
On general grounds, derived critical loci are equipped with canonical $(-1)$-shifted symplectic structures~\cite{PTVV}.
In perturbation theory, where we work around a fixed classical solution, we can assume that the space of BV fields $\cE$ are given as the space of sections of some graded vector bundle $E \to M$, where $M$ is the spacetime. 
In this context, the $(-1)$-symplectic structure boils down to an equivalence of graded vector bundles $\omega : E \cong E^! [-1]$. 

We remind the reader that in the standard examples of ``cotangent'' perturbative BV theories, $E$ is of the form 
\deq{
E = T^*[-1]F \overset{\rm def}{=} F \oplus F^![-1],
}
where $F$ is some graded vector bundle, which carries a natural $(-1)$-symplectic structure. 
The differential $\QBV$ is constructed such that
\deq{
  H^0(\cE, \QBV) \cong \Crit(S),
}
i.e.\ so that the sheaf of chain complexes $(\cE,\QBV)$ is a model of the derived critical locus.

In general, we can think of the $(-1)$-symplectic structure $\omega$ as a two-form (with constant coefficients) on the infinite-dimensional linear space $\cE$. Moreover, this two-form is of a very special nature: it arises locally on spacetime. 
For a more detailed introduction to the BV formalism its description of perturbative classical field theory, see~\cite{CosRenorm, CG2}. 

We will be interested in a generalization of the BV formalism, motivated by the classical theory of presymplectic geometry and its appearance in Dirac's theory of constrained systems in quantum mechanics. 
In ordinary geometry, a presymplectic manifold is a smooth manifold $M$ equipped with a closed two-form $\omega \in \Omega^{2}(M)$, $\d \omega = 0$.
Equivalently, $\omega$ can be viewed as a skew map of bundles $TM \to T^*M$.
This is our starting point for the presymplectic version of the BV formalism in the derived and infinite dimensional setting of field theory.


\subsection{Presymplectic BV formalism}

We begin by introducing the presymplectic version of the BV formalism in terms of a two-form on the space of classical fields. 
This generalization shares many features with the usual BV setup: the two-form of degree $(-1)$ on arises ``locally" on spacetime, in the sense that it is defined by a differential operator acting on the fields.
In this paper we are only concerned with free theories, so we immediately restrict our attention to this case.

It is important for us that our complexes are bigraded by the abelian group $\ZZ \times \ZZ/2$.
We will refer to the integer grading as the \emph{cohomological} or \emph{ghost degree}, and the supplemental $\Z/2$ grading as \emph{parity} or \emph{fermion number}.

Before stating the definition of a free \psBV{} theory, we set up the following notion about the skewness of a differential operator. 
Let $E$ be a vector bundle on $M$ and suppose $D : \cE \to \cE^! [n]$ is a differential operator of degree $n$. 
The continuous linear dual of $\cE$ is $\cE^\vee = \Bar{\cE}_c^!$ (see \S \ref{sec: conventions}).
So, $D$ defines the following composition
\[
\Bar{D} : \cE_c \hookrightarrow \cE \xto{D} \cE^! [n] \hookrightarrow \Bar{\cE}^! [n] .
\]
The continuous linear dual of $\Bar{D}$ is a linear map of the same form $\Bar{D}^\vee : \cE_c \to \Bar{\cE}^![n]$. 
We say the original operator $D$ is {\em graded skew symmetric} if $\Bar{D} = (-1)^{n+1} \Bar{D}^\vee$. 

\begin{dfn}
\label{dfn:preBV}
A (perturbative) \defterm{free presymplectic BV theory} 
on a manifold $M$ is a tuple $(E, Q_{\rm BV}, \omega)$ where:
\begin{itemize}
\item $E$ is a finite-rank, $\Z\times\Z/2$-graded vector bundle on $M$, equipped with a differential operator
\[
Q_{\rm BV} \in {\rm Diff}(\cE , \cE) [1] 
\]
of bidegree $(1,0)$;

\item a differential operator
\[
\omega \in {\rm Diff}\left(\cE ,\cE^! \right) [-1]
\]
of bidegree $(-1,0)$;
\end{itemize}
which satisfy:
\begin{enumerate}
  \item[$(1)$] the operator $Q_{\rm BV}$ satisfies $(Q_{\rm BV})^2 = 0$, and the resulting complex $(\cE, Q_{\rm BV})$ is elliptic;
\item[$(2)$] the operator $\omega$ is graded skew symmetric with regard to the totalized $\Z/2$ grading;
\item[$(3)$] the operators $\omega$ and $Q_{\rm BV}$ are compatible: $[Q_{\rm BV} , \omega] = 0$.
\end{enumerate}
\end{dfn}

We refer to the fields $\phi \in \cE$ of cohomological degree zero as the ``physical fields". 
For free theories, the linearized equations of motion can be read off as $Q_{\rm BV} \phi = 0$. 
As is usual in the BRST/BV formalism, gauge symmetries are imposed by the fields of cohomological degree $-1$. 

The differential operator $\omega$ determines a bilinear pairing of the form
\[
\int_M \omega : \cE_c \times \cE_c \to {\rm Dens}_M [-1] \xto{\int_M} \CC [-1]
\]
which endows the compactly supported sections $\cE_c$ with the structure of a $(-1)$-shifted presymplectic vector space. 
Often, we will refer to a shifted presymplectic structure by prescribing the data of such a bilinear form on compactly supported sections. 

Of course, it should be clear that a (perturbative) free BV theory \cite[Definition 7.2.1.1]{CG2} is a free \psBV{} theory such that $\omega$ is induced from a bilinear map of vector bundles which is fiberwise non-degenerate. 
The notion of a free \psBV{} theory is thus a weakening of the more familiar definition. 
Indeed, when $\omega$ is an order zero differential operator such that $\omega : \cE \xto{\cong} \cE^![-1]$ is an isomorphism, the tuple $(E, Q_{\rm BV}, \omega)$ defines a free BV theory in the usual sense.

\begin{rmk}
There are two natural ways to generalize Definition \ref{dfn:preBV} that we do not pursue here:
\begin{description}[font=\normalfont\textit]
  \item[$-$ Non-constant coefficient presymplectic forms]
More generally, one can ask that $\omega$ be given as a polydifferential operator of the form
\[
\omega \in \prod_{n \geq 0} {\rm PolyDiff} (\cE^{\otimes n} \otimes \cE , \cE^!) [-1] .
\]
The right-hand side is what one should think of as the space of ``local" two-forms on $\cE$. 
\item[$-$ ``Interacting" \psBV{} formalism]
Here, we require that $\cL = \cE[-1]$ be equipped with the structure of a local $L_\infty$ algebra.
Thus, the space of fields $\cE$ should be thought of as the formal moduli space given by the classifying space $B \cL$. 
In the situation above, the free theory corresponds to an abelian local $L_\infty$ algebra, in which only the unary operation (differential) is nontrivial. 
\end{description}
There is a natural compatibility between these two more general structures that is required.
Using the description of the fields as the formal moduli space ${\rm B} \cL$, for some $L_\infty$ algebra $\cL$, one can view $\omega$ as a two-form $\omega \in \Omega^2({\rm B} \cL) = \clie^\bu(\cL , \wedge^2 \cL[1]^*)$. 
There is an internal differential on the space of two-forms given by the Chevalley--Eilenberg differential $\d_{\rm CE}$ corresponding to the $L_\infty$ structure. 
There is also an external, de Rham type, differential of the form $\d_{\rm dR} : \Omega^2({\rm B} \cL)  \to \Omega^3 ({\rm B} \cL)$. 
In this setup we require $\d_{\rm CE} \omega = 0$ and $\d_{\rm dR} \omega = 0$. 
We could weaken this condition further by replacing strictly closed two-forms on ${\rm B} \cL$ by $\Omega^{\geq 2}({\rm B} \cL)$ and asking that $\omega$ be a cocycle here. 
\end{rmk}

Since we only consider {\em free} presymplectic BV theories in this paper, we will simply refer to them as presymplectic BV theories.

\subsection{Examples of \psBV{} theories}
\label{ssec:examples}

We proceed to give some examples of \psBV{} theories, beginning with simple examples of degenerate pairings and proceeding to more ones more relevant to six-dimensional theories. 
The secondary goal of this section is to set up notation and terminology that will be used in the rest of the paper.

\begin{eg}
Suppose $(V, w)$ is a finite dimensional presymplectic vector space.
That is, $V$ is a finite dimensional vector space and $w: V \to V^*$ is a (degree zero) linear map which satisfies $w^* = - w$. 
Then, for any $1$-manifold $L$, the elliptic complex
\[
(\cE, Q_{\rm BV}) = \left(\Omega^\bu \otimes V , \; \d_{\rm dR} \right)
\]
is a \psBV{} theory on $L$ with
\[
\omega = {\id}_{\Omega^\bu} \otimes w : \Omega^\bu \otimes V \to \Omega_L^\bu \otimes V^* = \cE^! [-1] .
\]
Similarly, if $\Sigma$ is a Riemann surface equipped with a spin structure $K^{\frac12}$, then the elliptic complex
\[
(\cE, Q_{\rm BV}) = \left(\Omega^{0,\bu} \otimes K^{\frac12} \otimes V , \; \dbar \right)
\]
is a \psBV{} theory on $\Sigma$ with
\[
  \omega = {\id_{\Omega^{0,\bu} \otimes K^{\frac12}}} \otimes w  : \Omega^{0,\bu} \otimes K^{\frac12} \otimes V \to \Omega^{0,\bu} \otimes K^{\frac12} \otimes V^*  .
\]
\end{eg}

Each theory in this example arose from an ordinary presymplectic vector space, which was also the source of the degeneracy of~$\omega$.
The first example that is really intrinsic to field theory, and also relevant for the further discussion in this paper, is the following. 

\begin{eg}
  \label{eg:chiralboson}
Let $\Sigma$ be a Riemann surface and suppose $(W, h)$ is a finite dimensional vector space equipped with a symmetric bilinear form thought of as a linear map $h : W \to W^*$.
Then 
\[
(\cE, Q_{\rm BV}) = \left(\Omega^{0,\bu} \otimes W , \; \dbar \right) 
\]
is a \psBV{} theory with 
\[
\omega = \partial \otimes h : \Omega^{0,\bu} \otimes W \to \Omega^{1,\bu} \otimes W^* = \cE^! [-1] .
\]
We refer to this free \psBV{} theory as the \defterm{chiral boson} with values in $W$, and will denote it by $\chi(0, W)$ (see the next example). 
In the case that $W = \CC$, we will simply denote this by $\chi(0)$. 
\end{eg}

\begin{rmk}
  \label{rmk:h}
While we did not require $(W,h)$ to be nondegenerate in the above example, the theory is a genuinely presymplectic BV theory even if $h$ is nondegenerate. 
This corresponds to the standard notion of the chiral boson in the physics literature, and we will have no cause to consider degenerate pairings $h$ in what follows.
\end{rmk}

\begin{eg} \label{eg:chiral2kform}
Suppose $X$ is a $(2k+1)$-dimensional complex manifold.
Let $\Omega^{\bu, \hol} = \left(\Omega^{\bu, \hol}, \partial \right)$ be the holomorphic de Rham complex and let $\Omega^{\geq k+1, \hol}$ be the complex of forms of degree $\geq k + 1$. 
By the holomorphic Poincar\'{e} lemma, $\Omega^{\geq k+1,\hol}$ is a resolution of the sheaf of holomorphic closed $(k+1)$-forms.
Further, $\Omega^{\geq k+1, \hol}[-k-1]$ is a subcomplex of $\Omega^{\bu, \hol}$ and there is a short exact sequence of sheaves of cochain complexes
\[
\Omega^{\geq k+1, \hol} [-k-1] \to \Omega^{\bu,\hol} \to \Omega^{\leq k, \hol} 
\]
which has a locally free resolution of the form
\deq{
\Omega^{\geq k+1 , \bu}[-k-1] \to \Omega^{\bu,\bu} \to \Omega^{\leq k, \bu} .
}
In this sequence, all forms are smooth and the total differential is $\partial + \dbar$ in each term. 
We use this quotient complex $\Omega^{\leq k, \bu}$ to define another class of \psBV{} theories.

Let $(W, h)$ be as in the previous example. (Following Remark~\ref{rmk:h}, it may as well be nondegenerate.) 
The elliptic complex
\[
(\cE, Q_{\rm BV}) = \left(\Omega^{\leq k, \bu}_X \otimes W [2k] , \d = \partial + \dbar \right).
\]
is a \psBV{} theory with
\[
\omega = \partial \otimes h : \Omega^{\leq k, \bu}_X \otimes W [2k] \to \Omega^{\geq k+1, \bu}_X \otimes W^* [k] .
\]
We denote this \psBV{} theory by $\thy(2k,W)$, which we will refer to as the \defterm{chiral $2k$-form} with values in $W$. 
In the case $W = \CC$ we will simply denote this by $\thy(2k)$. 
\end{eg}

\begin{eg} 
  \label{eg:sd}
  Let $M$ be a Riemannian $(4k+2)$-manifold, and $(W,h)$ as above.
The Hodge star operator $\star$ defines a decomposition 
\deq{
  \Omega^{2k+1}(M) = \Omega^{2k+1}_+(M) \oplus \Omega^{2k+1}_-(M)
}
on the middle de Rham forms, 
where $\star$ acts by $\pm \sqrt{-1}$ on $\Omega^{2k+1}_\pm (M)$. 

Consider the following exact sequence of sheaves of cochain complexes:
  \deq{
    0 \to \Omega^{\geq 2k+1}_- [-2k-1] \to \Omega^\bu \to \Omega^{\leq 2k+1}_+ \to 0
  }
where
  \deq{
    \Omega^{\leq 2k+1}_+ = \left( \Omega^0 \xrightarrow{\d} \Omega^1 [-1] \xrightarrow{\d} \cdots \xrightarrow{\d} \Omega^{2k}[-2k] \xrightarrow{\d_+} \Omega^{2k+1}_+ [-2k-1]\right) , \qquad \d_+ = \frac{1}{2} (1 - \sqrt{-1}\star) \d,
  }
and
  \deq{
    \Omega^{\geq 2k+1}_- = \left(\Omega^{2k+1}_- \xrightarrow{\d} \Omega^{2k+2} [-1] \xrightarrow{\d} \cdots \xrightarrow{\d} \Omega^{4k+2}[-2k-1]\right)   }

Let
  \deq{
    (\cE,\QBV) = (\Omega^{\leq 2k+1}_+ \otimes W [2k], \; \d \;)
  }
and
\[
\omega = \d \otimes h : \Omega^{\leq 2k+1}_+ \otimes W [2k] \to \Omega^{\geq 2k+1}_- \otimes W^* .
\]
This data defines a \psBV{} theory $\thy_+(2k,W)$ on any Riemannian $(4k+2)$-manifold, which we will refer to as the \defterm{self-dual $2k$-form} with values in $W$. 
Again, in the case $W = \CC$ we will simply denote this by $\thy_+(2k)$. 
\end{eg}

\begin{rmk}
In general, the theories $\thy(2k)$ and $\thy_+(2k)$ are defined on different classes of manifolds; they can, however, be simultaneously defined when $X$ is a complex manifold equipped with a K\"ahler metric. 
Even in this case, they are distinct theories (although their dimensional reductions along $\C P^2$ both agree with the usual chiral boson; see \S \ref{sec:dimred}).
In \S \ref{sec:holtwist} we will show explicitly that the $\N=(1,0)$ tensor multiplet (which consists of $\thy_+(2)$ together with fermions and one scalar) becomes precisely $\thy(2)$ under a holomorphic twist.

There is, however, one case where the two theories $\thy(2k)$ and~$\thy_+(2k)$ coincide. A choice of metric on a Riemann surface determines a conformal class, which then corresponds precisely to a complex structure. As such, both of the theories $\thy(0)$ and $\thy_+(0)$ are always well-defined, and in fact agree; both are the theory of the chiral boson defined in Example~\ref{eg:chiralboson}.
\end{rmk}

We now recall a couple of examples of nondegenerate theories, for later convenience and to fix notation, that fit the definition of a standard free BV theory \cite[Definition 7.2.1.1]{CG2}.

\begin{eg}\label{eg:scalar}
Let $M$ be a Riemannian manifold of dimension $d$. 
Let $(W,h)$ be a complex vector space equipped with a non-degenerate symmetric bilinear pairing $h : W \cong W^*$. The theory $\Scalar(0,W)$ of the \defterm{free boson with values in~$W$} is the data
  \deq{
    (\cE,\QBV) = \left( \Omega^0(M) \otimes W \xrightarrow{\d \star \d \otimes \id_W} \Omega^{d}(M) \otimes W [-1] \right) , 
  }
and $\omega = {\id}_{\Omega^0} \otimes h + {\id}_{\Omega^d} \otimes h$.
Notice this is a BV theory, the $(-1)$ presymplectic structure is non-degenerate. 
\end{eg}


\begin{eg}\label{eg:freepform}
Let $(W, h)$ be as in the previous example, $p \geq 0$ an integer, and suppose $M$ is a Riemannian manifold of dimension $d \geq p$.
The theory $\Scalar(p,W)$ of \defterm{free $p$-form fields valued in~$W$} is defined \cite{ElliottAbelian} by the data
  \deq{
    (\cE,\QBV) = \left(\Omega^{\leq p} \otimes W [p] \xrightarrow{\d \star \d \otimes \id_W} \Omega^{\geq d-p} \otimes W [p-1] \right),
  }
with $(-1)$-symplectic structure $\omega = {\id}_{\Omega^{\leq p}} \otimes h + {\id}_{\Omega^{\geq d-p}} \otimes h$.
Notice again this is an honest BV theory, the presymplectic structure is non-degenerate. 
If $\alpha \in \cE$ denotes a field, the classical action functional reads $\frac{1}{2}\int h(\alpha, \d \star \d \alpha)$. 
\end{eg}

This example clearly generalizes the free scalar field theory, and also does not depend in any way on our special choice of dimension. 
We will simply write $\Scalar(p)$ for the case $W=\C$ when the spacetime $M$ is understood.

\begin{eg}
Let $M$ be as in the last example, and suppose in addition it carries a spin structure compatible with the Riemannian metric. 
Let $(R,w)$ be a complex vector space equipped with an antisymmetric non-degenerate bilinear pairing. 
The theory $\Fermi_-(R)$ of \defterm{chiral fermions valued in~$R$} is the data  \deq{
    (\cE,\QBV) = \Gamma(\Pi S_- \otimes R) \xrightarrow{\dslash \otimes \id_R} \Gamma(\Pi S_+ \otimes R)[-1],
    \qquad
  }
with $(-1)$-symplectic structure $\omega = {\rm id}_{S_+} \otimes w + {\rm id}_{S_-} \otimes w$. 
\end{eg}

We depart from the world of Riemannian manifolds to exhibit theories natural to the world of complex geometry that will play an essential role later on in the paper.

\begin{eg}
Suppose $X$ is a complex manifold of complex dimension $3$ which is equipped with a square-root of its canonical bundle $K_X^{\frac12}$.  
Let $(S, w)$ be a $\ZZ/2$-graded vector space equipped with a graded symmetric non-degenerate pairing. 
\defterm{Abelian holomorphic Chern--Simons theory} valued in $S$ is the free BV theory $\hCS (S)$ whose complex of fields is
\[
\Omega^{0,\bu}(X, K_{X}^{\frac12} \otimes S) [1]
\]
with $(-1)$-symplectic structure $\omega = {\rm id}_{\Omega^{0,\bu}} \otimes w$. 
This theory is naturally $\ZZ \times \ZZ/2$-graded and has action functional $\frac{1}{2}\int w(\alpha \wedge \dbar \alpha)$.
Notice that the fields in cohomological degree zero consist of $\alpha \in \Omega^{0,1}(X, K_{X}^{\frac12} \otimes S)$, and the equation of motion is $\dbar \alpha = 0$. 
This theory thus describes deformations of complex structure of the $\ZZ/2$-graded bundle $K_X^{\frac12} \otimes S$. 

We will be most interested in the case $S = \Pi R$ where $R$ is an ordinary (even) symplectic vector space, see Theorem \ref{thm:twist}. 
\end{eg}

\subsection{Presymplectic BV theories and constraints}

Perturbative \psBV{} theories stand in the same relationship to perturbative BV theories as presymplectic manifolds do to symplectic manifolds. 
Presymplectic structures obviously pull back along embeddings, whereas symplectic structures do not. There is thus always a preferred presymplectic structure on submanifolds of any (pre)symplectic manifold. 
In fact, this is the starting point for Dirac's theory of constrained mechanical systems \cite{DiracHamiltonian,Gotay}.

Each of the examples of \psBV{} theories we have given so far can be similarly understood as constrained systems relative to some (symplectic) BV theory. 

\begin{eg}[The chiral boson and the free scalar] \label{eg:chiralconstraint}
The chiral boson $\thy(0,W)$ on a Riemann surface $\Sigma$, from Example \ref{eg:chiralboson}, can be understood as a constrained system relative to the free scalar $\Phi(0,W)$, see Example \ref{eg:scalar}. 
At the level of the equations of motion this is obvious: the constrained system picks out the harmonic functions that are holomorphic. 

In the BV formalism, this constraint is realized by the following diagram of sheaves on $\Sigma$:
  \begin{equation}
    \begin{tikzcd}
      \Omega^{0,0} \arrow{r}{\del\delbar} & \Omega^{1,1} \\
      \Omega^{0,0} \arrow{u}{\id} \arrow{r}{\delbar} & \Omega^{0,1} \arrow{u}{\del}
    \end{tikzcd}
  \end{equation}
  It is evident that the diagram commutes, and that the vertical arrows define a cochain map upon tensoring with $W$:
  \deq{
    \thy(0, W) \to \Scalar(0, W).
  }
Furthermore, a moment's thought reveals that the $(-1)$-shifted presymplectic pairing on~$\thy(0,W)$ arises by pulling back the $(-1)$-shifted symplectic pairing on~$\Scalar(0,W)$.
\end{eg}

\begin{eg}[The self-dual $2k$-form and the free $2k$-form]
It is easy to form generalizations of the previous example. Consider the following diagram of sheaves on a Riemannian $(4k+2)$-manifold:
  \begin{equation}
    \begin{tikzcd}
      \Omega^0 \ar[r] & \cdots \ar[r] & \Omega^{2k} \arrow{r}{\d * \d} & \Omega^{2k+2} \ar[r] & \cdots \ar[r] & \Omega^{4k+2} \\
      \Omega^0 \arrow{u}{\id} \ar[r] & \cdots \ar[r] & \Omega^{2k} \arrow{u}{\id} \arrow{r}{\d_+} & \Omega^{2k+1}_+ \arrow{u}{\d}
    \end{tikzcd}
  \end{equation}
  Just as above, the vertical arrows of this commuting diagram define a cochain map
  \deq{
    \thy_+(2k, W) \to \Scalar(2k, W),
  }
under which the natural $(-1)$-shifted presymplectic structure of Example~\ref{eg:sd} arises by pulling back the $(-1)$-shifted symplectic form on $\Scalar(2k,W)$. 
\end{eg}

If $X$ is a complex manifold of complex dimension $2k+1$, the \psBV{} theory of the chiral $2k$-form $\chi(2k)$ is defined, see Example \ref{eg:chiral2kform}. 
As a higher dimensional generalization of Example \ref{eg:chiralconstraint}, $\chi(2k)$ can also be understood as a constrained system relative to theory of the free $2k$-form $\Phi(2k,W)$, see Example \ref{eg:freepform}. 
It is an instructive exercise to construct the similar diagram that witnesses the presymplectic structure on the chiral $2k$-form~$\thy(2k,W)$ by pullback from the ordinary (nondegenerate) BV structure on~$\Scalar(2k,W)$. 

\subsection{The observables of a \psBV{} theory} \label{sec:obs}
\label{sec:fact}

The classical BV formalism, as formulated in \cite{CG2}, constructs a factorization algebra from a classical BV theory, which plays the role of functions on a symplectic manifold in the ordinary finite dimensional situation. 

In symplectic geometry, functions carry a Poisson bracket.
In the classical BV formalism there is a shifted version of Poisson algebras that play a similar role. 
By definition, a $\PP_0$-algebra is a commutative dg algebra together with a graded skew-symmetric bracket of cohomological degree $+1$ which acts as a graded derivation with respect to the commutative product.
Classically, the BV formalism outputs a $\PP_0$-factorization algebra of classical observables \cite[\S 5.2]{CG2}. 

In this section, we will see that there is a $\PP_0$-factorization algebra associated to a {\em presymplectic} BV theory, which agrees with the construction of \cite{CG2} in the case that the \psBV{} theory is nondegenerate. 
Unlike the usual situation, this algebra is not simply the functions on the space of fields, but consists of certain class of functions.
We begin by recalling the situation in presymplectic mechanics. 

To any presymplectic manifold $(M,\omega)$ one can associate a Poisson algebra.
This construction generalizes the usual Poisson algebra of functions in the symplectic case, and goes as follows.
Let $\Vect(M)$ be the Lie algebra of vector fields on $M$, and define the space of \emph{Hamiltonian pairs} 
\deq{
\Ham(M, \omega) \subset \Vect(M) \oplus \cO(M)
}
to be the linear subspace of pairs $(X, f)$ satisfying $i_X \omega = \d f$.
Correspondingly, we can define the space of \emph{Hamiltonian functions} or \emph{Hamiltonian vector fields} to be the image of~$\Ham(M,\omega)$ under the obvious (forgetful) maps to~$\O(M)$ or~$\Vect(M)$ respectively. We will denote these spaces by~$\O^\omega(M)$ and~$\Vect^\omega(M)$. 
Notice that $\cO^\omega(M)$ is the quotient of $\Ham(M, \omega)$ by the Lie ideal $\ker(\omega) \subset \Ham(M,\omega)$. 

There is a bracket on $\Ham(M, \omega)$, defined by
\[
[(X, f), (Y, g)] = ([X,Y], i_X i_Y (\omega)) .
\]
On the right-hand side the bracket $[-,-]$ is the usual Lie bracket of vector fields.
Furthermore, there is a commutative product on $\Ham(M, \omega)$ defined by
\[
(X, f) \cdot (Y, g) = (g X + f Y , fg) .
\]
Together, they endow $\Ham(M, \omega)$ with the structure of a Poisson algebra. 
This Poisson bracket on Hamiltonian pairs induces a Poisson algebra structure on the algebra of Hamiltonian functions $\cO^\omega(M)$. 
%
%

In some situations, one can realize the Poisson algebra of Hamiltonian functions $\cO^\omega(M)$ as functions on a particular symplectic manifold. 
Associated to the presymplectic form $\omega$ is the subbundle
\deq{
  \ker(\omega) \subseteq TM
}
of the tangent bundle. 
The closure condition on~$\omega$ ensures that $\ker(\omega)$ is always involutive.
If one further assumes that the leaf space $M / \ker(\omega)$ is a smooth manifold, then $\omega$ automatically descends to a symplectic structure along the quotient map $q : M \to M / \ker(\omega)$. 
Pulling back along this map determines an isomorphism of Poisson algebras
\[
q^* : \cO(M / \ker(\omega)) \xto{\cong} \cO^\omega(M) .
\]
In particular, one can view the Poisson algebra of Hamiltonian functions as the $\ker(\omega)$-invariants of the algebra of functions $\cO^\omega (M) = \cO(M)^{\ker(\omega)}$.
Notice that this formula makes sense without any conditions on the niceness of the quotient $M / \ker(\omega)$. 

In our setting, the presymplectic data is given by a \psBV{} theory.
A natural problem is to define and characterize a version of Hamiltonian functions in this 
setting. 

\subsubsection{The factorization algebra of observables} \label{sec:preobs}

%
%
%

As we've already mentioned, given a (nondegenerate) BV theory the work of \cite{CG2} produces a factorization algebra of classical observables. 
If $(\cE, \omega, Q_{\rm BV})$ is the space of fields of a free BV theory on a manifold $M$ then this factorization algebra $\Obs_\cE$ assigns to the open set $U \subset M$ the cochain complex $\Obs_\cE (U) = \left(\cO^{sm}(\cE(U)) , Q_{\rm BV} \right)$. 
Here $\cO^{sm}(\cE(U))$ refers to the ``smooth" functionals on $\cE(U)$, which by definition are\footnote{Notice $\cE^!_c(U) \hookrightarrow \cE(U)^\vee$, so $\cO^{sm}$ is a subspace of the space of all functionals on $\cE(U)$.} 
\[
  \cO^{sm} (\cE(U)) = \Sym \left(\cE^!_c(U) \right).
\]
Furthermore, since $\omega$ is an isomorphism, it induces a bilinear pairing 
\[
\omega^{-1} : \cE^!_c \times \cE^!_c \to \CC[1] .
\]
By the graded Leibniz rule, this then determines a bracket 
\[
\{-,-\} : \cO^{sm}(\cE(U)) \times \cO^{sm}(\cE(U)) \to \cO^{sm}(\cE(U)) [1]
\]
endowing $\Obs_{\cE}$ with the structure of a $\PP_0$-factorization algebra, see \cite[Lemma 5.3.0.1]{CG2}. 

In this section, we turn our attention to defining the observables of a \psBV{} theory, modeled on the notion of the algebra of Hamiltonian functions in the finite dimensional presymplectic setting. 
Suppose that $(\cE, \omega, Q_{\rm BV})$ is a free \psBV{} theory. 
The shifted presymplectic structure is defined by a differential operator 
\[
\omega : \cE \to \cE^![-1] .
\]
In order to implement the structures we recounted in the ordinary presymplectic setting, the first object we must come to terms with is the solution sheaf of this differential operator $\ker(\omega) \subset \cE$. 

In general $\ker(\omega)$ is not given as the smooth sections of a finite rank vector bundle, so it is outside of our usual context of perturbative field theory. 
However, suppose we could find a semi-free resolution $(\cK_\omega^\bu, D)$ by finite rank bundles
\[
\ker(\omega) \xto{\simeq} \left(\cK_\omega^\bu, D \right) 
\]
which fits in a commuting diagram
\[
\begin{tikzcd}
\ker(\omega) \ar[dr] \ar[rr,"\simeq"] & & \cK_\omega^\bu \ar[dl, "\pi"] \\
& \cE & 
\end{tikzcd}
\]
where the bottom left arrow is the natural inclusion, and $\pi$ is a linear differential operator. 
In the more general case, where $\omega$ is nonlinear, we would require that $\cK_\omega^\bu$ have the structure of a dg Lie algebra resolving $\ker(\omega) \subset \Vect(\cE)$. 

Given this data, the natural ansatz for the classical observables is the (derived) invariants of $\cO(\cE)$ by $\cK_\omega^\bu$. 
A model for this is the Lie algebra cohomology:
\[
{\rm C}^\bu (\cK_\omega^\bu , \cO(\cE)) = {\rm C}^\bu (\cK_\omega^\bu \oplus \cE[-1]) .
\]
In this free case that we are in, this cochain complex is isomorphic to functions on the dg vector space $\cK_\omega^\bu [1] \oplus \cE$ where the differential is $D + Q_{\rm BV} + \pi$.

As in the case of the ordinary BV formalism, in the free case we can use the smoothed version of functions on fields. 

\begin{dfn}
Let $(\cE, \omega, Q_{\rm BV})$ be a free \psBV{} theory on $M$, and suppose $(\cK^\bu, D)$ is a semi-free resolution of $\ker(\omega) \subset \cE$ as above.
The cochain complex of \defterm{classical observables supported on the open set} $U \subset M$ is 
\begin{align*}
\Obs_{\cE}^\omega (U) & = \cO^{sm} \left(\cK_\omega^\bu (U) \oplus \cE(U) [-1] , D + Q_{\rm BV} + \pi \right) \\
& = \bigg( \Sym \left((\cK^\bu_\omega)^!_c (U) \oplus \cE^!_c (U) [1]\right) , D + Q_{\rm BV} + \pi \bigg) .
\end{align*}
By \cite[Theorem 6.0.1]{CG1} the assignment $U \mapsto \Obs_{\cE}^\omega (U)$ defines a factorization algebra on $M$, which we will denote by $\Obs^\omega_\cE$. 
\end{dfn}

\begin{eg}\label{eg:chiralobs}
Consider the chiral boson \psBV{} theory $\chi(0)$, see Example \ref{eg:chiralboson}, on a Riemann surface $\Sigma$. 
The kernel of $\omega = \partial$ is the sheaf of constant functions 
\[
\ker(\omega) = \ul{\CC}_\Sigma \subset \Omega^{0,\bu}(\Sigma) .
\]
By Poincar\'{e}'s Lemma, the de Rham complex $\left(\Omega^\bu_\Sigma, \d_{\rm dR} = \partial + \dbar\right)$ is a semi-free resolution of $\ul{\CC}_\Sigma$. 
Thus, the classical observables are given as the Lie algebra cohomology of the abelian dg Lie algebra
\[
\left(\Omega^\bu_\Sigma \oplus \Omega^{0,\bu}_\Sigma [-1] , \d_{\rm dR} + \dbar + \pi \right)
\]
where $\pi : \Omega^\bu_\Sigma \to \Omega^{0,\bu}_\Sigma$ is the projection. 
This dg Lie algebra is quasi-isomorphic to the abelian dg Lie algebra $\Omega^{1,\bu}_\Sigma [-1]$, so the factorization algebra of classical observables is
\[
\Obs_{\chi(0)}^\omega \simeq \cO^{sm} (\Omega^{1,\bu}_\Sigma) = \Sym \left(\Omega^{0,\bu}_{\Sigma, c} [1] \right) .
\]
\end{eg}

There are two special cases to point out. 

\begin{itemize}
\item[(1)] 
Suppose the shifted presymplectic form $\omega$ is an order zero differential operator. 
Then, $\ker(\omega)$ is a subbundle of $\cE$, so there is no need to seek a resolution. 
Furthermore, in this case $\cE / \ker(\omega)$ is also given as the sheaf of sections of a graded vector bundle $E / \ker(\omega)$, and $\omega$ descends to a bundle isomorphism $\omega : E / \ker(\omega) \xto{\cong} \left(E / \ker (\omega) \right)^![-1]$. 

In other words, $(\cE / \ker(\omega), \omega , Q_{\rm BV})$ defines a (nondegenerate) free BV theory.
The factorization algebra of the classical observables of the pre BV theory $\Obs^\omega_{\cE}$ agrees with the factorization algebra of the BV theory $\cE / \ker(\omega)$
\[
\Obs_{\cE / \ker(\omega)} = \left( \cO^{sm}(\cE / \ker(\omega) , Q_{\rm BV} \right) .
\]
In this case, the observables inherit a $\PP_0$-structure by \cite[Lemma 5.3.0.1]{CG2}. 

\item[(2)] 
This next case may seem obtuse, but fits in with many of the examples we consider. 
Suppose that the two-term complex 
\[
  \begin{tikzcd}[row sep = 0ex]
    &[0ex] \ul{0} & \ul{1} \\
    \Cone(\omega)[-1] :
    &[0ex] \cE \ar[r, "\omega"] & \cE^![-1],
\end{tikzcd}
\]
defined by the presymplectic form $\omega$,
is itself a semi-free resolution of $\ker(\omega)$.
(Though it is not quite precise, one can imagine this condition as requiring that $\omega$ have trivial cokernel.) In this case, it is immediate to verify that the factorization algebra of observables is
\[
\Obs_\cE^\omega = \left(\cO^{sm}(\cE^![-1]), Q_{\rm BV} \right) .
\]
We mention that in this case $\Obs_{\cE}^\omega$ is also endowed with a $\PP_0$-structure defined directly by $\omega$.
\end{itemize}

We can summarize the discussion in the two points above as follows. 

\begin{prop}
If the \psBV{} theory $(\cE, \omega, Q_{\rm BV})$ satisfies $(1)$ or $(2)$ above then the classical observables $\Obs^\omega_{\cE}$ form a $\PP_0$-factorization algebra. 
\end{prop}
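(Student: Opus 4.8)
The plan is to treat cases (1) and (2) separately and, in each, exhibit an explicit $\PP_0$-bracket on $\Obs^\omega_\cE$ compatible with the commutative product and the differential. The essential observation is that in both cases the chain complex underlying $\Obs^\omega_\cE$ simplifies: in case (1) the resolution of $\ker(\omega)$ is unnecessary and one reduces directly to the Costello--Gwilliam construction for the nondegenerate BV theory $\cE/\ker(\omega)$; in case (2) the resolution is $\Cone(\omega)[-1]$, and the derived invariants collapse so that $\Obs^\omega_\cE \simeq (\cO^{sm}(\cE^![-1]), Q_{\rm BV})$. So the proof is really a matter of producing the bracket in each reduced model and checking the $\PP_0$ axioms.

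For case (1): since $\omega$ is order zero, $\ker(\omega) \subset \cE$ is a subbundle and $\cE/\ker(\omega)$ is the sections of the quotient bundle $E/\ker(\omega)$, on which $\omega$ restricts to a fiberwise isomorphism $\omega: E/\ker(\omega) \xto{\cong} (E/\ker(\omega))^![-1]$. First I would check that $Q_{\rm BV}$ descends to the quotient, which follows from compatibility $[Q_{\rm BV},\omega]=0$ together with $\ker(\omega)$ being $Q_{\rm BV}$-invariant (indeed $\omega Q_{\rm BV} \phi = \pm Q_{\rm BV}\omega\phi = 0$ when $\omega\phi = 0$). This makes $(\cE/\ker(\omega), Q_{\rm BV}, \omega)$ a free BV theory in the sense of \cite[Definition 7.2.1.1]{CG2}. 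I would then invoke the identification of $\Obs^\omega_\cE$ with $\Obs_{\cE/\ker(\omega)}$ noted in the text, and appeal directly to \cite[Lemma 5.3.0.1]{CG2}, which endows the latter with its $\PP_0$-structure via the Poisson kernel $\omega^{-1}$ and the graded Leibniz rule.

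For case (2): here $\Cone(\omega)[-1]$ is itself the semi-free resolution, so $\cK_\omega^\bu \oplus \cE[-1]$ has underlying graded space $(\cE \oplus \cE^![-2]) \oplus \cE[-1]$ with differential $D + Q_{\rm BV} + \pi$; the map $\pi$ identifies the degree-zero copy of $\cE$ in $\cK_\omega^\bu$ with the one in $\cE[-1]$, producing an acyclic pair that cancels, leaving $\cE^![-1]$ with differential $Q_{\rm BV}$ (more precisely, its transpose/induced operator on the dual bundle, which I will continue to write $Q_{\rm BV}$). I would verify this cancellation is a genuine quasi-isomorphism of the smoothed functionals by a standard spectral-sequence or deformation-retract argument, compatible with the factorization structure. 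The bracket is then defined directly: the skew pairing $\omega$ itself, viewed now as a map giving a degree $+1$ Poisson kernel on $\cE^![-1]$ after one more application of $\omega$ (schematically $\cE^![-1] \times \cE^![-1] \to$ densities via pre-composition with $\omega$), and extended by the graded Leibniz rule on $\cO^{sm}(\cE^![-1])$. I would check graded skew-symmetry (from skewness of $\omega$), the degree count ($+1$), the derivation property (Leibniz), and compatibility with $Q_{\rm BV}$ (from $[Q_{\rm BV},\omega]=0$). Locality of $\omega$ as a differential operator ensures the bracket respects supports, so it is a map of factorization algebras.

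The main obstacle I expect is case (2): unlike case (1), one cannot simply cite the Costello--Gwilliam $\PP_0$-construction off the shelf, because $\omega$ is not an isomorphism and there is no $\omega^{-1}$. One must instead carefully check that the bracket built from $\omega$ (rather than its inverse) lands in the right place — that is, that the degenerate directions of $\omega$ have been correctly quotiented out by passing to $\cE^![-1]$, so that the would-be bracket is well-defined on the reduced complex and satisfies the Jacobi identity up to the $Q_{\rm BV}$-differential. Tracking the signs and the shifts through the cancellation of the acyclic pair, and confirming that the resulting structure on $\cO^{sm}(\cE^![-1])$ is honestly $\PP_0$ and not merely $\PP_0$ up to homotopy, is where the real work lies; I would isolate this as a lemma about the reduced model before assembling the factorization-algebra statement.
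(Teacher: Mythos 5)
Your proposal is correct and follows essentially the same route as the paper: case (1) reduces to the nondegenerate theory $\cE/\ker(\omega)$ and cites \cite[Lemma 5.3.0.1]{CG2}, while case (2) uses the collapse of the cone resolution to identify $\Obs^\omega_\cE$ with $(\cO^{sm}(\cE^![-1]), Q_{\rm BV})$ and defines the degree $+1$ bracket directly from the local pairing $\int_M\omega$ on $\cE_c[1]$. The paper is terser than you are (it records these two observations and states the proposition as their summary), so the extra checks you flag — that $Q_{\rm BV}$ descends in case (1) and that the $\omega$-bracket is strictly $\PP_0$ in case (2) — are exactly the details the paper leaves implicit.
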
 

\begin{rmk} 
Generally speaking, the resolution of the solution sheaf $\ker(\omega)$ is given by the Spencer resolution.
We expect a definition of a $\PP_0$-factorization algebra of observables associated to any (non-linear) \psBV{} theory, though we do not pursue that here. 
\end{rmk}

For any $k$, the self-dual $2k$-form $\chi(2k, W)$ and the chiral $2k$-form satisfy condition (2) and so give rise to a $\PP_0$-factorization algebra of Hamiltonian observables. 
We will study this factorization algebra in depth in \S \ref{sec:bcov}.

\section{The abelian tensor multiplet} \label{sec:susy}

We provide a definition of the (perturbative) abelian $\N=(2,0)$ tensor multiplet in the \psBV{} formalism, together with the $\N=(1,0)$ tensor and hypermultiplets. 
As discussed in the previous section, in the BV formalism one must specify a $(-1)$-shifted symplectic (infinite dimensional) manifold, the fields, together with the data of a homological vector field which is compatible with the shifted symplectic form.
The tensor multiplets in six dimensions are peculiar, because they only carry a \psBV{} (shifted presymplectic) structure, as opposed to a symplectic one. 

Roughly speaking, the fundamental fields of the tensor multiplet consist of a two-form field whose field strength is constrained to be self-dual, a scalar field valued in some $R$-symmetry representation, and fermions transforming in the positive spin representation of $\Spin(6)$. 
The degeneracy of the shifted symplectic structure arises from the presence of the self-duality constraint on the two-form in the multiplet, just as in the examples in~\S\ref{ssec:examples}.

We begin by defining the field content of each multiplet precisely and giving the \psBV{} structure.
A source for the definition of the fields of the tensor multiplet in the BV formalism can be traced to the description in terms of the six-dimensional nilpotence variety given in \cite{ESW}. 
See Remark \ref{rmk:ESW}. 

The next step is to formulate the action of supersymmetry on the $(1,0)$ and $(2,0)$ tensor multiplets at the level of the BV formalism.
Here, one makes use of the well-known linear transformations on physical fields that are given in the physics literature. See, for example, \cite{BSvP} for the full superconformal transformations of the $\N=(2,0)$ multiplet; we will review the linearized super-Poincar\'e transformations below. 

However, these transformations do not define an action of~$\sp{2}$ on the space of fields. In the physics terminology, they close only on-shell (and after accounting for gauge equivalence). In the BV formalism, this is rectified by extending the action to an $L_\infty$ action on the BV fields. (See, just for example, \cite{Baulieu-susy} for an application of this technique.) 
For the hypermultiplet, this was performed explicitly in~\cite{ESW}; the hypermultiplet, however, is a symplectic BV theory in the standard sense. 
For the tensor multiplet, supersymmetry also only exists on-shell; no strict Lie module structure can be given. 
We work out the required $L_\infty$ correction terms, which play a nontrivial role  in our later calculation of the non-minimal twist.

We will first recall the definitions of the relevant supersymmetry algebras; afterwards, we will construct the multiplets as free perturbative \psBV{} theories, and go on to give the $L_\infty$ module structure on the $\N=(2,0)$ tensor multiplet. Of course the $\N=(1,0)$ transformations follow trivially from this by restriction. 

\subsection{Supersymmetry algebras in six dimensions} \label{sec:susy6}

Let $S_\pm \cong \CC^4$ denote the complex $4$-dimensional spin representations of $\Spin(6)$ and let $V \cong \CC^6$ be the vector representation. 
There exist natural $\Spin(6)$-invariant isomorphisms
\[
\wedge^2(S_\pm) \xto{\cong} V
\]
and a non-degenerate $\Spin(6)$-invariant pairing
\[
  (-,-) : S_+ \otimes S_- \to \CC .
\]
The latter identifies $S_+ \cong (S_-)^*$ as $\Spin(6)$-representations. Under the exceptional isomorphism $\Spin(6) \cong SU(4)$, $S_\pm$ are identified with the fundamental and antifundamental representation respectively.

The odd part of the complexified six-dimensional $\cN=(n,0)$ supersymmetry algebra is of the form
\[
  \Sigma_n = S_+ \otimes R_n ,
\]
where $R_n$ is a $(2n)$-dimensional complex symplectic vector space whose symplectic form we denote by $\omega_R$. There is thus a natural action of $\Sp(n)$ on~$R_n$ by the defining representation.
Note that we can identify the dual $\Sigma_n^* = S_- \otimes R_n$ as representations of $\Spin(6) \times \Sp(n)$.

The full $\N=(n,0)$ supertranslation algebra in six dimensions is the super Lie algebra
\[
  \st{n} = V \oplus \Pi \Sigma_n
\]
with bracket
\deq{
  [-, -]  = \wedge \otimes \omega_R: \wedge^2( \Pi  \Sigma_n) \to V.
}
This algebra admits an action of $\Spin(6) \times \Sp(n)$, where the first factor is the group of (Euclidean) Lorentz symmetries
and the second is called the $R$-sym\-me\-try group $G_R = \Sp(n)$. 
Extending the Lie algebra of $\Spin(6) \times \Sp(n)$ by this module produces the full $\N=(n,0)$ super-Poincar\'e algebra, denoted $\sp{n}$.

\begin{rmk}
  We can view $\sp{n}$ as a graded Lie  algebra by assigning degree zero to $\so(6) \oplus \lie{sp}(n)$, degree one to $\Sigma_n$, and degree two to~$V$. In physics, this consistent $\Z$-grading plays the role of the conformal weight. Both this grading and the $R$-symmetry action become inner in the \defterm{superconformal algebra}, which is the simple super Lie algebra
\deq{
  \sc{n} = \osp(8|n).
}
The abelian $\N=(2,0)$ multiplet in fact carries a module structure for~$\osp(8|2)$; computing the holomorphic twist of this action should lead to an appropriate algebra acting by supervector fields on the holomorphic theory we compute below, which should then extend to an action of all holomorphic vector fields on an appropriate superspace, following the pattern of~\cite{SCA}. However, we leave this computation to future work. 
\end{rmk}

For theories of physical interest, one considers $n = 1$ or~$2$. 
In the latter case, an accidental isomorphism identifies
$\Sp(2)$ with~$\Spin(5)$, which further identifies $R_2$ with the unique complex spin representation of $\Spin(5)$.

\subsubsection{Elements of square zero}
\label{ssec:nilps}

With an eye towards twisting, we recall the classification of square-zero elements in~\sp{n} for $n=1$ and~$2$, following~\cite{ChrisPavel,NV}. 
As above, we are interested in odd supercharges 
\deq[eq:oddQs]{
  Q \in \Pi \Sigma_n = \Pi S_+ \otimes R_n ,
}
which satisfy the condition $[Q,Q]=0$. 
Such supercharges define twists of a supersymmetric theory.

We will find it useful to refer to supercharges by their \emph{rank} with respect to the tensor product decomposition~\eqref{eq:oddQs} (meaning the rank of the corresponding linear map $R_n \to (S_+)^*$). It is immediate from the form of the supertranslation algebra that elements of rank one square to zero for any $n$. 

When $n = 1$, it is also easy to see that any square-zero element must be of rank one, so that the space of such elements is isomorphic to the determinantal variety of rank-one matrices in~$M^{4\times 2}(\C)$. This can in turn be thought of as the image of the Segre embedding
\deq{
  \PP^3 \times \PP^1 \hookrightarrow \PP^7.
}

For $n=2$, there are two distinct classes of such supercharges: those of rank one, which we will also refer to as minimal or holomorphic, and a certain class of rank-two elements, also called non-minimal or partially topological.
A closer characterization of the two types of square-zero supercharges is the following:

\begin{description}
  \item[Minimal (or holomorphic)]
A supercharge of this type is automatically square-zero.
Moreover, such a supercharge has three invariant directions, and so the resulting twist is a holomorphic theory defined on complex three-folds. 
Similarly to the $n=1$ case, the space of such elements is isomorphic to the determinantal variety of rank-one matrices in~$M^{4\times4}(\C)$, which is the image of the Segre embedding
\deq{
\PP^3 \times \PP^3 \hookrightarrow \PP^{15} .
}
We remark that in the case $n=2$, the supercharge $Q$ of rank one defines a $\N=(1,0)$ subalgebra $\sp{1} \cong \sp{1}^Q \subset \sp{2}$.
\item[Non-minimal (or partially topological)]
  Suppose $Q \in \Pi \Sigma_2$ is a rank-two supercharge (there is no such supercharge when $n=1$). 
  It can be written in the form
    \deq{
      Q = \xi_1 \otimes r_1 + \xi_2\otimes r_2.
    }
Since $\wedge^2 S_+ \cong V$, such an element must satisfy a single quadratic condition
\deq{
      w(r_1,r_2) = 0
    }
in order to be of square zero. 
Such a supercharge has five invariant directions, and the resulting twist can be defined on the product of a smooth four-manifold with a Riemann surface.
The space of all such supercharges is a subvariety of the determinantal variety of rank-two matrices in~$M^{4\times 4}(\C)$, cut out by this single additional quadratic equation. 
Just as for the determinantal variety itself, its singular locus is precisely the space of rank-one (holomorphic) supercharges.
\end{description}

We will compute the holomorphic twist below in~\S\ref{sec:holtwist} and the rank-two twist in~\S\ref{sec:nonmin}. There, we will also recall some further details about nilpotent elements in~$\st{2}$, showing how the non-minimal twist can be obtained as a deformation of a fixed minimal twist.

\begin{rmk}\label{rmk:ESW}
  In fact, a study of the space of Maurer--Cartan elements in~$\sp{2}$ was also a major motivation for the formulation of the supersymmetry multiplets that we use throughout this paper. In physics, the pure spinor superfield formalism~\cite{Cederwall} has been used as a tool to construct multiplets for some time. The relevant cohomology, corresponding to the field content of~$\cT_{(2,0)}$, was first computed in~\cite{CederwallM5}. 

  In~\cite{NV}, the pure  spinor superfield formalism was reinterpreted as a construction that produces a supermultiplet (in  the form of a cochain complex of vector  bundles)  from the data of an  equivariant  sheaf over the nilpotence variety. It was further observed that, when  the nilpotence variety is Calabi--Yau, Serre duality gives rise to the structure of a shifted symplectic pairing on the resulting multiplet, so that the full data of a BV theory is produced. More  generally, when the canonical bundle is not  trivial, the multiplet resulting from the canonical bundle  admits a pairing with the multiplet  associated to the structure sheaf. 

  As mentioned before, applying this formalism to the structure sheaf of the nilpotence variety for $\sp{2}$---the geometry of which was reviewed above---produces a cochain complex with a homotopy action of~$\sp{2}$ that corresponds precisely to the formulation we use in  this paper and explore in detail in the following section. For this space, however, the canonical bundle  is \emph{not} trivial; the multiplet associated to the  canonical  bundle is, roughly speaking, $\cT^!_{(2,0)}$, which can be  identified with  the space of linear Hamiltonian observables of~$\cT_{(2,0)}$.  It would be  extremely interesting to give a geometric description of the origin  of the presymplectic pairing on~$\cT_{(2,0)}$, but we do not pursue this here; our  use of this  pairing, as described above, is motivated by interpreting self-duality as a constraint and pulling back the pairing from the standard structure on the nondegenerate two-form.  
\end{rmk}

\subsection{Supersymmetry multiplets}

The two theories we are most interested in are the abelian $(1,0)$ and $(2,0)$ tensor multiplets. We define these here at the level of (perturbative, free) \psBV{} theories, and then go on to discuss the $\N=(1,0)$ hypermultiplet, which will also play a role in what follows.

First, we define the $(1,0)$ theory.
Recall that $R_1$ denotes the defining representation of $\Sp(1)$. 
\begin{dfn} 
The six-dimensional \defterm{abelian $\N=(1,0)$ tensor multiplet} is the \psBV{} theory $\mplet{1}$ defined by the direct sum of \psBV{} theories:
  \deq{
    \mplet{1} = \thy_+(2) \oplus \Fermi_-(R_1) \oplus \Scalar(0,\C),
  }
defined on a Riemannian spin manifold $M$. 
This theory has a symmetry by the group $G_R = {\rm Sp}(1)$ which acts on $R_1$ by the defining representation and trivially on the summands $\thy_+(2)$, $\Scalar(0,\CC)$. 
\end{dfn}

This theory admits an action by the supertranslation algebra $\st{1}$, which will be constructed explicitly below in~\S\ref{ssec:module}.

Note that the fields of cohomological degree zero together with their linear equations of motion are:
\begin{itemize}
\item a two-form $\beta \in \Omega^2(M)$, satisfying the linear constraint $\d_+(\beta) = 0 \in \Omega^3_+(M)$;
\item a spinor $\psi \in \Omega^0(M, S_- \otimes R_1)$, satisfying the linear equation of motion $(\dslash \otimes {\id}_{R_1}) \psi = 0 \in \Omega^0(M , S_+ \otimes R_1)$;
\item a scalar $\varphi \in \Omega^0(M)$, satisfying the linear equation of motion $\d \star \d \varphi = 0 \in \Omega^6(M)$. 
\end{itemize}

Next, we define the $(2,0)$ theory.
Recall, $R_2$ denotes the defining representation of $\Sp(2)$.
Let $W$ be the vector representation of $\Spin(5) \cong \Sp(2)$.
\begin{dfn}
The six-dimensional \defterm{abelian $\N=(2,0)$ multiplet} is the \psBV{} theory $\mplet{2}$ defined by the direct sum of \psBV{} theories:
  \deq{
    \mplet{2} = \thy_+(2) \oplus \Fermi_-(R_2) \oplus \Scalar(0,W).
  }
defined on a Riemannian spin manifold.
This theory has a symmetry by the group $G_R = {\rm Sp}(2)$ which acts on $R_2$ by the defining representation and $W$ by the vector representation upon the identification $\Sp(2) \cong \Spin(5)$. 
Note, $G_R = \Sp(2)$ acts trivially on the summand $\thy_+(2)$. 
\end{dfn}

This theory admits an action by the supertranslation algebra $\st{2}$, which will be constructed explicitly below in~\S\ref{ssec:module}.

Note that the fields of cohomological degree zero consist of
\begin{itemize}
\item a two-form $\beta \in \Omega^2(M)$, satisfying the linear constraint $\d_+(\beta) = 0 \in \Omega^3_+(M)$;
\item a spinor $\psi \in \Omega^0(M, S_- \otimes R_2)$, satisfying the linear equation of motion $(\dslash \otimes {\id}_{R_2}) \psi = 0 \in \Omega^0(M , S_+ \otimes R_2)$;
\item a scalar $\varphi \in \Omega^0(M, W)$, satisfying the linear equation of motion $(\d \star \d \otimes {\id}_W) \varphi = 0 \in \Omega^6(M, W)$. 
\end{itemize}

Lastly, we discuss the six-dimensional $\N=(1,0)$ hypermultiplet.
\begin{dfn}
Let $R$ be a finite-dimensional symplectic vector space over~$\C$, as above. The \defterm{$\N=(1,0)$ hypermultiplet valued in~$R$} is the following free (nondegenerate) BV theory in six dimensions:
  \deq{
    \hyper{R} = \Scalar(0,R_1 \otimes R) \oplus \Fermi_-(R) 
  }
The theory admits an action of the flavor symmetry group $\Sp(R)$. (Note that $R_1 \otimes R$ obtains a symmetric pairing from the tensor product of the symplectic pairings on~$R$ and~$R_1$.)
\end{dfn}

Exhibiting each of these theories as an $L_\infty$-module for the relevant supersymmetry algebra is the subject of the next subsection.

\subsection{The module structure}
\label{ssec:module}

The main goal of this section is to define an action of the $(2,0)$ supersymmetry algebra $\sp{2}$ on the tensor multiplet $\mplet{2}$. The action of the $(1,0)$ supersymmetry algebra on the constituent multiplets $\mplet{1}$ and~\hyper{R_1'} will then be obtained trivially by restriction, which we will spell out at the end of this section.

This action is only defined up to homotopy, which means we will give a description of $\mplet{2}$ as an $L_{\infty}$-{\em module} over~\sp{2}. This amounts to giving a Lorentz- and $R$-symmetry invariant $L_\infty$ action of the supertranslation algebra~$\st{2}$.  

Associated to the cochain complex $\cT_{(2,0)}$ is the dg Lie algebra of endomorphisms ${\rm End}(\cT_{(2,0)})$. 
Sitting inside of this dg Lie algebra is a sub dg Lie algebra consisting of linear differential operators ${\rm Diff}(\cT_{(2,0)}, \cT_{(2,0)})$. 
The differential is given by the commutator with the classical BV differential $Q_{\rm BV}$. 
For us, an $L_\infty$-action will mean a homotopy coherent map, or $L_\infty$ map, of dg Lie algebras $\rho : \fp_{(2,0)} \rightsquigarrow {\rm Diff}(\cT_{(2,0)}, \cT_{(2,0)})$. 

Such an $L_\infty$ map is encoded by the data of a sequence of polydifferential operators $\{\rho^{(j)}\}_{j \geq 1}$ of the form
\deq{
  \sum_{j\geq 1} \rho^{(j)} : \bigoplus_j \Sym^j \left( \st{2} [1] \right) \otimes \mplet{2} \to \mplet{2} [1] ,
}
satisfying a list of compatibilities. 
For instance, the failure for $\rho^{(1)} : \st{2} \otimes \mplet{2} \to \mplet{2}$ to define a Lie algebra action is by the homotopy $\rho^{(2)}$:
\beqn\label{eqn:rho2}
\rho^{(1)} (x) \rho^{(1)} (y) - \rho^{(1)} (y) \rho^{(1)}(x) - \rho^{(1)} ([x,y]) = [Q_{\rm BV} , \rho^{(2)}(x, y)] .
\eeqn

In the case at hand, $\rho^{(1)}$ will be given by the known supersymmetry transformations from the physics literature, extended to the remaining complex by the requirement that it preserve the shifted presymplectic structure.
While $\rho^{(1)}$ does not define a representation of~$\st{2}$, we can find $\rho^{(j)}$, $j \geq 2$ so as to define an $L_\infty$ module structure. 
In fact, we will see that $\rho^{(j)} = 0$ for $j \geq 3$, so we will only need to work out the quadratic term $\rho^{(2)}$. 

\begin{thm}\label{thm:Linfinity}
There are linear maps $\{\rho^{(1)}, \rho^{(2)}\}$ that define an $L_\infty$-action of $\st{2}$ on $\mplet{2}$. 
Furthermore, both $\rho^{(1)}$ and $\rho^{(2)}$ strictly preserve the $(-1)$-shifted presymplectic structure.
\end{thm}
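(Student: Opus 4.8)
The plan is to build the $L_\infty$ module structure by a direct computation, exploiting the fact that the tensor multiplet is a direct sum of three elementary \psBV{} theories, $\cT_{(2,0)} = \thy_+(2) \oplus \Fermi_-(R_2) \oplus \Scalar(0,W)$, and that the supertranslation algebra $\st{2}$ has a simple two-step structure with $[-,-] : \wedge^2 \Sigma_2 \to V$ the only bracket. First I would write down $\rho^{(1)}$: on physical fields it is dictated by the known linearized super-Poincar\'e transformations from the physics literature (e.g.\ \cite{BSvP}), and on the antifields (the cohomological degree $+1$ part of each summand) it is uniquely determined by the requirement that $\rho^{(1)}(Q)$ be a cochain map commuting with $\QBV$ \emph{and} that it preserve the presymplectic form $\omega$. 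Concretely, the condition $[\QBV,\rho^{(1)}(Q)] = 0$ together with $\rho^{(1)}(Q)^* \omega + \omega\, \rho^{(1)}(Q) = 0$ (graded-skew compatibility, using that $\omega$ is itself a cochain map by Definition~\ref{dfn:preBV}(3)) pins down the extension of $\rho^{(1)}$ off the physical fields with no remaining freedom. Since $\thy_+(2)$ is the constrained version of $\Scalar(2,W')$ with $\omega$ pulled back along a cochain map, the skew-compatibility on that summand can be checked upstairs on the honest BV theory $\Scalar(2,W')$ and then pulled back.

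Next I would compute the failure of $\rho^{(1)}$ to be a strict Lie action, namely the left-hand side of~\eqref{eqn:rho2} for pairs $x = Q_1, y = Q_2$ of supercharges. Because $[\rho^{(1)}(Q_1),\rho^{(1)}(Q_2)]$ should equal $\rho^{(1)}([Q_1,Q_2])$ (translation by $[Q_1,Q_2] \in V$) only up to a $\QBV$-exact term, one reads off the obstruction cochain and defines $\rho^{(2)}(Q_1,Q_2)$ as a polydifferential operator trivializing it — this is exactly the standard ``supersymmetry closes only on-shell'' phenomenon, and the homotopy $\rho^{(2)}$ is the off-shell completion. I would check that $\rho^{(2)}$ is Lorentz- and $\Sp(2)$-invariant (forced, since everything in sight is) and that the remaining $L_\infty$ relations (the Jacobi-type identity relating $\rho^{(1)}([-,-])$, $\rho^{(2)}$, and $\rho^{(3)}$, and the relations involving brackets with $V$) are satisfied with $\rho^{(j)} = 0$ for $j \geq 3$; this last point should follow from degree/weight counting, since $\st{2}$ is generated in weight $1$ and $2$ and there is simply no room for higher operations once $\rho^{(2)}$ is quadratic in the odd supercharges and the relevant operators have bounded order.

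For the presymplectic-invariance claim about $\rho^{(2)}$: I would show that $\rho^{(2)}(Q_1,Q_2)$, as an endomorphism of $\cT_{(2,0)}$, is again graded-skew with respect to $\omega$, i.e.\ $\rho^{(2)}(Q_1,Q_2)^*\omega + \omega\,\rho^{(2)}(Q_1,Q_2) = 0$. The cleanest route is to observe that the obstruction $\rho^{(1)}(Q_1)\rho^{(1)}(Q_2) - (1\leftrightarrow 2) - \rho^{(1)}([Q_1,Q_2])$ is itself $\omega$-skew (a composition/difference of $\omega$-skew operators is $\omega$-skew, and translation by $V$ is $\omega$-skew since $\omega$ is a differential operator with constant coefficients compatible with $\QBV$), that it is $\QBV$-exact, and that $\omega$ identifies $\QBV$-skew-exact operators with a piece on which the natural primitive can be chosen $\omega$-skew — in effect, one solves the trivialization equation inside the subcomplex of $\omega$-compatible operators rather than all of $\Diff(\cT_{(2,0)},\cT_{(2,0)})$. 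For the $\thy_+(2)$ summand this is most transparent after passing to the nondegenerate model $\Scalar(2,W')$ upstairs, where the pairing is an isomorphism and ``$\omega$-skew'' is just ordinary skew-adjointness of operators on a symplectic complex; one does the construction there and pushes it down the constraint cochain map.

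The main obstacle I expect is the off-shell closure computation itself, i.e.\ producing $\rho^{(2)}$ explicitly and verifying~\eqref{eqn:rho2} componentwise: the self-dual two-form sector $\thy_+(2)$ is genuinely awkward because its fields live in a truncated de Rham complex $\Omega^{\leq 3}_+[2]$ and the supersymmetry variations mix it with the fermions and scalars via $\dslash$ and contraction operators, so tracking which terms are $\QBV = \d$-exact requires care with the $\Omega^3_+$ projection $\d_+ = \tfrac12(1-\sqrt{-1}\star)\d$. The bookkeeping of spinor bilinears (using $\wedge^2 S_+ \cong V$, the pairing $S_+ \otimes S_- \to \C$, and Fierz-type identities) is the routine-but-error-prone part; I would organize it by $R$-symmetry weight and form degree, and lean on the constraint-pullback picture to reduce the two-form computations to the honest BV theory $\Scalar(2,W')$ wherever possible. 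Everything else — invariance, vanishing of $\rho^{(\geq 3)}$, and $\omega$-skewness of $\rho^{(1)}$ — should be essentially formal once the variations are written down.
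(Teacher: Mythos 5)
Your overall strategy coincides with the paper's: take $\rho^{(1)}$ from the physical transformations, extend it to the antifields so as to commute with $\QBV$ and preserve $\omega$, measure the failure $\mu$ of the strict Lie relation, and trivialize it by an explicit quadratic homotopy $\rho^{(2)}$ with $\rho^{(j)}=0$ for $j\ge 3$. The paper carries this out by writing every component of $\rho^{(1)}$ and $\rho^{(2)}$ as an explicit equivariant (poly)differential operator --- contractions $i_{[Q_1,Q_2]}$, the element $Q_1\star Q_2\in V\otimes W$, and Clifford multiplication --- and then verifying $[\QBV,\rho^{(2)}(Q_1,Q_2)]=\mu(Q_1,Q_2)$ degree by degree using Cartan's formula and the uniqueness of equivariant maps in each weight.

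The one step in your plan that would fail is the proposed reduction of the $\thy_+(2)$ computations to the unconstrained theory $\Scalar(2,W')$ ``upstairs.'' The constraint map $\thy_+(2)\to\Scalar(2,W')$ is a map of \psBV{} theories and the presymplectic form does pull back along it, but the supersymmetry action does \emph{not} lift: there is no $\N=(2,0)$ multiplet built on an unconstrained two-form with these fermions and scalars (the on-shell count of $3+5$ bosonic polarizations against $8$ fermionic ones requires self-duality), and the transformations $\rho_\thy$ and the anti-map of $\rho_{\Fermi,2}$ are defined precisely through the projections $\d_-$ and $\pi_+$ and do not factor through the unconstrained complex. So the skewness and closure checks in the two-form sector must be done directly with $\d_+=\tfrac12(1-\sqrt{-1}\star)\d$; this is exactly where the paper's anti-maps acquire their ``surprising'' extra derivative (the anti-map to $\rho_{\Fermi,2}$ is first order because $\omega_{\chi_+}$ pairs $\Omega^2$ with $\Omega^3_+$ through $\d$ rather than through a bundle map). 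Two smaller overstatements: since $\omega$ is degenerate, the conditions you impose determine the anti-maps only modulo operators valued in $\ker(\omega)$, so equivariance and commutation with $\QBV$ are what remove the residual freedom rather than $\omega$-compatibility alone; and your abstract argument that the primitive $\rho^{(2)}$ can be chosen $\omega$-skew --- solving the trivialization equation inside the subcomplex of $\omega$-compatible operators --- presupposes that this inclusion is surjective on the relevant cohomology, which you do not justify. The paper sidesteps both issues by simply exhibiting the maps and observing that each component is manifestly skew.
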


We split the proof of this result into two steps.
First, we will construct the linear component $\rho^{(1)}$ and verify that it preserves the BV differential and shifted presymplectic form. 
Then we will define the quadratic homotopy $\rho^{(2)}$ and show that together with the linear term defines an $L_\infty$-module structure on $\mplet{2}$. 

\subsubsection{The physical transformations}\label{sec:physical}

We define the linear component $\rho^{(1)}$ of the action of supersymmetry on $\mplet{2}$. 
The map $\rho^{(1)}$ consists standard supersymmetry transformations on the physical fields (in cohomological degree zero), together with certain transformations on the antifields which guarantee that $\rho^{(1)}$ preserve the shifted presymplectic structure on $\mplet{2}$. 

The linear term $\rho^{(1)}$ is a sum of four components:
\begin{equation}
  \begin{aligned}
\rho_V  : & ~ V \otimes \mplet{2}  \to  \mplet{2} \\
\rho_\Fermi  : & ~ \Sigma_2 \otimes \Fermi_-(R_2)  \to  \thy_+(2) \oplus \Scalar(0,W) \\
\rho_\Scalar  : & ~ \Sigma_2 \otimes \Scalar(0,W)  \to  \Fermi_-(R_2) \\
\rho_\thy  : & ~ \Sigma_2 \otimes \thy_+(2)  \to  \Fermi_-(R_2) 
\end{aligned}
\end{equation}
We will define each of these component maps in turn. 

The first transformation is simply the action by (complexified) translations on the fields.
An translation invariant vector field $X \in V \subset \Vect(\RR^6)$ acts via the Lie derivative $L_X \alpha$, where $\alpha$ is any BV field.
That is, $\rho_V(X \otimes \alpha) = L_X \alpha$. 

We now turn to describe the supersymmetry transformations. 
We will first describe the action on the physical fields, that is, the fields in cohomological degree zero. 
We will deduce the action on the antifields in the next subsection. 

The transformation of the physical fermion field (the component $(\Pi S_- \otimes R_2 )$ of the BV complex $\Fermi_-(R_2)$ in degree zero) is given by $\rho_\Fermi$, which is defined as follows. 
Consider the isomorphism
\deq{
(\Pi S_+ \otimes R_2) \otimes (\Pi S_- \otimes R_2) \cong \left(\CC \oplus \wedge^2 V\right) \otimes \left(\CC \oplus W \oplus \Sym^2(R_2)\right) }
of $\Spin(6)\times\Sp(2)$ representations. It is clear by inspection that there are equivariant projection maps onto the irreducible representations $\wedge^2 V \otimes \C$ and~$\C \otimes W$. These projections allow us to define $\rho_\Fermi$ as the composition of the following sequence of maps:
\begin{equation}
  \begin{tikzcd}
    & & \Omega^0 \otimes W \arrow{d}{\subset} \\
    \Pi \Sigma_2 \otimes \Gamma(\Pi S_- \otimes R_2) \arrow["="]{r}{} \ar[dotted, bend left = 10, urr, "\rho_{\Psi,0}"] \ar[dotted, bend right = 10, drr, "\rho_{\Psi,2}"']  \ & (S_+ \otimes R_2) \otimes (S_- \otimes R_2)  \arrow[swap, two heads]{rd}{ } \arrow[two heads]{ru}{ } \arrow[dashed]{r}{ } & \thy_+(2) \oplus \Scalar(0,W). \\
    & & \Omega^2 \arrow[swap]{u}{\subset}
  \end{tikzcd}
\end{equation}
Of course, this map is canonically decomposed as the sum of two maps (along the direct sum in the target), which we will later refer to as $\rho_{\Fermi,0}$ and $\rho_{\Fermi,2}$ respectively. 

The transformation of the physical scalar field (the component $C^\infty (\RR^6 ; W)$ of the BV complex $\Phi(0,W)$ in degree zero) is defined as follows.
We observe that there is a map of $\Spin(6) \times \Sp(2)$ representations of the form
\deq{
\label{noname}
  (\Pi S_+ \otimes R_2) \otimes (V \otimes W) \to S_- \otimes R_2,
}
which can be thought of (using the accidental isomorphism $B_2 \cong C_2$) as the tensor product of the six- and five-dimensional Clifford multiplication maps. 
$\rho_\Phi^{(1)}$ can then be defined as the composition of the maps in the diagram
\begin{equation}
  \begin{tikzcd}
\Pi \Sigma_2 \otimes (\Omega^0 \otimes W)
    \arrow{r}{\d} & (\Pi S_+ \otimes R_2) \otimes (\Omega^1 \otimes W) \arrow[two heads]{d}{ } \\
 & \Gamma(\Pi S_- \otimes R_2) \arrow{r}{\subset} & \Fermi_-(R_2).
  \end{tikzcd}
\end{equation}
The vertical map is induced by (\ref{noname}).

On the degree zero component $\Omega^2 (\RR^4)$ of the presymplectic BV complex $\thy_+(2)$, the map $\rho_\thy$ is defined as follows.
Recall that there is a projection map of $\Spin(6)$ representations
\[
\pi : S_+ \otimes \wedge^3_- (V) \to S_-
\]
obtained via the isomorphism $\wedge^3_- (V) \otimes S_+ \cong S_- \oplus [012]$.\footnote{The notation refers to the Dynkin labels of type $D_3$.} This isomorphism is most easily seen using the accidental isomorphism with $\SU(4)$, where it can be derived using the standard rules for Young tableaux and takes the form
\deq{
  {\ydiagram{1,1}} \otimes {\ydiagram{1}} \cong  \text{\ydiagram{1,1,1}} \oplus \text{\ydiagram{2,1}}.
}
The map $\rho_\thy$ is then defined on physical fields by the following sequence of maps: 
\begin{equation}
  \begin{tikzcd}
\Pi \Sigma_2 \otimes \Omega^2 \arrow{r}{\d_-} & (\Pi S_+ \otimes R_2) \otimes \Omega^3_- \arrow[two heads]{d}{ }\\
 & \Gamma(\Pi S_- \otimes R_2) \arrow{r}{\subset}  & \Fermi_-(R_2)  .
\end{tikzcd}
\end{equation}

\subsubsection{Supersymmetry transformations on the anti-fields}
In the standard BV approach, there is a prescribed way to extend the linear action of any Lie algebra on the physical fields to an action on the BV complex in a way that preserves the shifted symplectic structure. 
The idea is that the action of a physical symmetry algebra $\lie{g}$ is usually defined by a map
\deq[eq:BRSTtrans]{
  \rho: \lie{g} \to \Vect(F)
}
that implements the physical symmetry transformations on the physical (BRST) fields, just as in the previous section. Of course there are strong conditions on~$\rho$ coming from, for example, the requirement of locality. In the BV formalism, there is additionally the requirement that the action of $\lie{g}$ on the BV fields must preserve the shifted symplectic structure. 
There is an immediate way to extend the vector fields~\eqref{eq:BRSTtrans} to \emph{symplectic} vector fields on the space $E = T^*[-1]F$ of BV fields: one can take the transformation laws of the antifields to be determined by the condition of preserving the shifted symplectic form. (In fact, such vector fields are always Hamiltonian in the standard case.) The induced transformations of the antifields are sometimes known as the \emph{anti-maps} of the original transformations, and we will denote them with the superscript $\rho^+$.


For the anti-map component of $\rho_\Scalar$, no complexity appears: we can simply define it as the composition
\begin{equation}
  \begin{tikzcd}
 & \Omega^6 \otimes W [-1] \arrow{r}{\subset} & \Scalar(0,W).\\
    \Pi \Sigma_2 \otimes \Gamma(\Pi S_+[-1] \otimes R_2) \arrow{r}{\dslash} & 
    (S_+ \otimes R_2) \otimes \Gamma(S_-[-1] \otimes R_2) \arrow[two heads]{u}{ }
  \end{tikzcd}
\end{equation}
The anti-map component of $\rho_{\Fermi,0}$ is similarly straightforward, and can be expressed with the diagram
\begin{equation}
  \begin{tikzcd}
    \Pi \Sigma_2 \otimes \Scalar(0,W) \arrow[two heads]{r}{ } & (\Pi S_+ \otimes R_2) \otimes (\Omega^6 \otimes W) [-1]
    \arrow{r}{\cong} & \Gamma(\Pi S_+[-1]\otimes R_2) \arrow{r}{\subset} & \Fermi_-(R_2).
  \end{tikzcd}
\end{equation}

The other two maps is determined by the nature of the shifted presymplectic pairing $\omega_{\chi_+}$ on~$\thy_+(2)$. As such, the number of derivatives appearing is, at first glance, somewhat surprising. The anti-map to $\rho_{\Fermi,2}$ takes the form
\begin{equation}
  \begin{tikzcd}
 & \Gamma(\Pi S_+[-1] \otimes R_2) \arrow{r}{\subset} & \Fermi_-(R_2).\\
   \Pi \Sigma_2 \otimes \Omega^3_+[-1] \arrow{r}{\d } & (\Pi S_+ \otimes R_2) \otimes \Omega^4 [-1] \arrow[two heads]{u}{ }
  \end{tikzcd}
\end{equation}
Finally, the anti-map component of $\rho_\thy$ takes the form
\begin{equation}
  \begin{tikzcd}
    \Pi \Sigma_2 \otimes \Gamma(S_+ [-1] \otimes R_2) \arrow["="]{r}{ } & 
    (\Pi S_+ \otimes R_2) \otimes \Gamma(\Pi S_+[-1]\otimes R_2) \arrow[two heads]{r}{ } & 
    \Omega^3_+[-1] \arrow{r}{\subset} & 
    \thy_+(2).
  \end{tikzcd}
\end{equation}

We have thus constructed the linear component of supersymmetry.
It is straightforward to check that $\rho^{(1)}$ commutes with the classical BV differential and preserves the $(-1)$-shifted presymplectic structure. 

\subsubsection{The $L_\infty$ terms}

We turn to the proof of the remaining part of Theorem \ref{thm:Linfinity}. 
We will show that $\rho^{(1)}$ sits as the linear component of an $L_\infty$-action of $\st{2}$ on the $(2,0)$ theory. 
In fact, we will only need to introduce a quadratic action term
\[
\rho^{(2)} : \st{2} \otimes \st{2} \otimes \mplet{2} \to \mplet{2} [-1] 
\]
and will show the following.
This quadratic term splits up into the following three components:
\begin{equation}
  \begin{aligned}
\rho^{(2)}_{\chi} : & ~ \st{2} \otimes \st{2} \otimes \chi_+ (2) \to \chi_+(2) [-1] \\
\rho^{(2)}_\Psi : & ~ \st{2} \otimes \st{2} \otimes \Psi_- (R_2) \to \Psi_- (R_2) [-1] \\
\rho^{(2)}_\Phi : & ~ \st{2} \otimes \st{2} \otimes \Phi(0,W) \to \chi_+ (2) [-1] .
\end{aligned}
\end{equation}

First, $\rho^{(2)}_\chi = \sum \rho^{(2)}_{\chi, j}$ is defined by the sum over form type of the linear maps
\[
\rho_{\chi,j}^{(2)} : \left(\Sigma_2 \otimes \Sigma_2\right) \otimes \Omega^{j} \xto{[\cdot, \cdot] \otimes 1} V \otimes \Omega^{j} \xto{i_{(\cdot)}} \Omega^{j-1}
\]
where $[\cdot, \cdot]$ is the Lie bracket defining the $(2,0)$ algebra and $i_{X}$ denotes contraction with the vector field $X$.

The next map, $\rho_{\Psi}^{(2)}$, acts on a fermion anti-field and produces a fermion field.
To define it, we introduce the following notation. 
Recall that $\wedge^2(S_+) \cong V$ as $\Spin(6)$-representations and $\wedge^2 R_2 \cong \CC \oplus W$ as $\Sp(2)$-representations. 
Thus, there is the following composition of $\Spin(6) \times \Sp(2)$-representations
\[
\star : \Sigma_2 \otimes \Sigma_2 \to (\wedge^2 S_+) \otimes (\wedge^2 R_2) \to V \otimes W .
\]
So, given $Q_1, Q_2 \in \Sigma_2$ the image of $Q_1 \otimes Q_2$ along this map is an element in $V \otimes W$ that we will denote by $Q_1 \star Q_2$. 
Now, we define $\rho_{\Psi}^{(2)}$ as the sum $\rho_{\Psi,0}^{(2)} + \rho_{\Psi, 2}^{(2)}$ where $\rho_{\Psi,0}^{(2)}$ is the composition
\[
\rho_{\Psi,0}^{(2)} : \left(\Sigma_2 \otimes \Sigma_2 \right) \otimes \Gamma(S_+ \otimes R_2) \xto{\star \otimes 1} (V \otimes W) \otimes \Gamma(S_+ \otimes R_2) \to \Gamma(S_- \otimes R_2)
\]
where the second arrow is the map of $\Spin(6) \times \Sp(2)$-representations in (\ref{noname}). 
Next, $\rho_{\Psi,2}^{(2)}$ is defined by the composition
\[
\rho_{\Psi,2}^{(2)} : \left(\Sigma_2 \otimes \Sigma_2 \right) \otimes \Gamma(S_+ \otimes R_2) \xto{[\cdot, \cdot] \otimes 1} V \otimes \Gamma(S_+ \otimes R_2) \to \Gamma(S_- \otimes R_2)
\]
where the last map is Clifford multiplication.

Finally, the map $\rho^{(2)}_{\Phi}$ acts on a scalar field and produces a ghost one-form in $\chi_+(2)$. 
Using the map $\star$ above, $\rho^{(2)}_{\Phi}$ is described by the composition
\[
\left(\Sigma_2 \otimes \Sigma_2 \right) \otimes \left(\Omega^0 \otimes W \right) \xto{\star} \Omega^1 \otimes (W \otimes W) \to \Omega^1 
\]
where the last map utilizes the symmetric form on $W$. 

To finish the proof of Theorem \ref{thm:Linfinity} we must show that $\rho^{(1)}$ and $\rho^{(2)}$ satisfy~\eqref{eqn:rho2} for all $x,y \in \st{2}$.

   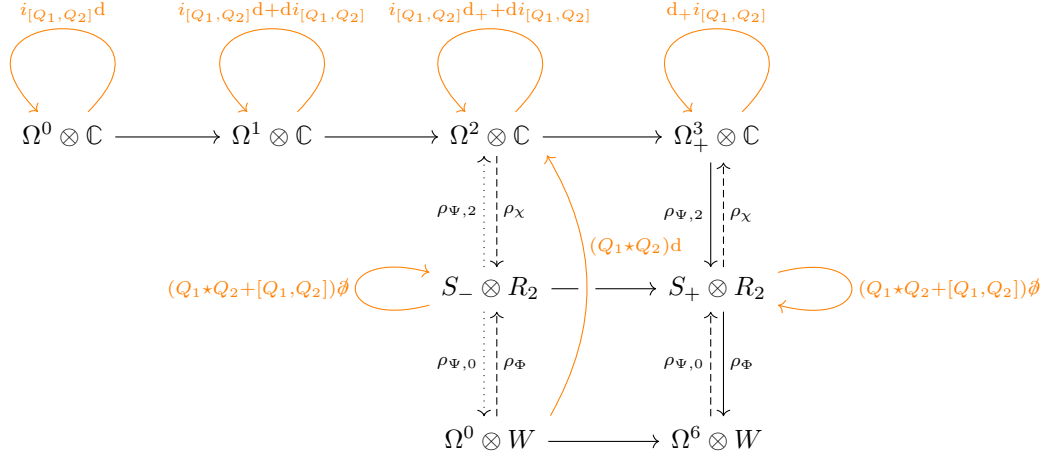
\begin{figure}
    \[
      \begin{tikzcd}[row sep = 4 em, column sep = 4 em]
      \Omega^0 \otimes \C \ar[r] \arrow[loop , orange]{u}[above]{i_{[Q_1,Q_2]} \d} & \Omega^1 \otimes\C \ar[r] \arrow[loop , orange]{u}[above]{i_{[Q_1,Q_2]} \d + \d i_{[Q_1,Q_2]}} & \Omega^2 \otimes\C \ar[r] \arrow[shift left,dashed]{d}{\rho_\thy} 
      \arrow[loop , orange]{u}[above]{i_{[Q_1,Q_2]} \d_+ + \d i_{[Q_1,Q_2]}}
      & \Omega^3_+ \otimes\C  \arrow[shift right]{d}[left]{\rho_{\Fermi,2}} 
      \arrow[loop , orange]{u}[above]{\d_+ i_{[Q_1,Q_2]} }
      \\
      & & \arrow[loop left, orange]{d}[left]{(Q_1\star Q_2 + [Q_1,Q_2]) \dslash} S_- \otimes R_2 \ar[r] \arrow[dotted, shift left]{u}{\rho_{\Fermi,2}} \arrow[dotted, shift right]{d}[left]{\rho_{\Fermi,0}}  & S_+ \otimes R_2 \arrow[loop right, orange]{u}[right]{(Q_1 \star Q_2 + [Q_1,Q_2])\dslash} \arrow[shift right, dashed]{u}[right]{\rho_\thy} \arrow[shift left]{d}[right]{\rho_\Scalar}\\
      & & \Omega^0 \otimes W \arrow[shift right, dashed]{u}[right]{\rho_\Scalar}\ar[r] 
      \arrow[bend right , shift right = 4 ex, orange, crossing over]{uu}[right, pos=0.65]{(Q_1 \star Q_2) \d} 
      & \Omega^6 \otimes W \arrow[shift left, dashed]{u}{\rho_{\Fermi,0}}
    \end{tikzcd}
    \]
    \caption{The failure of $\rho^{(1)}$ to be a Lie map.} \label{fig:mu}
  \end{figure}

It will be convenient to define the following linear map.
  \deq{
    \begin{aligned}
\mu:  \st{2} \otimes \st{2} \otimes \mplet{2} & \to \mplet{2}, \\
    x \otimes  y \otimes f &\mapsto \rho^{(1)} ([x,y], f)  - \rho^{(1)}(x, \rho^{(1)}(y, f))  \pm  \rho^{(1)}(y, \rho^{(1)}(x, f))
  \end{aligned}
    }
This map $\mu$ represents the failure of $\rho^{(1)}$ to define a strict Lie algebra action. 
In terms of $\mu$, \eqref{eqn:rho2} simply reads 
\beqn\label{eqn:master}
[Q_{\rm BV} , \rho^{(2)} (x,y)] = \mu(x,y) .
\eeqn
We have represented $\mu$ via the orange arrows in Figure \ref{fig:mu}.
In this figure, the dashed and dotted arrows denote the action of $Q_1$ and $Q_2$ through the linear term $\rho^{(1)}$.

It is sufficient to consider the case when $x = Q_1, y = Q_2 \in \Sigma_2$.
We observe that the first term in $\mu$ simply produces the Lie derivative of any field in the direction $[Q_1, Q_2]$. 
Since $\mu$ is an even degree-zero map, we can consider each degree and parity separately, beginning with the ghosts: here, it is easy to see that 
    \deq{
      \begin{aligned}
      \mu (Q_1,Q_2) |_{\Omega^0} &= \Lie_{[Q_1,Q_2]}:  \Omega^0[2] \to \Omega^0[2], \\
      \mu (Q_1,Q_2)|_{\Omega^1} &= \Lie_{[Q_1,Q_2]}: \Omega^1[1] \to \Omega^1[1],
    \end{aligned}
  }
since the supersymmetry variations make no contribution. We next work out the action of $\mu$ on the two-form field, which is given by 
  \deq{
    \mu (Q_1,Q_2)|_{\Omega^2} = \Lie_{[Q_1,Q_2]} - \rho_\Fermi(Q_1) \circ \rho_\thy(Q_2) - \rho_\Fermi(Q_2) \circ \rho_\thy(Q_1)
  }
which is a map of the form $\Omega^2 \to \Omega^2 \oplus (\Omega^0\otimes W) \subset \Phi (0,W)$. 
The map must be symmetric in the two factors of~$\Sigma_2$; since $\Omega^2$ is neutral under $\Sp(2)$ $R$-symmetry, the only possible contractions of $(R_2)^{\otimes 2}$ land in the trivial representation or in~$W$, and both are antisymmetric. So the pairing on $(\Pi S_+)^{\otimes 2}$ must also be antisymmetric, showing that 
  \deq{
    \mu (Q_1,Q_2)|_{\Omega^2}  = \Lie_{[Q_1,Q_2]} - i_{[Q_1,Q_2]} \d_- = \d i_{[Q_1,Q_2]} + i_{[Q_1,Q_2]} \d_+.
  }
  In degree one, there is also a unique equivariant map that can contribute: it is not difficult to show that 
  \deq{
    \mu (Q_1,Q_2)|_{\Omega^{3}_+} = \Lie_{[Q_1,Q_2]} - \pi_+ i_{[Q_1,Q_2]} \d.
  }
 Since $[Q_1,Q_2]$ is a constant vector field, the Lie derivative preserves the self-duality condition; from this, it follows via Cartan's formula that the anti-self-dual part of $i_{[Q_1,Q_2]} \d$ is equal to $\d_- i_{[Q_1,Q_2]}$, so that 
  \deq{
    \mu (Q_1,Q_2)|_{\Omega^3_+} = \d_+ i_{[Q_1,Q_2]}.
  }

Similar arguments apply for the component of $\mu$ acting on the scalar field. 
One can check that the restriction of $\mu$ to the scalar field is of the form
  \deq{ \label{muscalar}
\mu (Q_1,Q_2)|_{\Omega^0 \otimes W} : \Omega^0 \otimes W \to \Omega^2 \subset \chi_+(2) .
}
The diagonal term in $\mu$ restricted to $\Omega^0 \otimes W$ is seen to vanish upon applying Cartan's magic formula. 
The same argument shows that $\mu_{1,\Scalar}$ also vanishes.

The component (\ref{muscalar}) comes from a contraction of the supersymmetry generators with the de Rham differential acting on the scalar. 
There is precisely one such map, which takes the form 
 \deq{
 \mu (Q_1,Q_2) |_{\Omega^0 \otimes W} = \d \circ (Q_1 \star Q_2, \cdot)_W }
 where $(\cdot , \cdot)_W$ is the symmetric form on $W$.

Finally, the component of $\mu$ acting on $\Psi_-(R_2)$ maps a fermion to itself and a fermion anti-field to itself. 
For the fermion field, the restriction of $\mu$ is given as a sum of two terms
\[
\mu (Q_1,Q_2)|_{\Gamma(S_+ \otimes R_2)} = \mu_{\Psi,0}(Q_1,Q_2) + \mu_{\Psi,2} (Q_1,Q_2)
\]
where
$\mu_{\Psi,0}$ is given by the composition
\beqn\label{mufermi0}
\mu_{\Psi,0} : \Gamma(S_- \otimes R_2) \xto{Q_1 \star Q_2} \Gamma(S_+ \otimes R_2)  \xto{\dslash}  \Gamma(S_- \otimes R_2) 
\eeqn
and $\mu_{\Psi,2}$ is given by the composition
\beqn\label{mufermi2}
\mu_{\Psi,2} : \Gamma(S_- \otimes R_2) \xto{[Q_1,Q_2]} \Gamma(S_+ \otimes R_2) \xto{\dslash} \Gamma(S_- \otimes R_2) .
\eeqn
The action of $\mu(Q_1, Q_2)$ on the anti-fermion fields is completely analogous. 

We proceed to verify~\eqref{eqn:master}.
For the restriction of $\mu(Q_1,Q_2)$ to $\chi_+(2)$ the equation follows from repeated use of Cartan's formula. 

Next, the restriction of $\mu(Q_1,Q_2)$ to the scalar is given by (\ref{muscalar}). 
The restriction of the left-hand side of (\ref{eqn:master}) to the scalar is 
\[
Q_{\rm BV} \circ \rho^{(2)}_\Phi (Q_1, Q_2) = \d_{\Omega^1 \to \Omega^2} \circ (Q_1 \star Q_2, \cdot)_W
\]
as desired. 

Finally, the restriction of $\mu(Q_1,Q_2)$ to the fermion is given by the sum of (\ref{mufermi0}) and (\ref{mufermi2}). 
The left-hand side of (\ref{eqn:master}) also splits into two pieces.
Note that 
\[
[Q_{\rm BV} , \rho^{(2)}_{\Psi,0} (Q_1, Q_2)] = \dslash \circ \left(Q_1 \star Q_2 \cdot (\cdot) \right)
\]
which is precisely $\mu_{\Psi, 0}(Q_1,Q,2)$ acting on $\Psi_-(R_2)$.

The other term is
\[
[Q_{\rm BV} , \rho^{(2)}_{\Psi,0} (Q_1, Q_2)] = \dslash \circ \left(Q_1 \star Q_2 \cdot (\cdot) \right)
\]
which is precisely $\mu_{\Psi, 2}(Q_1,Q,2)$ acting on $\Psi_-(R_2)$.

We conclude by noting the following result:
\begin{prop} 
  \label{prop:decompose(2,0)}
  With respect to a fixed $\N=(1,0)$ subalgebra of~$\sp{2}$, the abelian tensor multiplet decomposes as 
  \deq{
    \cT_{(2,0)} \cong \cT_{(1,0)} \oplus \cT^\text{hyp}_{(1,0)}(R_1').
  }
\end{prop}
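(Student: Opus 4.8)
The plan is to trace how the choice of an $\N=(1,0)$ subalgebra acts on the $R$-symmetry representations appearing in $\cT_{(2,0)}$, and then to observe that the elliptic complex, the presymplectic pairing, and the $L_\infty$-action each split off the hypermultiplet simply because each was built $\Sp(2)$-equivariantly. First I would recall from~\S\ref{ssec:nilps} that a rank-one supercharge determines an embedding $\sp{1}^Q \subset \sp{2}$ whose content, at the level of $R$-symmetry, is a splitting $R_2 \cong R_1 \oplus R_1'$ of the four-dimensional symplectic space into two symplectic-orthogonal planes, i.e.\ an inclusion $\Sp(1) \times \Sp(1) \hookrightarrow \Sp(2)$. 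Branching the remaining $R$-symmetry representation gives $W \cong \C \oplus (R_1 \otimes R_1')$: indeed $\wedge^2 R_2 \cong \C \oplus W$ as $\Sp(2)$-modules, while $\wedge^2(R_1 \oplus R_1') \cong \wedge^2 R_1 \oplus (R_1 \otimes R_1') \oplus \wedge^2 R_1'$, and cancelling the common trivial summand---the line spanned by the Poisson bivector of $\omega_{R_2} = \omega_{R_1} + \omega_{R_1'}$---identifies the remaining factors. Moreover the symmetric pairing on $W$ restricts on $R_1 \otimes R_1'$ to $\omega_{R_1} \otimes \omega_{R_1'}$, which is precisely the pairing used to define $\hyper{R_1'}$.

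Substituting $R_2 = R_1 \oplus R_1'$ and $W = \C \oplus (R_1 \otimes R_1')$ into $\cT_{(2,0)} = \thy_+(2) \oplus \Fermi_-(R_2) \oplus \Scalar(0,W)$ presents it as a direct sum of free $\psBV{}$ theories
\[
  \thy_+(2) \oplus \Fermi_-(R_1) \oplus \Scalar(0,\C)
  \;\oplus\;
  \Fermi_-(R_1') \oplus \Scalar(0, R_1 \otimes R_1'),
\]
in which one recognizes the first three summands as $\cT_{(1,0)}$ and the last two as $\hyper{R_1'}$. That $Q_{\rm BV}$ preserves this splitting is immediate, since on each of the original summands it acts as a differential operator tensored with the identity on the relevant $R$-symmetry factor; that the presymplectic form preserves it follows because the form is $\Sp(1)\times\Sp(1)$-equivariant while the two groups of summands share no isotypic component, so every cross-term must vanish.

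It remains to check that the restrictions of $\rho^{(1)}$ and $\rho^{(2)}$ to $\st{1} \subset \st{2}$ are block-diagonal for this splitting, which I would argue by equivariance alone. Every component of $\rho^{(1)}$ and $\rho^{(2)}$ constructed in~\S\ref{ssec:module} is $\Sp(2)$-equivariant, hence $\Sp(1)\times\Sp(1)$-equivariant; an off-diagonal term would be an equivariant map relating a summand on which the complementary $\Sp(1)$ acts trivially ($\thy_+(2)$, $\Fermi_-(R_1)$, $\Scalar(0,\C)$) to one on which it acts through $R_1'$ ($\Fermi_-(R_1')$, $\Scalar(0, R_1 \otimes R_1')$), and therefore vanishes. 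Concretely, $\rho_\thy$ and the quadratic fermion terms $\rho^{(2)}_\Psi$ carry the identity on the $R$-factor and are manifestly diagonal, whereas $\rho_\Fermi$, $\rho_\Scalar$ and $\rho^{(2)}_\Phi$ are governed either by the projection $R_2 \otimes R_2 \to \wedge^2 R_2 = \C \oplus W$ or by the action of $W \subset \mathrm{End}(R_2)$, both of which respect $R_2 = R_1 \oplus R_1'$ in exactly the block-diagonal fashion required. The main obstacle is thus bookkeeping rather than anything conceptual: one must keep careful track of which trivial line in $\wedge^2 R_2$ is the $\Sp(2)$-invariant one and which survives into $W$, since it is the complementary line $\langle \omega_{R_1} - \omega_{R_1'} \rangle \subset W$ that plays the role of the $(1,0)$ tensor-multiplet scalar, and a sign slip there would spoil the identification with $\cT_{(1,0)}$. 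Finally, under this identification the restricted action on the hypermultiplet summand must coincide with the $L_\infty$-module structure of~\cite{ESW}; since both are pinned down by $\Sp(1)\times\Sp(1)$-equivariance together with relation~\eqref{eqn:rho2}, this reduces to matching a handful of normalizations, which I expect to be the only step demanding genuine verification.
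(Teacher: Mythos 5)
Your proof is correct and follows essentially the same route as the paper: branch $R_2 \cong R_1 \oplus R_1'$ and $W \cong \C \oplus (R_1 \otimes R_1')$ under $\Sp(1)\times\Sp(1)' \subseteq \Sp(2)$, identify the resulting summands of the field content, and note that the differential, the presymplectic pairing, and the $L_\infty$ module structure all respect the splitting by equivariance. You supply more detail than the paper does (in particular the derivation of the $W$ branching from $\wedge^2 R_2 \cong \C \oplus W$ and the explicit block-diagonality check), but the argument is the same one.
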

\begin{proof}
  At the level of field content, the statements reduce to simple representation-theoretic facts: under the subgroup $\Sp(1) \times \Sp(1') \subseteq \Sp(2)$, the vector and spinor representations decompose as 
  \deq{
    W \cong (R_1 \otimes R_1') \oplus \C, \quad R_2 \cong R_1 \oplus R_1'
  }
  respectively. (Here $\Sp(1)$ denotes the $R$-symmetry of~$\sp{1}$, and $\Sp(1)'$ its commutant inside of the $(2,0)$ $R$-symmetry.)

  The $L_\infty$ module structure for $\sp{2}$ obviously restricts to an $L_\infty$ module structure for $\sp{1}$, and it is trivial to see that the resulting module structure extends the physical $\N=(1,0)$ transformations. (At the level of physical transformations, the proposition is standard.) 
\end{proof}

\section{The minimal twists} \label{sec:holtwist}

In this section we will compute the holomorphic twist of the abelian $\N=(1,0)$ and $(2,0)$ tensor multiplets, using the formulation and supersymmetry action developed in the preceding sections. 
We will begin by placing the theory on a K\"ahler manifold and decomposing the fields with respect to the K\"ahler structure; at the level of representation theory, this corresponds to recalling the branching rules from~$\SO(6)$ to~$\U(3)$ (more precisely, at the level of the double covers $\MU(3) \hookrightarrow \Spin(6)$), followed by a regrading. 

We will then deform the differential by a compatible holomorphic supercharge. (As is usual in twist calculations, choices of holomorphic supercharge are in one-to-one correspondence with choices of complex structure on~$\R^6$.) Since the $L_\infty$ module structure worked out in the previous section preserves the presymplectic structure, we are guaranteed that the twisted theory $\cT^Q_{(1,0)}$ is a well-defined \psBV{} theory after deforming the differential. 

One subtlety appears when we attempt to simplify the resulting theory by discarding acyclic portions of the BV complex. 
There is a natural quasi-isomorphism of chain complexes of the form 
\deq{
  \Phi: \cT^Q_{(1,0)}  \to \chi(2),
}
whose kernel consists of an acyclic subcomplex of~$\cT^Q_{(1,0)}$. However, $\Phi$ does \emph{not} respect the presymplectic structure on $\cT^Q_{(1,0)}$ in a naive fashion!

Given a general quasi-isomorphism 
\deq{
\Phi: \cT \to \cT'
}
of cochain complexes underlying some presymplectic BV theories, the appropriate notion of compatibility is to ask that the two shifted presymplectic forms are \emph{equivalent} in the larger theory; in other words, that 
\deq{
  \Phi^* \omega' - \omega = [\QBV, h].
}
Here $h$ is a degree-$(-2)$ element in the space of symplectic structures, witnessing a homotopy between the two $(-1)$-shifted structures. In other words, we should not require that the difference of the symplectic structures vanish strictly, but only that it be $Q_{\rm BV}$-exact.

\begin{rmk} Indeed, suppose $\ker(\Phi)$ is a nondegenerate BV theory whose differential is acyclic. For instance,  suppose the complex of fields is of the form 
\deq{ T^*[-1]\left( V \oplus V[-1]\right) \cong (V \oplus V[-1]) \oplus (V^\vee[-1] \oplus V^\vee),
  }
  where $V$ is some chain complex of vector bundles, and the acyclic differential is the shift morphism between the two copies of $V$ and its anti-map between the two copies of~$V^\vee$. Now, the symplectic form pairs $V$ with~$V^\vee[-1]$ and $V[-1]$ with~$V^\vee$; there is an obvious nullhomotopy given by the degree-$(-2)$ pairing that pairs $V[-1]$ with~$V^\vee[-1]$, which witnesses the equivalence of this pairing with the zero pairing. 
\end{rmk}

In our case, the kernel of $\Phi$ pairs nontrivially with the rest of~$\cT^Q_{(1,0)}$, so that the homotopy equivalence plays an essential role in determining the appropriate \psBV{} structure on the holomorphic theory. 
With this in mind, we demonstrate an equivalence between $\cT^Q_{(1,0)}$ and~$\chi(2)$, not just as chain complexes, but as \psBV{} theories. Of course, an identical phenomenon occurs in the holomorphic twist of the $(2,0)$ multiplet, which can be thought of as one $(1,0)$ tensor multiplet and one $(1,0)$ hypermultiplet. 
Together, these rigorous twist computations are our main result in this section, which we state precisely as follows: 

\begin{thm}
\label{thm:twist}
Let $Q$ be a rank one supercharge in either of the supersymmetry algebras $\sp{1}$ or $\sp{2}$.
The respective twists of the abelian $(1,0)$ and~$(2,0)$ tensor multiplets on $\CC^3$ are as follows.
\begin{itemize}
\item[$\bu$ {\bf (1,0)}] The holomorphic twist $\mplet{1}^Q$ is equivalent to the $\ZZ$-graded \psBV{} theory of the chiral $2$-form:
\[
\mplet{1}^Q \; \simeq \; \chi(2) .
\]
\item[$\bu$ {\bf (2,0)}] The holomorphic twist $\mplet{2}^Q$ is equivalent to the $\ZZ \times \ZZ/2$-graded \psBV{} theory defined by the chiral $2$-form plus abelian holomorphic Chern--Simons theory with values in the odd symplectic vector space $\Pi R_1'$: 
\deq{
\mplet{2}^Q \; \simeq \; \chi(2) \oplus \hCS (\Pi R_1') .
}
Moreover, this equivalence is $\Sp(1)'$-equivariant. 
\end{itemize}
\end{thm}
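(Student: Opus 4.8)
The plan is to twist the $L_\infty$-module $\mplet{1}$ constructed in Section~\ref{sec:susy} by a rank-one supercharge $Q$, identify the cohomology of the deformed complex with the chiral two-form theory $\chi(2)$, and then carefully transport the shifted presymplectic structure along the resulting quasi-isomorphism; the $(2,0)$ statement will then follow by combining this with Proposition~\ref{prop:decompose(2,0)} and the known holomorphic twist of the hypermultiplet.

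First I would fix a rank-one $Q$ --- equivalently a complex structure on $\RR^6$, identifying it with $\CC^3$ carrying its flat K\"ahler metric --- and decompose each summand of $\mplet{1} = \thy_+(2)\oplus\Fermi_-(R_1)\oplus\Scalar(0,\CC)$ into Dolbeault components, using the branching rule from $\Spin(6)$ to $\U(3)$ (at the level of double covers $\MU(3)\hookrightarrow\Spin(6)$) and regrading by holomorphic form degree; the truncated self-dual complex $\Omega^{\leq 3}_+$ then becomes a bicomplex built from $\del$, $\dbar$ and fibrewise contractions. Because $Q$ has rank one we have $[Q,Q]=0$ and $Q\star Q = 0$, so $\rho^{(2)}(Q,Q)=0$; together with the vanishing $\rho^{(j)}=0$ for $j\geq 3$ this shows the twisted differential is simply $\QBV + \rho^{(1)}(Q)$, and Theorem~\ref{thm:Linfinity} then guarantees $\mplet{1}^Q$ is a genuine \psBV{} theory. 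The first real computation is to evaluate the cohomology of $(\mplet{1}, \QBV + \rho^{(1)}(Q))$ in each Dolbeault bidegree: I expect the physical variations $\rho_\Fermi(Q)$, $\rho_\Scalar(Q)$, $\rho_\thy(Q)$ to glue the fermion summand, one ``half'' of the scalar, and a complementary portion of $\Omega^{\leq 3}_+$ into an acyclic subcomplex whose quotient is exactly the chiral two-form complex $(\Omega^{\leq 1,\bu}_{\CC^3}[2], \del+\dbar) = \chi(2)$, with surviving degree-zero fields $\chi^{1,1}$ and $\chi^{0,2}$ obeying the stated equations of motion. This produces a surjective quasi-isomorphism $\Phi\colon\mplet{1}^Q\to\chi(2)$ with acyclic kernel.

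The delicate point --- and the one I expect to be the main obstacle --- is transporting the presymplectic pairing, because $\ker(\Phi)$ is \emph{not} presymplectically orthogonal to its complement in $\mplet{1}^Q$. Thus $\Phi^*\omega_{\chi(2)}$ will not equal $\omega_{\mplet{1}^Q}$ on the nose, and following the discussion immediately preceding the theorem I must exhibit a degree-$(-2)$ element $h$ of the space of local two-forms on $\mplet{1}^Q$ with $\Phi^*\omega_{\chi(2)} - \omega_{\mplet{1}^Q} = [\QBV + \rho^{(1)}(Q),\, h]$. As a \psBV{} theory $\ker(\Phi)$ is of the cotangent type $T^*[-1](V\oplus V[-1])$ of the Remark above, with acyclic shift differential; the homotopy $h$ should be assembled from the canonical degree-$(-2)$ pairing between the shift-related copies of $V[-1]$ and of $V^\vee[-1]$, together with cross-terms pairing $\ker(\Phi)$ against a chosen splitting $\chi(2)\hookrightarrow\mplet{1}^Q$ of $\Phi$. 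The actual content is the bookkeeping: verifying that this $h$ is local and that $[\QBV+\rho^{(1)}(Q),\, h]$ precisely accounts for the discrepancy, leaving on $\chi(2)$ exactly the pairing $\del\otimes\id$ of Example~\ref{eg:chiral2kform}. The Dolbeault decomposition of $\Omega^{\leq 3}_+$ in the previous step is fiddly but routine by comparison.

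Finally, for the $(2,0)$ multiplet: a rank-one $Q\in\sp{2}$ determines a $(1,0)$ subalgebra $\sp{1}^Q\subseteq\sp{2}$, and with respect to it Proposition~\ref{prop:decompose(2,0)} gives $\mplet{2}\cong\mplet{1}\oplus\hyper{R_1'}$ as $L_\infty$-modules. Twisting commutes with direct sums, so $\mplet{2}^Q\simeq\mplet{1}^Q\oplus\hyper{R_1'}^Q\simeq\chi(2)\oplus\hyper{R_1'}^Q$ by the previous paragraphs. Since $\hyper{R_1'}$ is an ordinary (nondegenerate) BV theory, its twist carries no presymplectic subtlety, and I would simply invoke the computation of its holomorphic twist in~\cite{ESW}, which gives $\hyper{R_1'}^Q\simeq\hCS(\Pi R_1')$, abelian holomorphic Chern--Simons theory valued in the odd symplectic space $\Pi R_1'$. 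For the equivariance claim, we may take $Q$ to have its $R$-symmetry component in the summand $R_1\subseteq R_2 = R_1\oplus R_1'$, which is fixed by $\Sp(1)'$; then the branching rule, the deformation $\rho^{(1)}(Q)$, the quasi-isomorphism $\Phi$, and the homotopy $h$ are all built from $\Sp(1)'$-equivariant data, so the resulting equivalence $\mplet{2}^Q\simeq\chi(2)\oplus\hCS(\Pi R_1')$ respects the $\Sp(1)'$ action.
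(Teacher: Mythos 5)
Your proposal follows essentially the same route as the paper: holomorphic decomposition and regrading under $\MU(3)$, deformation by $\rho^{(1)}(Q)$ (with the correct observation that $\rho^{(2)}(Q,Q)=0$ for rank one), a projection $\Phi$ with acyclic kernel onto $\chi(2)$, an explicit degree-$(-2)$ homotopy $h$ witnessing $\Phi^*\omega_{\chi} - \omega = [\QBV+Q,h]$, and reduction of the $(2,0)$ case to the $(1,0)$ tensor plus hypermultiplet via Proposition~\ref{prop:decompose(2,0)}. The only content you defer --- the explicit Dolbeault bookkeeping and the concrete formula for $h$ --- is exactly what the paper's Proposition~\ref{prop:pairings} supplies, so the approaches coincide.
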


The remainder of the section is devoted to a detailed proof. We start off with some reminders about the general yoga of twisting.

\subsection{Supersymmetric twisting}
\label{sec:twisting}

In this section we briefly recall the procedure of twisting a supersymmetric field theory. 
For a more complete formulation see \cite{CostelloHolomorphic, ESW}, though we modify the construction very slightly (see \cite[Remark 2.19]{ESW}).  
As we've already mentioned, the key piece of data is that of a square-zero supercharge $Q$.
Roughly, the twisted theory is given by deforming the classical BV operator $Q_{\rm BV}$ by $Q$. 

In the cited references, the twisting procedure is performed starting with the data of a supersymmetric theory in the BV formalism. 
This means that one starts with the data of a classical theory in the BV formalism together with an ($L_\infty$) action by the super Lie algebra of supertranslations. 
In our context, we have exhibited an $L_\infty$ action of the supersymmetry algebra on a \psBV{} theory, which acts compatibly with the $(-1)$-shifted presymplectic structure. The twisted theory will therefore also have the structure of a \psBV{} theory, which descends to smaller quasi-isomorphic descriptions after attending to the subtlety alluded to in the introduction. 

%
%
%

Classical supersymmetric theories are (at least) $\ZZ \times \ZZ/2$-graded, where the first grading is the cohomological degree and the second grading is the parity. 
By definition, a square-zero supercharge $Q$ is of bidegree $(0,1)$, whereas the classical BV differential is of bidegree $(1,0)$. 
After making the deformation $Q_{\rm BV} \rightsquigarrow Q_{\rm BV} + Q$, one could choose to remember just the totalized $\ZZ/2$ grading, with respect to which both operators are odd. 

Instead, one typically uses additional data to {\em regrade} the theory so that $Q, Q_{\rm BV}$ have the same homogenous degree.
In addition to the action by supertranslations, a classical supersymmetric theory on~$\R^d$ carries an action by the Lorentz group $\Spin(d)$. It also often carries an action by the $R$-symmetry group $G_R$, which is the set of automorphisms of $\Pi \Sigma_n$ preserving the pairing. 
For us, $d=6$ and $G_R = \Sp(n)$ for $\cN=(n,0)$ supersymmetry. 

In order to perform the twist, we use the $R$-symmetry to define a consistent graded structure, as well as to ensure that the twisted theory is well-defined not just on affine space, but on all manifolds with appropriate holonomy group. To do this, we use two additional pieces of data, which we now describe in turn.

\begin{dfn}
  Given a square-zero supercharge $Q$, a \defterm{regrading homomorphism} is a homomorphism~$\alpha : \U(1) \to G_R$ such that the weight of $Q$ under $\alpha$ is $+1$. 
\end{dfn}

Suppose $\cE = (\cE, Q_{\rm BV})$ is the cochain complex of fields of the classical theory, and for $\varphi \in \cE$, denote by $|\varphi| = (p, q \bmod 2) \in \ZZ \times \ZZ/2$ the bigrading.
Given a regrading homomorphism $\alpha$, we define a new $\ZZ \times \ZZ/2$-graded cochain complex of fields $\cE^\alpha = (\cE^{\alpha}, Q_{\rm BV})$ which agrees with $(\cE, Q_{\rm BV})$ as a totalized $\ZZ/2$-graded cochain complex with new bigrading
\[
|\varphi|_{\alpha} = |\varphi| + (\alpha(\varphi), \alpha(\varphi) \bmod 2) \in \ZZ \times \ZZ/2 
\]
where $\alpha(\varphi)$ denotes the weight of the field $\varphi \in \cE$ under $\alpha$. 
Note that $Q_{\rm BV}$ and $Q$ are both of bidegree $(1,0)$ as operators acting on the regraded fields $\cE^\alpha$. 
Our convention is that $\cE^{\alpha}$ denotes the cochain complex of fields that are regraded, but equipped with the original BV differential $Q_{\rm BV}$. 
The shifted (pre) symplectic structure remains unchanged. 

There is one last step before performing the deformation of the classical differential by the supercharge $Q$ in the regraded theory.
In general, the symmetry group $\Spin(n) \times G_R$ will no longer act on the deformed theory since $Q$ is generally not invariant under this group action. 
\begin{dfn}
Let $Q$ be a square-zero supercharge, and suppose $\iota : G \to \Spin(n)$ is a group homomorphism.
A \defterm{twisting homomorphism} (relative to $\iota$) is a homomorphism $\phi : G \to G_R$ such that $Q$ is preserved under the product $\iota \times \phi : G \to \Spin(n) \times G_R$. 
\end{dfn}

Given such a $\phi$, we can restrict the regraded theory to a representation for the group $G$, which we will denote by $\phi^* \Tilde{\cE}^\alpha$.
We will refer to as the $G$-regraded theory.  

Given a square-zero supercharge $Q$, a regrading homomorphism $\alpha$, and twisting homomorphism $\phi$ we can finally define a twist of a supersymmetric theory $\cE$. 
It is the $\ZZ \times \ZZ/2$-graded theory whose underlying cochain complex of fields is 
\[
\cE^Q = \left(\phi^*\cE^\alpha , Q_{\rm BV} + Q\right)  .
\]

\subsection{Holomorphic decomposition}

Throughout the rest of this section we fix the data of a rank-one supercharge $Q \in \Sigma_1$ (which is automatically square-zero in $\sp{1}$), and characterize the resulting twist of the $(1,0)$ tensor multiplet $\cT_{(1,0)}$.
As discussed in~\S\ref{ssec:nilps}, such a $Q$ defines a theory with three invariant directions, so we will refer to the twist as holomorphic.  
In addition to $Q$, to perform the twist we must prescribe a compatible pair of a twisting homomorphism $\phi$ and regrading homomorphism $\alpha$.

Geometrically, the supercharge $Q$ defines a complex structure $L = \CC^3 \subset V = \CC^6$ equipped with the choice of a holomorphic half-density on $L$. 

Under the subgroup $\MU(3) \subset \Spin(6)$, the spin representations decompose as
\deq{
\label{eqn:decompose1}
S_+ = \det(L)^{\frac12} \oplus L \otimes \det(L)^{-\frac12} \;\; , \;\; S_- = \det(L)^{-\frac12} \oplus L^* \otimes \det(L)^{\frac12} .
}
`
In particular, the odd part $\Sigma_1 = S_+ \otimes R_1$ of the super Lie algebra $\sp{1}$ decomposes under $\MU(3)$ as
\[
\det(L)^{\frac12} \otimes R_1 \oplus L \otimes \det(L)^{-\frac12} \otimes R_1 .
\]
The holomorphic supercharge $Q$ lies in the first factor. 

There exists a unique embedding $\U(1) \subset G_R = \Sp(1)$ under which $Q$ has weight $+1$. 
The twisting homomorphism is defined by the composition 
\[
\phi : \MU(3) \xto{{\rm det}^{\frac12}} \U(1) \hookrightarrow \Sp(1) .
\]
Under this twisting homomorphism,
the defining representation $R_1$ of $\Sp(1)$ splits as
\deq{
\label{eqn:decompose2}
  R_1 = \det(L)^{-\frac12} \oplus \det(L)^{\frac12} .
}
Additionally, we fix the regrading homomorphism to agree with the natural inclusion above: 
\[
\alpha : \U(1) \hookrightarrow \Sp(1)  .
\]
As outlined in \S \ref{sec:twisting}, the data of $\phi$ and $\alpha$ allow us to consider the $G = \MU(3)$-regraded theory.

We observe that the odd part $\Sigma_1$ of $\sp{1}$ decomposes under these twisting data as
\begin{equation}
\begin{tikzcd} [row sep = -2pt]
& -1 & 0 & 1 \\ [10pt] \hline \\ [10pt]
-2 & & & L \otimes \det(L)^{-1} \\
0 & & & \CC \cdot Q \\
1 & L & & \\
3 & \det(L) & & 
\end{tikzcd} 
\label{eqn:10decomp}
\end{equation}
Here, the horizontal grading is by the ghost $\ZZ$-degree determined by $\alpha$ and the vertical grading is by spin $\U(1) \subset \MU(3)$. 
Note that $Q$ lives in a scalar summand of ghost degree $+1$.

The decomposition of the $(1,0)$ tensor multiplet with respect to the twisting data is described in the following proposition.

\begin{prop}
\label{prop:regraded}
The $\MU(3)$-regraded $(1,0)$ tensor multiplet $\phi^* \cT_{(1,0)}^\alpha$ decomposes as
\[
  \phi^* \cT_{(1,0)}^\alpha= \thy_+(2) \oplus \Fermi^\alpha_- (R_1) \oplus \Scalar(0,\CC) .
\]
The result is depicted in Figure~\ref{fig:(1,0)regraded}.
\end{prop}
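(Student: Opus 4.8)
The plan is to reduce the statement to the three summands appearing in the definition of $\cT_{(1,0)}$ and then track the two gradings through the twisting data on each summand separately. Since $\cT_{(1,0)} = \thy_+(2) \oplus \Fermi_-(R_1) \oplus \Scalar(0,\CC)$ as \psBV{} theories, and since both the restriction of structure group along $\iota\times\phi$ and the regrading by $\alpha$ manifestly respect direct sums, it suffices to compute $\phi^*(\,\cdot\,)^\alpha$ on each factor.

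Two of the factors require essentially no work. The $R$-symmetry group $G_R = \Sp(1)$ acts trivially on $\thy_+(2)$ and on $\Scalar(0,\CC)$, so the twisting homomorphism $\phi$ acts on these summands only through $\iota : \MU(3) \hookrightarrow \Spin(6)$ --- i.e.\ through the underlying complex structure on $\R^6$ --- under which the constituent bundles of de Rham forms, and the operators $\d_+$ and $\d\star\d$, are of course all well-defined. Moreover every field in $\thy_+(2)$ and in $\Scalar(0,\CC)$ carries $\alpha$-weight zero, so the regrading leaves the $\ZZ\times\ZZ/2$-bigrading untouched: $\thy_+(2)^\alpha = \thy_+(2)$ and $\Scalar(0,\CC)^\alpha = \Scalar(0,\CC)$ identically. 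This already yields two of the three summands in the claimed decomposition.

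The content is therefore entirely in the fermionic factor. Here I would substitute the $\MU(3)$-branching rules \eqref{eqn:decompose1} for $S_\pm$ and \eqref{eqn:decompose2} for $R_1$ into $\Fermi_-(R_1) = \bigl(\Gamma(\Pi S_- \otimes R_1) \xrightarrow{\dslash} \Gamma(\Pi S_+ \otimes R_1)[-1]\bigr)$, which decomposes it into summands labelled by integral powers of $\det(L)^{\frac12}$, some of them carrying an extra factor of $L$ or $L^*$. On each such summand the $\alpha$-weight equals the relevant power of $\det(L)^{\frac12}$, equivalently the spin $\U(1)\subset\MU(3)$ charge recorded in \eqref{eqn:10decomp}; shifting the cohomological degree of each summand by this weight produces the regraded complex, which is by definition $\Fermi^\alpha_-(R_1)$. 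As a consistency check, $\dslash$ is built from Clifford multiplication by the $R$-neutral representation $V$, hence has $\alpha$-weight zero, so it remains of bidegree $(1,0)$ after the shift --- which is exactly the requirement that $\alpha$ be a genuine regrading homomorphism. Reassembling the three summands gives Figure~\ref{fig:(1,0)regraded}.

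The only genuinely delicate point, and the real substance of the proposition, is the simultaneous bookkeeping of two independent gradings --- the new ghost degree (the original cohomological degree shifted by the $\alpha$-weight) and the residual spin $\U(1)\subset\MU(3)$ weight --- together with the half-density twists $\det(L)^{\pm\frac12}$. On $\C^3$ these square-root bundles are trivializable, but they carry precisely the $\U(1)$-charge that drives the regrading, so an error in the weights here would propagate into the subsequent twist computation of Theorem~\ref{thm:twist}; producing the correct table is thus the heart of the matter, and everything else in the argument is formal.
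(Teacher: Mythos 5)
Your reduction to the three summands, and your treatment of the fermionic factor via the branching rules \eqref{eqn:decompose1} and \eqref{eqn:decompose2} followed by a degree shift by the $\alpha$-weight, is exactly the route the paper takes; the observation that $\dslash$ is $\alpha$-neutral and so stays of bidegree $(1,0)$ is a good sanity check. For the displayed equation alone your argument is essentially complete (modulo spelling out that on a K\"ahler manifold $\dslash$ on $\Omega^{0,0}\oplus\Omega^{2,0}$ decomposes into $\partial$ and $\partial^*$, which is what actually produces the arrows in the $\Fermi^\alpha_-(R_1)$ block of the figure).

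There is, however, a gap relative to the full statement, which includes Figure~\ref{fig:(1,0)regraded}. You dismiss $\thy_+(2)$ and $\Scalar(0,\CC)$ as requiring ``essentially no work'' because they are $R$-neutral and $\alpha$-weight zero. That is true of them \emph{as \psBV{} theories}, but the figure presents $\thy_+(2)$ in its $\U(3)$-decomposed form, and obtaining that form is a genuine (if elementary) step the paper carries out and your proposal skips entirely: one must fix a splitting of each $\Omega^{p,q}$ into primitive and non-primitive forms (e.g.\ $\Omega^{2,1}=\Omega^{2,1}_\perp\oplus\Omega^{2,1}_\omega$), decompose $\partial$ and $\dbar$ accordingly into the operators $\partial_\omega,\partial^\omega,\dbar_\omega,\dbar^\omega$, and verify the isomorphism $\Omega^3_+\cong\Omega^{3,0}\oplus\Omega^{2,1}_\omega\oplus\Omega^{1,2}_\perp$ of $\U(3)$-representations. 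This last identification is not cosmetic: it is precisely what is used in \S\ref{sec:(1,0)twist} to read off which components of the self-dual antifield pair with which fermions under the twisting supercharge, so omitting it here would leave the subsequent proof of Theorem~\ref{thm:twist} without its input. The fix is short --- the paper calls it a straightforward representation-theoretic exercise --- but it belongs in the proof.
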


Notice that the $\MU(3)$-action descends to a $\U(3)$-action, so without confusion we will refer $\phi^* \cT^\alpha_{(1,0)}$ as the $\U(3)$-regraded theory. 

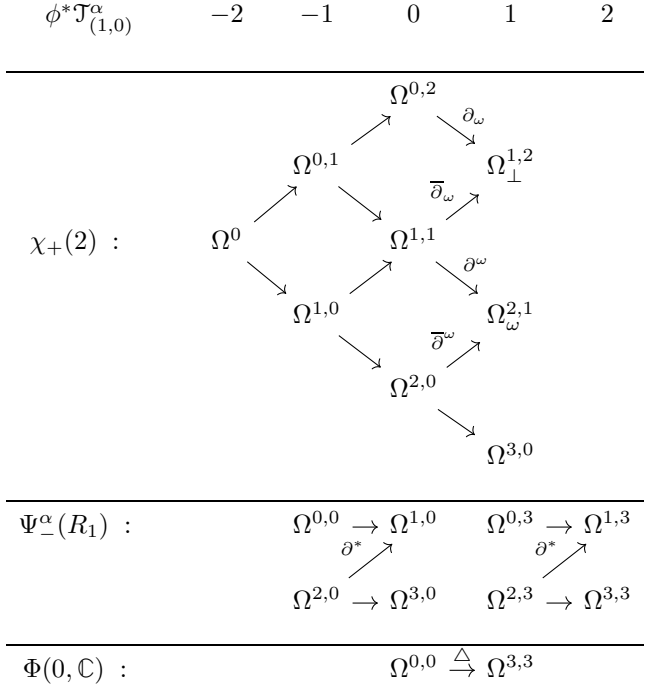
\begin{figure}
  \begin{center}
    \[
      \begin{tikzcd}[row sep = 1 em, column sep = 1 em]
       \phi^* \cT_{(1,0)}^\alpha & -2 & -1 & 0 & 1 & 2 \\ \hline
        & & & \Omega^{0,2} \ar[rd, "\partial_\omega"]  & \\
        & & \Omega^{0,1} \ar[ru] \ar[rd]  & & \Omega^{1,2}_\perp \\
     \thy_+(2) \; : \;\;\;\; &  \Omega^{0} \ar[ru] \ar[rd]  & & \Omega^{1,1} \ar[ru,"\dbar_\omega"] \ar[rd, "\partial^\omega"]  & &  \\
     & & \Omega^{1,0} \ar[ru] \ar[rd]  & & \Omega^{2,1}_\omega  \\ 
        & & & \Omega^{2,0} \ar[ru, "\dbar^\omega"] \ar[rd]  & \\
        & & & & \Omega^{3,0}  \\  \hline
    \Fermi^\alpha_-(R_1) \; : \;\;\;\; &  & \Omega^{0,0} \ar[r] & \Omega^{1,0} & \Omega^{0,3} \ar[r] & \Omega^{1,3} \\
       & & \Omega^{2,0} \ar[r] \ar[ru, "\partial^*"] & \Omega^{3,0} & \Omega^{2,3} \ar[r] \ar[ru, "\partial^*"] & \Omega^{3,3} \\ \hline
       \Scalar(0,\CC) \; : \;\;\;\; & & & \Omega^{0,0} \ar[r, "\triangle"] & \Omega^{3,3} &
      \end{tikzcd}
    \]
  \end{center}
  \caption{The regraded $\N=(1,0)$ tensor multiplet.
The unlabeled arrows denote the obvious $\partial$ or $\dbar$ operators.}
  \label{fig:(1,0)regraded}
\end{figure}

%
\begin{proof}[Proof of Proposition~\ref{prop:regraded}]
The components $\thy_+(2)$ and $\Scalar(0,\CC)$ of $\mplet{1}$ are acted on trivially by the $R$-symmetry group $G_R = \Sp(1)$, so we only need to focus on how $\Fermi_-(R_1)$ is regraded. 
According to Equations (\ref{eqn:decompose1}) and (\ref{eqn:decompose2}), the physical fields decompose under the twisting homomorphism $\phi$ by:
\deq{
\label{eqn:decompose3}
\Pi \left( \Omega^0 \otimes S_- \otimes R_1\right) = \Pi (\Omega^{0,0} \oplus \Omega^{2,0}) \oplus \Pi(\Omega^{0,3} \oplus \Omega^{2,3}) .
}
Similarly, the antifields decompose as
\deq{
\label{eqn:decompose4}
\Pi \left(\Omega^0 \otimes S_+ \otimes R_1\right) [-1] = \Pi (\Omega^{3,3} \oplus \Omega^{1,3}) [-1] \oplus \Pi(\Omega^{3,0} \oplus \Omega^{1,0}) [-1]
}

The next step is to regrade the fields according to the homomorphism $\alpha : \U(1) \hookrightarrow \Sp(1) = G_R$. 
At the level of the decomposed fields in Equation (\ref{eqn:decompose3}), this $\U(1)$ acts by weight $-1$ on the first summand $\Omega^{0,0} \oplus \Omega^{2,0}$, and by weight $+1$ on the second summand $\Omega^{0,3} \oplus \Omega^{2,3}$. 
Thus, we see that the regraded fields of Equation (\ref{eqn:decompose3}) become
\[
(\Omega^{0,0} \oplus \Omega^{2,0})[1] \oplus (\Omega^{0,3} \oplus \Omega^{2,3}) [-1]  .
\]
Similarly, the regraded anti-fields of Equation (\ref{eqn:decompose4}) become
\[
(\Omega^{3,3} \oplus \Omega^{1,3}) [-2] \oplus (\Omega^{3,0} \oplus \Omega^{1,0}) .
\]

It remains to identify the linear BV operator $Q_{\rm BV}$ in the regraded theory.
This follows from the well-known decomposition of the Dirac operator, on a K\"ahler manifold:
\begin{equation}
  \left(
  \begin{tikzcd}
 S_- \otimes R_1  \arrow{r}{\dslash} & S_+ \otimes R_1 
 \end{tikzcd}
 \right)
 \cong 
 \left( 
  \begin{tikzcd}  
\Omega^{0,0} \otimes K^{-\frac12} \otimes R_1 \arrow{r}{\partial} & \Omega^{1,0} \otimes K^{-\frac12} \otimes R_1 \\
 \Omega^{2,0} \otimes K^{-\frac12} \otimes R_1 \arrow{r}{\partial} \arrow{ru}{\partial^*} & \Omega^{3,0} \otimes K^{-\frac12} \otimes R_1 .
  \end{tikzcd}
\right)
\end{equation}

The components $\thy_+(2), \Scalar(0,\CC)$ remain unaffected by both the twisting homomorphism $\phi$ and regrading homomorphism $\alpha$. However, it is necessary in what follows to decompose these cochain complexes as $\U(3)$-representations, using information about the decomposition of the de Rham forms on a K\"ahler manifold. We recall that multiplication by the K\"ahler form determines a cochain map of degree $(1,1)$, defining a space of ``non-primitive'' forms. In what follows, we will fix a splitting into primitive and non-primitive forms in each degree, so that (for example) 
\deq{
  \Omega^{2,1} = \Omega^{2,1}_\perp \oplus \Omega^{2,1}_\omega,
}
with the latter summand being the image of $\Omega^{1,0}$ under the K\"ahler form. (Such a splitting is of course determined on compactly supported forms by the choice of K\"ahler metric.) We correspondingly decompose the $\del$ and $\dbar$ operators with respect to this splitting; we will sometimes use the subscript $\del_\omega$ to indicate a projection onto nonprimitive forms, and the superscript $\del^\omega$ for projection onto primitive forms. Verifying the isomorphism
\deq{
  \Omega^3_+ \cong \Omega^{3,0} \oplus \Omega^{2,1}_\omega \oplus \Omega^{1,2}_\perp
}
is then a straightforward representation-theoretic exercise.
\end{proof}


In Table \ref{table:regrade} we have summarized what happens to the physical fields (cohomological degree zero in the original theory) of the $(1,0)$ tensor multiplet in the regraded theory. 

\begin{table}
\begin{tabular}{c| c | c }
& $\mplet{1}$ & $\phi^* \cT_{(1,0)}^\alpha$ \\
\hline
& & \\ 
$\thy_+(2)$ & $\beta \in \Omega^2$ & $\beta \in \Omega^2 = \Omega^{2,0} \oplus \Omega^{1,1} \oplus \Omega^{0,2}$ \\
& & \\
$\Fermi_-(R_1)$ & $\psi_- \in \Pi (S_- \otimes R_1)$ & $\psi_- \in (\Omega^0 \oplus \Omega^{2,0})[1] \oplus (\Omega^{0,3} \oplus \Omega^{2,3})[-1]$ \\
& & \\
$\Scalar(0, \CC)$ & $ \phi \in \Omega^0 \otimes \C$ & $\phi \in \Omega^0 \otimes \C $ \\
& & \\
\end{tabular}
\smallskip
  \caption{The physical fields in the regraded $(1,0)$ theory.}
  \label{table:regrade}
\end{table}

%
%
\subsection{Proof of (1,0) part of Theorem \ref{thm:twist}}\label{sec:(1,0)twist}
The proof proceeds in two steps. In the first, we use the holomorphic decomposition discussed above, and deform the theory by the holomorphic supercharge to obtain a description of the twist $\cT^Q_{(1,0)}$. In the second, we give an explicit projection map which defines a quasi-isomorphism onto $\thy(2)$, and check that it defines an equivalence of \psBV{} theories. 

\subsubsection{Calculation of~$\cT^Q_{(1,0)}$}
Throughout this section, we refer to Figure~\ref{fig:(1,0)twist}, which uses the decomposition of the fields we found in the previous section and shows the additional differentials generated by the holomorphic supercharge. 
The black text denotes the fields in the component $\thy_+(2)$ of the tensor multiplet.
The red text denotes the fields in the $\Psi^\alpha_-(R_1)$ component, as in Proposition \ref{prop:regraded}.
Finally, the green text denotes the fields in the $\Phi(0,\CC)$ component. 
Each of the solid lines denotes the linear BV differential in the original, untwisted theory, see Figure \ref{fig:(1,0)regraded}.
We will use superscripts to label the components of each field by their form degree. 
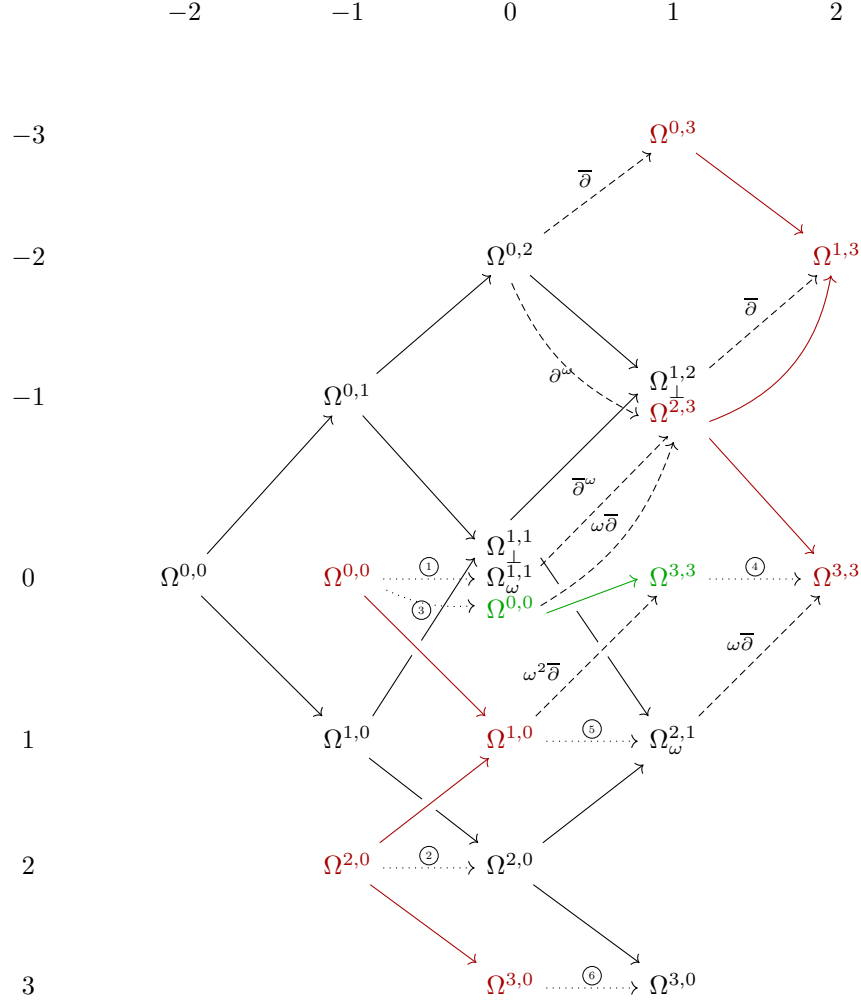
\begin{figure}
  \[
    \begin{tikzcd}[row sep = 3 em, column sep = 3.5 em] 
      & -2 & -1 & 0 & 1 & 2 \\
     -3   & & & & \fermi{\Omega^{0,3}} \arrow[red!65!black]{rd} & \\ 
     -2   & & & \Omega^{0,2} \ar[dr] \ar[dr, shift right = 1 ex, bend right = 20, dashed, "\partial^\omega" below, end anchor = {[yshift = - 2.0 ex, xshift=+0.8ex]}
     ] \ar[dashed, ur, "\dbar"] & & \fermi{\Omega^{1,3}}  \\
     -1   & & \Omega^{0,1} \arrow{ru} \arrow{rd} & &  \begin{matrix} \Omega^{1,2}_\perp \\  \fermi{\Omega^{2,3}} \arrow[red!65!black, ur, bend right = 30, start anchor = {[yshift=-2ex]east}, end anchor = {[xshift=-0.5ex]south}] \ar[dashed, ur, "\dbar"] \arrow[red!65!black]{dr} \end{matrix} 
     & \\
     0   & \Omega^{0,0} \arrow{ru} \arrow{rd}  & 
            \fermi{\Omega^{0,0}} \ar["{\circleds{1}}", dotted, r, end anchor={[xshift=0ex]} ] 
            &\begin{matrix} \Omega^{1,1}_\perp \\ \Omega^{1,1}_\omega \\ \scalar{\Omega^{0,0}} \end{matrix} 
            \arrow[ur, start anchor = {north}, end anchor = {[yshift = +1.2em]}] 
            \ar[ur, dashed, "\dbar^\omega", start anchor = {[xshift = -0.5ex, yshift=-1.5ex]}, end anchor = {[xshift = -0.5ex,yshift=+0.5ex]south}] 
            \arrow[rd, start anchor = {[yshift=+5ex,xshift=-0.5ex]}]
   \arrow[green!65!black, r, crossing over, start anchor = {[yshift=-3ex]} ]
   \arrow[dotted, "\circleds{3}", <-, crossing over, start anchor = {[xshift=0ex,yshift=-1.5ex]}, bend left = 15, l]
   \arrow[dashed,  "\omega \dbar", bend right = 20, ru, crossing over, end anchor = {south}, start anchor = {[xshift = -0.5ex, yshift = -3.5ex]}]
                           & \scalar{\Omega^{3,3}} \ar[dotted, r, "\circleds{4}"] & \fermi{\Omega^{3,3}} \\
                           1   & & \Omega^{1,0} \arrow[ru, end anchor = {[yshift=+5ex,xshift=+0ex]}] \arrow{rd} & 
                           \fermi{\Omega^{1,0}} \ar[ur, crossing over, dashed, "\omega^2 \dbar" near start] 
                                \arrow[red!65!black, crossing over, <-]{ul}
                                \ar[dotted, r, "\circleds{5}"] & 
                           \Omega^{2,1}_\omega \ar[ur, "\omega \dbar", dashed] & \\
                           2   & & \fermi{\Omega^{2,0}} \arrow[red!65!black, crossing over]{ru} \arrow[red!65!black]{rd} 
                           \ar[dotted, r, "\circleds{2}" ] & \Omega^{2,0} \arrow{ru} \arrow{rd} & & \\
      3   & & & \fermi{\Omega^{3,0}} \arrow[dotted, r, "\circleds{6}" ] & \Omega^{3,0} & \\
    \end{tikzcd}
  \]
  \caption{The holomorphic twist of the $\N=(1,0)$ tensor multiplet. 
  The horizontal grading is the cohomological grading. 
  The vertical grading is the weight with respect to $\U(1) \subset \U(3)$.} \label{fig:(1,0)twist}
\end{figure}

We have labeled the differential generated by the supercharge $Q$ by the dotted and dashed arrows, which we now proceed to justify. 
The dotted arrows $\begin{tikzcd} \; \ar[r, dotted] & \; \end{tikzcd}$ denote order zero differential operators, and the dashed arrows $\begin{tikzcd} \; \ar[r, dashed] & \; \end{tikzcd}$ are given by the labeled differential operator.
Throughout, we extensively refer to the notation in \S \ref{ssec:module} where we constructed the action of supersymmetry on the tensor multiplet. 

We begin with the component of the supersymmetry action which transforms a fermion into $\chi_+(2)$.
In the notation of \S \ref{sec:physical} this is the linear map $\rho_{\Psi,2}$. 
In the holomorphic decomposition of the fields this map is the following projection
\[
\rho_{\Psi,2} (Q \otimes -) : \Omega^{0} \oplus \Omega^{2,0} \oplus \Omega^{0,3} \oplus \Omega^{2,3} \twoheadrightarrow \omega \Omega^{0} \oplus \Omega^{2,0} \subset \Omega^{1,1} \oplus \Omega^{2,0} \subset \chi_+ (2)
\]
which reads $\rho_{\Psi, 2} (Q \otimes \psi_-) = \omega \psi_-^{0,0} + \psi_-^{2,0}$.
This term accounts for the dotted arrows in Figure \ref{fig:(1,0)twist} labeled $\circled{1}$ and $\circled{2}$. 
On the anti-fields, the map $\rho_{\Psi,2}(Q \otimes -)$ is given by the composition $\pi \circ \d$ where $\pi^+$ is the projection
\[
\pi^+ : (S_+ \otimes R_1) \otimes \Omega^4 \twoheadrightarrow S_+ \otimes R_1 .
\]
When restricted to the holomorphic supercharge $Q \in S_+ \otimes R_1$ this projection defines a linear map $\pi' (Q \otimes -)$ which reads, in holomorphic coordinates:
\[
\pi (Q \otimes -) : \Omega^{1,3} \oplus \Omega^{2,2} \oplus \Omega^{3,1} \twoheadrightarrow \Omega^{1,3} \oplus \Omega^{2,2} \to \Omega^{1,3} \oplus \Omega^{3,3} 
\]
where the last map uses the K\"{a}hler form as a linear map $\omega : \Omega^{2,2} \to \Omega^{3,3}$. 
Thus, acting on the anti-fields, $\rho_{\Psi,2}(Q \otimes -)$ reads
\[
\begin{tikzcd}
& & & \Omega^{1,2} / \omega \Omega^{0,1} \ar[r, "\dbar"] \ar[dr, "\partial"] & \Omega^{1,3} \ar[r, "="] & \Omega^{1,3} & & \\ 
\rho_{\Psi,2}(Q \otimes -) : & \Omega^{3}_+&  = & \omega \Omega^{1,0} \ar[r, "\dbar"] \ar[dr, "\partial"] & \omega \Omega^{2,2} \ar[r, "\omega"] & \Omega^{3,3} & \subset & \Psi^\alpha_-(R_1) [1]  \\ 
& & & \Omega^{3,0} \ar[r, "\dbar"] & \Omega^{3,1} \ar[r] & 0 & & 
\end{tikzcd}
\]
This accounts for the dashed arrows $\begin{tikzcd}  \Omega^{1,2} / \omega \Omega^{0,1}\ar[r, dashed, "\dbar"] & \Omega^{1,3} \end{tikzcd}$, $\begin{tikzcd}  \omega \Omega^{1,0} \ar[r, dashed, "\omega \dbar"] & \Omega^{3,3} \end{tikzcd}$, and $\begin{tikzcd}  \Omega^{1,2} / \omega \Omega^{1,3} \ar[r, dashed, "\omega \partial"] & \Omega^{3,3} \end{tikzcd}$.

We turn to the part of the supersymmetry which transforms fermions into the scalar $\Phi(0,\CC)$. 
In the notation of \S \ref{sec:physical} this is the linear map $\rho_{\Psi,0}$ which is defined using the projection of a tensor product of spin representations onto a trivial summand.
When applied to the holomorphic supercharge $Q \in S_+ \otimes R_1$, the resulting linear map is given by the projection
\[
\rho_{\Psi, 0} (Q \otimes -) : \Omega^0 \oplus \Omega^{2,0} \oplus \Omega^{0,3} \oplus \Omega^{2,3} \twoheadrightarrow \Omega^0 .
\]
In the decomposition of the fields above, this reads $\rho_{\Psi, 0} (Q \otimes \psi_-) = \psi_-^{0,0} \in \Omega^{0,0} \subset \Phi(0, \CC)$. \footnote{In standard physics notation, one would write this as $\delta_Q \phi = \psi_-^{0,0}$.}
This term accounts for the dotted arrow $\begin{tikzcd} \; \ar[r, dotted] & \; \end{tikzcd}$ in Figure \ref{fig:(1,0)twist} labeled $\circled{3}$.
Similarly, on the anti-fields we have $\rho_{\Psi, 0} (Q \otimes \phi^+) = \phi^+ \in \Omega^{3,3} \subset \Psi^\alpha_-(R_1)  [2]$. 
which one could write as $\delta_Q \psi_-^+ = \phi^+$. 
This term accounts for the dotted arrow labeled $\circled{4}$. 

Next, consider the supersymmetry which transforms $\Phi(0,\CC)$ into the fermion. 
In the notation of \S \ref{sec:physical} this is the linear map $\rho_{\Phi}$. 
Applied to the supercharge $Q$, this is the composition 
\[
\rho_{\Phi} (Q \otimes -) : \Omega^0 \xto{\d} \Omega^{1,0} \oplus \Omega^{0,1} \xto{Q} \omega \Omega^{0,1} \subset \Psi^\alpha_-(R_1) [1] 
\]
which reads $\rho_{\Phi} (Q \otimes \phi) = \omega \dbar \phi$ and accounts for the dashed arrow $\begin{tikzcd}  \Omega^0 \ar[r, dashed, "\omega \dbar"] & \omega \Omega^{0,1} \end{tikzcd}$.
On antifields this is the dashed arrow $\begin{tikzcd}  \Omega^{1,0} \ar[r, dashed, "\omega^2 \dbar"] & \Omega^{3,3} \end{tikzcd}$.

Next, consider the supersymmetry which transforms $\chi_+(2)$ into $\Psi^\alpha_-(R_1)$.
In the notation of \S \ref{sec:physical} this is the linear map $\rho_{\chi}$, which when acting on the physical fields is the composition $\pi \circ \d_-$ where $\pi$ is the projection
\[
\pi : (S_+ \otimes R_1) \otimes \Omega^3_- \to S_- \otimes R_1 .
\]
When restricted to the holomorphic supercharge $Q \in S_+ \otimes R_1$ this projection defines a linear map $\pi (Q \otimes -) : \Omega^3_- \to S_- \otimes R_1$ which reads, in holomorphic coordinates:
\[
\pi (Q \otimes -) : \Omega^{0,3} \oplus \omega\Omega^{0,1} \oplus \Omega^{2,1} / \omega \Omega^{0,1} \twoheadrightarrow \Omega^{0,3} \oplus \omega \Omega^{0,1} 
\]
and is given by the obvious projection. 
Thus, acting on the physical fields, the map $\rho_\chi(Q \otimes -)$ is the composition
\[
\begin{tikzcd}
& & & \Omega^{0,2} \ar[r, "\dbar"] \ar[dr, "\partial^\omega"] & \Omega^{0,3} \ar[r, "="] & \Omega^{0,3} & & \\ 
\rho_{\chi}(Q \otimes -) : & \Omega^{2}&  = & \Omega^{1,1} \ar[r, "\dbar^\omega"] \ar[dr, "\partial_\omega"] & \omega \Omega^{0,1} \ar[r, "="] & \Omega^{0,1} & \subset & \Psi^\alpha_-(R_1) [1]  \\ 
& & & \Omega^{2,0} \ar[r, "\dbar_\omega"] & \Omega^{2,1} / \omega \Omega^{1,0} \ar[r] & 0 & & 
\end{tikzcd}
\]
This accounts for the dashed arrows $\begin{tikzcd}  \Omega^{0,2} \ar[r, dashed, "\dbar"] & \Omega^{0,3} \end{tikzcd}$, $\begin{tikzcd}  \Omega^{0,2} \ar[r, dashed, "\partial^\omega"] & \omega \Omega^{0,1} \end{tikzcd}$, and $\begin{tikzcd}  \Omega^{1,1} \ar[r, dashed, "\dbar^\omega"] & \omega \Omega^{0,1} \end{tikzcd}$.
The anti map for this component of supersymmetry acts on $\Omega^{1,0} \oplus \Omega^{3,0} \subset \Psi^\alpha_-(R_1)$ and is defined by the projection:
\[
\rho_\chi (Q \otimes -) : \Omega^{1,0} \oplus \Omega^{3,0} \oplus \Omega^{1,3} \oplus \Omega^{3,3} \twoheadrightarrow \omega \Omega^{1,0} \oplus \Omega^{3,0} \subset \chi_+(2) [1]  .
\]
This accounts for the dotted arrows labeled $\circled{5}$ and $\circled{6}$. 
All arrows have been accounted for, and we have thus verified that the twisted theory $\cT^Q_{(1,0)}$ is described by Figure~\ref{fig:(1,0)twist}.

\subsubsection{Verification of the equivalence between $\cT^Q_{(1,0)}$ and~$\thy(2)$}

We now move to the second step of the proof. To begin, we note that there is a projection $\Phi$ from the total complex $\cT^Q_{(1,0)}$ in Figure \ref{fig:(1,0)twist} to the cochain complex $\chi(2) = \Omega^{\leq 1, \bu}[2]$:
\beqn\label{(1,0)phi}
\Phi : \cT^Q_{(1,0)} \to \chi(2) .
\eeqn
\begin{table}
  \[
    \begin{array}{ccclllll}
    \Phi(\beta^{0, j}) & =  & \beta^{0,j} & \in & \Omega^{0,j}, & \quad  j = 0,1,2;  \\
\Phi(\psi_-^{0,3}) & = &  \psi_-^{0,3} & \in & \Omega^{0,3} ; \\
\Phi(\beta^{1,0}) & = & \beta^{1,0} & \in & \Omega^{1,0} ; \\
\Phi(\beta^{1,1} + \phi^0) & = & \beta_\perp^{1,1} + (\beta^{1,1}_{\omega} - \omega \phi^0) & \in & \Omega^{1,1} ;  \\ 
\Phi([\beta^{1,2}]_\omega + \omega \psi_-^{0,1}) & = &  \beta^{1,2} + \omega \psi_-^{0,1} & \in & \Omega^{1,2} ;  \\
\Phi(\psi_-^{+1,3}) & = & \psi^{+1,3}_- & \in & \Omega^{1,3}   .
\end{array}
\]
\caption{A component description of the projection map $\Phi$}
\label{tab:Phi}
\end{table}
On components, $\Phi$ is defined by the formulas in Table~\ref{tab:Phi}; it sends all other fields to zero.
Here, $\beta_\perp^{1,1}$ and $\beta_\omega^{1,1}$ denote the components of the $(1,1)$-form under the decomposition $\Omega^{1,1} = \Omega^{1,1}_\perp \oplus \omega \Omega^0$. 
Notice that
\[
\Phi(\Tilde{Q}_{\rm BV} \psi_-^0) = \Phi (\partial \psi_-^0 + \omega \psi^{0}_- + \psi_-^0) = 0 + (\omega \psi^0_- - \omega \psi^0_-) = 0
\]
which is the only nontrivial check that $\Phi$ is a cochain map.
Notice that $\Phi$ is a map of underlying graded vector bundles, so its kernel is well-defined. 
Since all the dotted arrows are isomorphisms, the kernel of this map is acyclic, and so $\Phi$ defines a quasi-isomorphism of sheaves of cochain complexes. 

Since the supercharge $Q$ preserves the presymplectic structure $\omega_\cT$ on $\cT_{(1,0)}$,  we know that $\cT^Q_{(1,0)}$ has the induced structure of a presymplectic BV theory.
As discussed above, there is also a natural shifted presymplectic structure on $\chi(2)$, defined by the formula $\omega_\chi = \int_{\CC^3} \alpha \partial \alpha '  .$
To check that $\Phi$ defines an equivalence of \psBV{} theories, we will need to check its compatibility with these pairings. 

We note that the quasi-isomorphism $\Phi$ does \emph{not} preserve the shifted presymplectic structures in any strict sense. 
However, there does exist a two-form on the space of fields $h : \cT^Q_{(1,0), c} \times \cT^Q_{(1,0), c} \to \CC$ of degree $-2$ such that 
\beqn\label{eqn:htpy}
\omega_\cT - \Phi^* \omega_\chi = (Q_{\rm BV} +Q) h,
\eeqn
where $Q_{\rm BV}+Q$ denotes the internal differential on the cochain complex of two-forms in field space, with respect to the \emph{total} differential on $\cT^Q_{(1,0)}$. 
In writing elements of the space of two-forms, we will always suppress the integration symbol over~$\C^3$, which should be understood implicitly. We also suppress the subscripts indicating the chirality of the (untwisted) fermions.

\begin{prop}
  \label{prop:pairings}
  Consider the two-form on the space of fields
  \deq[eq:homotopy]{
    h = \beta^{3,0} \psi^{0,3} + \beta^{2,0} (\psi^+)^{1,3} + \beta^{2,1}_\omega \cdot \omega^{-1} \psi^{2,3} + \phi^{0,0} (\psi^+)^{3,3}.
}
Then $h$ defines a homotopy between $\omega_\cT$ and the pullback $\Phi^* \omega_\chi$. That is, \eqref{eqn:htpy} is satisfied.
\end{prop}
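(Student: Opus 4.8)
The plan is to prove Proposition~\ref{prop:pairings} by a direct computation: I would write out all three degree-$(-1)$ bilinear forms $\omega_\cT$, $\Phi^* \omega_\chi$ and $(Q_{\rm BV}+Q)h$ as explicit pairings on the component fields displayed in Figure~\ref{fig:(1,0)twist}, and verify that $\omega_\cT - \Phi^* \omega_\chi = (Q_{\rm BV}+Q)h$ holds summand by summand. By Theorem~\ref{thm:Linfinity} the twist preserves the presymplectic pairing, so $\omega_\cT$ genuinely defines a \psBV{} structure on $\cT^Q_{(1,0)}$ compatible with the total differential $Q_{\rm BV}+Q$; hence both sides are legitimate degree-$(-1)$ two-forms. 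Since $\cT^Q_{(1,0)}$ is a free theory, the constant-coefficient two-forms form $\wedge^2 (\cT^Q_{(1,0),c})^\vee$ with internal differential induced from $Q_{\rm BV}+Q$, and $(Q_{\rm BV}+Q)h$ is computed from the graded Leibniz rule together with the transposes of the arrows in Figure~\ref{fig:(1,0)twist} (solid arrows for $Q_{\rm BV}$, dotted and dashed for $Q$).

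Concretely I would proceed in three steps. First, using the splitting $\cT_{(1,0)} = \thy_+(2) \oplus \Fermi_-(R_1) \oplus \Scalar(0,\C)$ and the holomorphic branching of Proposition~\ref{prop:regraded}, write $\omega_\cT$ explicitly: the $\thy_+(2)$-pairing (Example~\ref{eg:sd}) pairs the $(p,q)$- and $(3{-}p,3{-}q)$-components of the two-form and its antifield through $\int_{\C^3}(\partial + \dbar)$; the $\Fermi_-(R_1)$-pairing is the nondegenerate pairing of the positive- and negative-chirality pieces listed in~\eqref{eqn:decompose3}--\eqref{eqn:decompose4}; the $\Scalar(0,\C)$-pairing is $\int \phi^{0,0} (\text{top antifield})$. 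Second, using Table~\ref{tab:Phi}, compute $\Phi^* \omega_\chi$, i.e.\ $\int_{\C^3} \Phi(\cdot)\, \partial \Phi(\cdot)$; this produces pairings among $\beta^{0,j}$, $\beta^{1,0}$, $\beta^{1,1}+\phi^0$, $[\beta^{1,2}]_\omega + \omega \psi^{0,1}$, $\psi^{0,3}$ and $(\psi^+)^{1,3}$, in particular cross-terms between $\phi^0$ and the $\beta^{1,\bu}$'s that are absent from $\omega_\cT$. Third, differentiate each of the four monomials of $h$ via Leibniz, reading the action of $Q_{\rm BV}+Q$ off Figure~\ref{fig:(1,0)twist}: for example $\beta^{3,0}\psi^{0,3}$ produces $(\partial\beta^{2,0})\psi^{0,3} \pm \beta^{3,0}(\dbar\psi^{0,3})$ together with the contributions of the order-zero arrows $\circled{2}$, $\circled{6}$, and similarly for the other three monomials. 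Matching the three lists reduces the proposition to a finite collection of identities of bilinear forms, each of which follows from Stokes' theorem on $\C^3$ and the defining relations of the primitive decomposition $\Omega^{1,1} = \Omega^{1,1}_\perp \oplus \omega \Omega^0$ and $\Omega^3_+ \cong \Omega^{3,0} \oplus \Omega^{2,1}_\omega \oplus \Omega^{1,2}_\perp$.

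Conceptually the identity holds because $\Phi$ collapses the acyclic part of $\cT^Q_{(1,0)}$ (the pairs joined by the isomorphisms $\circled{1}$--$\circled{6}$, together with the fermion and scalar antifield directions), but that acyclic part pairs nontrivially with the rest under $\omega_\cT$; the monomials of $h$ are precisely the off-diagonal pairings recording this, so that applying the differential returns both the acyclic-direction pairings inside $\omega_\cT$ and the $\dbar$-part of $\omega_\cT|_{\thy_+(2)}$, while the surviving $\partial$-part is exactly $\Phi^*\omega_\chi$. The main obstacle is not conceptual but combinatorial: keeping the signs straight. Three sources of sign intertwine — the cohomological-degree shifts of the regraded complex, the fermionic parity of the $\psi$-fields (so that graded skew-symmetry of $\omega$ and the Koszul rule in the Leibniz formula both enter), and the integration-by-parts moves needed to transfer $\partial$ or $\dbar$ from one factor to the other. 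I would handle this by fixing conventions once and for all — writing each pairing as a map $\cT^Q_{(1,0),c} \times \cT^Q_{(1,0),c} \to \C$ and checking graded skew-symmetry — and then verifying the identity one Dolbeault bidegree at a time; the factors of the Kähler form and the operator $\omega^{-1}$ in the third monomial $\beta^{2,1}_\omega \cdot \omega^{-1}\psi^{2,3}$ require a little bookkeeping with the primitive splitting but introduce no new difficulty.
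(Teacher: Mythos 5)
Your plan is correct and follows essentially the same route as the paper's proof: a direct term-by-term computation in which $\omega_\cT$, $\Phi^*\omega_\chi$, and $(Q_{\rm BV}+Q)h$ are written out on the components of Figure~\ref{fig:(1,0)twist} and matched after integration by parts, with the order-zero (dotted) arrows producing the non-differential pairings and the $Z$-shaped acyclic subcomplexes contributing only internally cancelling terms. Your closing observation that the monomials of $h$ record precisely the off-diagonal pairings between the collapsed acyclic directions and the rest is exactly the mechanism at work in the paper's computation.
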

\begin{proof}
The proof is a straightforward computation.
The pairing on $\thy(2)$ is given by
\deq{
  \omega_\thy = \chi \del \chi = \chi^{1,3} \del \chi^{1,0} + (\chi^{1,2}_\perp + \chi^{1,2}_\omega ) \del ( \chi^{1,1}_\perp + \chi^{1,1}_\omega),
}
containing a total of five terms. Applying the pullback, we obtain
\deq{
\Phi^* \omega_\thy = (\psi^+)^{1,3} \del \beta^{1,0} + \left(\beta^{1,2}_\perp + \omega^{-1} \psi^{2,3}\right) \del \left(\beta^{1,1}_\perp + \beta^{1,1}_\perp + \omega \phi^{0,0} \right).
}
The pairing on $\cT^Q_{(1,0)}$ is given by
\deq{
\omega_\cT = \phi^{0,0} \phi^{3,3} + \psi^{i,0} \psi^{3-i,3} + \beta^{3,0} \dbar \beta^{0,2} + \beta^{2,1}_\omega \left( \del \beta^{0,2} + \dbar \beta^{1,1}_\perp + \dbar \beta^{1,1}_\omega \right)
  + \beta^{1,2}_\perp \left( \del \beta^{1,1}_\perp + \del \beta^{1,1}_\omega + \dbar \beta^{2,0} \right) .
}
In writing the term $\psi^{i,0}\psi^{3-i,3}$, we have suppressed the antifield symbols; of course, this means the nondegenerate pairing on the fermi fields, and would more properly be written $\psi^{\text{ev},0} (\psi^+)^{\text{odd},3} + (\psi^+)^{\text{odd},0} \psi^{\text{ev},3}$. Note also that we make no claim that all of the terms we write are nonvanishing (for example, many will identically vanish on a compact K\"ahler manifold); the point is that our claim holds formally even without using these facts.

When taking the difference of the pairings, the sixth and seventh terms of $\omega_\cT$ cancel with corresponding terms, and the result is
\begin{multline}
  \omega_\cT  - \pi^* \omega_\thy = \phi^{0,0} \phi^{3,3} + \psi^{i,0} \psi^{3-i,3} + \beta^{3,0} \dbar \beta^{0,2} + \beta^{2,1}_\omega \left( \del \beta^{0,2} + \dbar \beta^{1,1}_\perp + \dbar \beta^{1,1}_\omega \right)
  + \beta^{1,2}_\perp \dbar \beta^{2,0}
\\ - (\psi^+)^{1,3} \del \beta^{1,0} - \beta^{1,2}_\perp \omega \del \phi^{0,0} - \omega^{-1} \psi^{2,3} \del \left(\beta^{1,1}_\perp + \beta^{1,1}_\perp + \omega \phi^{0,0} \right).
 \label{eq:pairing-diff}
\end{multline}
This is obviously nonzero as a two-form on $\cT^Q_{(1,0)}$, but  we will show that it is the BV variation of the homotopy~\eqref{eq:homotopy}.
Note that the last term of the homotopy crucially contains only the scalar field, and \emph{not} $\beta^{1,1}_\omega$.

To compute the BV variation $Q_{\rm BV} h$, we will need to consider all differentials in $\cT^Q_{(1,0)}$ \emph{entering} terms that appear in the homotopy. As usual, it is helpful to refer to Figure~\ref{fig:(1,0)twist}.

As a first step, note that the homotopy $h$ pairs the fermions at the upper right of Figure~\eqref{fig:(1,0)twist} with an isomorphic subcomplex of~$\cT^Q_{(1,0)}$. All of the ``internal'' arrows in each of these Z-shaped subdiagrams can thus be ignored; the terms they generate in the variation will occur twice, once from each  side of the pairing, and will cancel after an integration by parts.

It is also clear that the arrows that do \emph{not} contain differential operators---the dotted arrows 2 through 6 in the figure---generate precisely the terms of the pairing which do not contain differential operators, on the scalar and between $\psi_=$ and~$\psi_+$. This accounts for the first two terms in~\eqref{eq:pairing-diff}.

It thus remains to consider only terms involving differential operators in both $\omega_\cT - \pi^* \omega_\thy$ and $Qh$, where we may ignore the ``internal'' differentials in computing the latter. We proceed term by term in the homotopy. The first term is 
\deq{
  Q(\beta^{3,0} \psi^{0,3}) = \beta^{3,0} \dbar \beta^{0,2},
}
which cancels with the third term of~\eqref{eq:pairing-diff}. The second term is 
\deq{
Q(\beta^{2,0} (\psi^+)^{1,3}) = \partial \beta^{1,0} (\psi^+)^{1,3} + \beta^{2,0} \dbar \beta^{1,2}_\perp.
}
These two terms cancel with the seventh and eighth terms of~\eqref{eq:pairing-diff} after an integration by parts. 

The third term in the homotopy generates the largest number of terms: we have 
\deq{
  Q( \beta^{2,1}_\omega \cdot \omega^{-1} \psi^{2,3}) = \beta^{2,1}_\omega \left( \del \beta^{0,2} + \dbar \beta^{1,1}_\perp + \dbar \beta^{1,1}_\omega + \omega \dbar \phi^{0,0} \right) + 
  \left( \del \beta^{1,1}_\perp + \del \beta^{1,1}_\omega + \omega \del \phi^{0,0} \right) \omega^{-1} \psi^{2,3}.
}
The last three terms in this variation cancel with the last three terms in~\eqref{eq:pairing-diff}, and the first three terms cancel with the fourth, fifth, and sixth terms of~\eqref{eq:pairing-diff}. The fourth term in the variation is left over.

It remains to calculate the variation of the fourth and last term of the homotopy, which is 
\deq{
Q( \phi^{0,0} (\psi^+)^{3,3} ) = \phi^{0,0} \left( \del \beta^{1,2}_\perp + \dbar \beta^{2,1}_\omega \right).
}
The first of these terms cancels the ninth and final term of~\eqref{eq:pairing-diff}, and the last term cancels the leftover piece from the variation of the third term of the homotopy (after another integration by parts). The proposition, and thus this portion of the main theorem, is proved.
\end{proof}

\subsection{Holomorphic decomposition for the (2,0) theory}

In this section we finish the second part of Theorem \ref{thm:twist} concerning the holomorphic twist of the $(2,0)$ tensor multiplet. 
Again, we fix the data of a rank one supercharge $Q$, this time viewed as an odd element of the super Lie algebra $\sp{2}$. 

Recall that the $R$-symmetry group of $(2,0)$ supersymmetry is $G_R = \Sp(2)$. 
As in the $(1,0)$ case, the supercharge $Q$ defines a complex structure $L = \CC^3 \subset V = \CC^6$ equipped with the choice of a holomorphic half-density on $L$.
The twist carries a symmetry by the subgroup group $\MU(3) \subset \Spin(6)$ whose action is defined by the twisting homomorphism
\[
\phi : \MU(3) \xto{{\rm det}^{\frac12}} \U(1) \xto{i \times 1} \Sp(1) \times \Sp(1)' \subset \Sp(2) = G_R .
\]
Here, $i : \U(1) \hookrightarrow \Sp(1)$ denotes the embedding for which $Q$ has weight $+1$. 
Also we use primes as in $\Sp(1) \times \Sp(1)' \subset \Sp(2)$ to differentiate between the two abstractly isomorphic groups. 

Under the twisting homomorphism $\phi$ the defining representation $R_2$ of $\Sp(2)$ decomposes as
\deq{
\label{eqn:decompose20}
R_2 = \det(L)^{-\frac12} \oplus \det(L)^{\frac12} \oplus R_1' 
}
where $\MU(3)$ acts trivially on $R_1'$. 
The vector representation $W$ of $\Sp(2) = \Spin(5)$ decomposes under $\phi$ as
\[
W = \CC \oplus \left(\det(L)^{-\frac12} \oplus \det(L)^{\frac12} \right) \otimes R_1' .
\]

The regrading datum is specified by the homomorphism 
\[
\alpha : \U(1) \hookrightarrow \Sp(1) \xto{i \times 1} \Sp(1) \times \Sp(1)' \subset \Sp(2) = G_R .
\]
Note that this factors through the twisting homomorphism we used in the $(1,0)$ case along the embedding $\Sp(1) \hookrightarrow \Sp(2)$.





In addition to $\MU(3)$, the twist enjoys a global symmetry by the group $\Sp(1)'$.
Moreover, these actions commute for the trivial reason that $\MU(3)$ acts trivially on $\Sp(1)'$.
Using Equation (\ref{eqn:decompose1}), we observe that, after applying the twisting homomorphism $\phi$, the odd part $\Sigma_2$ of the super Lie algebra $\sp{2}$ transforms under $\MU(3) \times \Sp(1)' \subset \Spin(6) \times \Sp(2)$ as:

\begin{equation}
  \begin{tikzcd}[row sep = -5 pt]
&     -1 & 0 & 1 \\[10pt] \hline \\[10pt]
3 &  \det(L) \phantom{L^{\frac12}_{\frac12}}\\
5/2 & \phantom{ L\otimes\det(L)^{-1} } \phantom{L^{\frac12}_{\frac12}}\\
2 & \phantom{L^{\frac12}_{\frac12}}\\
3/2 &   & \det(L)^{\frac12} \otimes \Pi R_1' \phantom{L^{\frac12}_{\frac12}}\\
1 & \color{green!45!black} L \phantom{L^{\frac12}_{\frac12}}\\
1/2 &   \phantom{L^{\frac12}_{\frac12}}\\
0 & & &    \color{red!65!black}\CC \cdot Q &  \phantom{L^{\frac12}_{\frac12}}\\
-1/2 &      & \color{blue!65!black}{L\otimes\det(L)^{-\frac12} \otimes \Pi R_1' } \phantom{L^{\frac12}_{\frac12}}\\
 -1 &     \phantom{L^{\frac12}_{\frac12}}\\
  -3/2 &    \phantom{L^{\frac12}_{\frac12}}\\
-2   & & &  L\otimes\det(L)^{-1} 
  \end{tikzcd}
  \label{eq:tgdiagram}
\end{equation}

In this table, the vertical grading organizes spin number, and the horizontal grading is by ghost $\ZZ$-degree.  
The terms involving $R_1'$ are all odd with respect to the new $\ZZ/2$-grading.

The holomorphic supercharge $Q$ lies in the red summand. 
Its only nonzero bracket occurs with the supercharges in
$L$ represented in green above, using the degree-zero pairing on the $R$-symmetry space. As remarked above, this bracket witnesses a nullhomotopy of the translations in~$L$ with respect to the holomorphic supercharge.

In Proposition~\ref{prop:decompose(2,0)}, we described the $\Sp(1) \times \Sp(1)'$ decomposition of the $(2,0)$ tensor multiplet as a sum of the $(1,0)$ tensor multiplet plus the $(1,0)$ hypermultiplet valued in the symplectic representation $R_1'$:
\[
\mplet{2} = \mplet{1} \oplus \hyper{R_1'}
\]
Analogously, accounting for the twisting data $\phi, \alpha$ just introduced we have the following description of the regraded $(2,0)$ tensor multiplet.

\begin{prop}
\label{prop:(2,0)regraded}
The $\MU(3)$-regraded $(2,0)$ tensor multiplet $\phi^* \cT_{(2,0)}^\alpha$ decomposes as 
\[
\phi^* \cT_{(2,0)}^\alpha = \phi^* \cT_{(1,0)}^\alpha \oplus \Pi \phi^*\cT^\alpha_{\rm hyp} (R_1')
\]
where $\phi^* \cT_{(1,0)}^\alpha$ is the regraded $(1,0)$ tensor multiplet as in Proposition \ref{prop:regraded} and  $\phi^*\cT^\alpha_{\rm hyp} (R_1')$ is the free BV theory of the regraded hypermultiplet whose complex of fields is displayed in Figure \ref{fig:(2,0)regraded}.
\end{prop}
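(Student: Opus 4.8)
The plan is to bootstrap everything from the $\Sp(1)\times\Sp(1)'$-equivariant splitting of Proposition~\ref{prop:decompose(2,0)} together with the $(1,0)$ computation already carried out in Proposition~\ref{prop:regraded}. First I would note that the isomorphism $\cT_{(2,0)}\cong\cT_{(1,0)}\oplus\cT^{\rm hyp}_{(1,0)}(R_1')$, read at the level of the underlying cochain complexes of vector bundles, is manifestly $\Sp(1)\times\Sp(1)'$-equivariant: it is assembled from the branching rules $W\cong(R_1\otimes R_1')\oplus\C$ and $R_2\cong R_1\oplus R_1'$, which are isomorphisms of $\Sp(1)\times\Sp(1)'$-representations, while the summand $\thy_+(2)$ is acted on trivially by all of $\Sp(2)$. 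Since the twisting homomorphism $\phi\colon\MU(3)\to\Sp(1)\times\Sp(1)'\subset\Sp(2)$ and the regrading homomorphism $\alpha\colon\U(1)\hookrightarrow\Sp(1)\subset\Sp(2)$ both factor through $\Sp(1)\times\Sp(1)'$, restricting along $\iota\times\phi$ and regrading by $\alpha$-weight respects this direct sum; hence $\phi^*\cT_{(2,0)}^\alpha$ splits as the regraded $(1,0)$ tensor multiplet plus the regraded hypermultiplet, and it remains only to identify the two pieces.

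On the summand $\cT_{(1,0)}\subset\cT_{(2,0)}$ the homomorphisms $\phi$ and $\alpha$ restrict to exactly the twisting and regrading data used in Proposition~\ref{prop:regraded}: by construction $\alpha$ is the $(1,0)$ regrading homomorphism postcomposed with $\Sp(1)\hookrightarrow\Sp(2)$, and the projection of $\phi$ onto the first factor is $\MU(3)\xrightarrow{\det^{1/2}}\U(1)\xrightarrow{i}\Sp(1)$. This summand is therefore literally $\phi^*\cT_{(1,0)}^\alpha$ as described there. For the hypermultiplet summand I would use that $\phi$ lands in $\Sp(1)\times\{1\}$, so that $\MU(3)$ acts trivially on $R_1'$ and acts nontrivially only through $\Spin(6)\supset\MU(3)$ on the form and spinor bundles and through $\det^{1/2}$ on the factor $R_1$ inside the hypermultiplet scalar $\Scalar(0,R_1\otimes R_1')$. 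Decomposing with \eqref{eqn:decompose1} for $S_\pm$ and with $R_1=\det(L)^{-1/2}\oplus\det(L)^{1/2}$, and then shifting cohomological and parity degrees by $\alpha$-weight exactly as in the $(1,0)$ case, produces the bigraded complex depicted in Figure~\ref{fig:(2,0)regraded}; its differential is read off from the standard K\"ahler decomposition of the Dirac operator, now twisted by $K_X^{\pm1/2}$, just as in the proof of Proposition~\ref{prop:regraded}. The remaining point is the global parity reversal of this summand (the symbol $\Pi$): it arises because the twisting data give the hypermultiplet fields an odd total parity, consistently with the fact that the final form of the twist in Theorem~\ref{thm:twist} involves holomorphic Chern--Simons theory valued in the \emph{odd} symplectic space $\Pi R_1'$, and with the observation (cf.\ the discussion of \eqref{eq:tgdiagram}) that the $R_1'$ summands of the regraded $\Sigma_2$ are odd.

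Conceptually there is no hard step here: granted Proposition~\ref{prop:decompose(2,0)} and the $(1,0)$ analysis, the statement reduces to tracking Dolbeault bidegrees, the primitive/non-primitive splitting of forms, and $\Z/2$-degree shifts. The one computation that takes a little care, and which I would single out as the main (if still routine) obstacle, is pinning down the differentials in the regraded hypermultiplet complex of Figure~\ref{fig:(2,0)regraded}: one must verify that the Dirac operator on $\Fermi_-(R_1')$ and the Laplace-type operator on the scalar decompose, on a K\"ahler threefold, into the operators shown there, correctly accounting for the $K_X^{\pm1/2}$-twists coming both from the spinor decomposition \eqref{eqn:decompose1} and from the splitting $R_1=\det(L)^{-1/2}\oplus\det(L)^{1/2}$.
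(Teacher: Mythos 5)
Your proposal is correct and follows essentially the same route as the paper: the paper's proof likewise decomposes $\Fermi_-(R_2)$ and $\Scalar(0,W)$ via the branchings $R_2 = \det(L)^{-1/2}\oplus\det(L)^{1/2}\oplus R_1'$ and $W = \C\oplus(\det(L)^{-1/2}\oplus\det(L)^{1/2})\otimes R_1'$, identifies the first pieces with $\phi^*\cT_{(1,0)}^\alpha$ via Proposition~\ref{prop:regraded}, regrades the $R_1'$-valued pieces by their $\alpha$-weights $\pm 1$ (whence the overall parity reversal $\Pi$), and reads off the differential from the K\"ahler decomposition of the Dirac operator. Your only organizational difference is invoking Proposition~\ref{prop:decompose(2,0)} up front rather than re-deriving the splitting, which the paper's surrounding discussion does implicitly anyway.
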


In Figure \ref{fig:(2,0)regraded}, the operator $\dbar^*$ denotes the adjoint of $\dbar$ corresponding to the standard K\"{a}hler form on $\CC^3$.  
Under the regrading $\cT^{\rm hyp}(R_1') = \Phi(0,R_1') \oplus \Psi_-(R_1') \rightsquigarrow  \Pi \phi^*\cT^\alpha_{\rm hyp} (R_1')$, we will denote the decomposition of the fields as:
\beqn\label{eqn:(2,0)scalar}
\Phi(0, R_1') \ni \nu = \nu^{\frac32,0} + \nu^{\frac32,3} \in \Omega^{0} (K^{\frac12} \otimes R_1') [1] \oplus \Omega^{0,3} (K^{\frac12} \otimes R_1') [-1]
\eeqn
for the scalars and
\beqn\label{eqn:(2,0)fermion}
\Psi_-(R_1') \ni \lambda = \lambda^{\frac32,3} + \lambda^{\frac32,1} \in \Omega^{0,3}(K^{\frac32} \otimes R_1') \oplus \Omega^{0,1}(K^{\frac12} \otimes R_1') 
\eeqn
for the fermions.
A similar decomposition holds for the anti-fields which will be denoted $\nu^{+\frac32,0}$, etc..

\begin{figure}
  \begin{center}
    \[
      \begin{tikzcd}[row sep = 1 em, column sep = 1 em]
      \phi^*\cT^\alpha_{\rm hyp} (R_1') & -1 & 0 & 1 & 2 \\ \hline
	& & \Omega^{0,3}(K^{\frac12} \otimes R_1') \ar[r, "\dbar^*"] & \Omega^{0,2} (K^{\frac12} \otimes R_1') & \\
     & & \Omega^{0,1}(K^{\frac12} \otimes R_1') \ar[ur, "\dbar"] \ar[r, "\dbar^*"] & \Omega^{0}(K^{\frac12} \otimes R_1') & 
     \\ \hline
 & \Omega^{0}(K^{\frac12} \otimes R_1') \ar[r, "\triangle"] & \Omega^{0} (K^{\frac12} \otimes R_1') & \\
 & & & \Omega^{0,3}(K^{\frac12} \otimes R_1') \ar[r, "\triangle"] & \Omega^{0,3} (K^{\frac12} \otimes R_1') &
      \end{tikzcd}
    \]
  \end{center}
  \caption{The subcomplex $\phi^*\cT^\alpha_{\rm hyp} (R_1')$ of the $\MU(3)$-regraded $(2,0)$ tensor multiplet, see Proposition \ref{prop:(2,0)regraded}. 
  The top complex is the result of regrading the fermions in the $(1,0)$ hypermultiplet, and the bottom complex is the result of regrading the bosons in the $(1,0)$ hypermultiplet.}
  \label{fig:(2,0)regraded}
\end{figure}

\begin{proof}
The $\cN = (2,0)$ multiplet splits as a sum of three complexes
\[
\mplet{2} = \thy_+ (2) \oplus \Fermi_-(R_2) \oplus \Phi(0,W) .
\]
As in the case of the $\cN = (1,0)$ multiplet, the component $\chi_+(2)$ is not charged under the $R$-symmetry group $G_R = \Sp(2)$. 

The physical fields of $\Fermi_-(R_2)$ decompose under the twisting homomorphism $\phi$ as:
\beqn \label{fermi20}
\Pi \left(\Omega^0 \otimes S_- \otimes R_2\right) = \bigg( \Pi (\Omega^{0,0} \oplus \Omega^{2,0}) \oplus \Pi(\Omega^{0,3} \oplus \Omega^{2,3}) \bigg) \oplus \Pi \bigg(\Omega^0 \otimes S_- \otimes R_1' \bigg).
\eeqn
The first component in parentheses contributes to the regraded $\cN = (1,0)$ tensor as in Proposition \ref{prop:regraded}.
The second component 
\[
\Omega^0 \otimes S_- \otimes R_1' =  \Omega^0(K^{-\frac12} \otimes R_1') \oplus \Omega^{0,1}(K^{\frac12} \otimes R_1') 
\]
contributes to the regraded hypermultiplet $\Pi \Tilde{\cT}^{\rm hyp} (R_1')$. 
There is a similar decomposition for the anti-fields in $\Psi_-(R_2)$. 

Next, the physical fields of the scalar theory $\Phi(0, W)$ decompose as
\beqn\label{scalar20}
\Omega^0 \otimes W = \Omega^0 \oplus \Omega^0 \otimes \left(K^{-\frac12} \oplus K^{\frac12} \right) \otimes R_1' 
\eeqn
The first summand, the single copy of smooth functions $\Omega^0$, contributes to the regraded $(1,0)$ tensor multiplet. 
The second summand contributes to $\Pi \Tilde{\cT}^{\rm hyp} (R_1')$. 
There is a similar decomposition for the anti-fields in $\Phi(0, W)$. 

By Proposition \ref{prop:regraded}, upon regrading, we see that the components $\chi_+(2)$, the first summand of \ref{fermi20}, and the first summand of (\ref{scalar20}), combine to give the regraded $(1,0)$ tensor multiplet. 

Of the remaining terms, the only component which is acted upon nontrivially by $\Sp(2)$ is the second summand in (\ref{scalar20}) (and the corresponding antifields). 
Under $\alpha$, we see that the factor proportional to $K^{\frac12}$ has weight $-1$ and the factor $K^{\frac12}$ has weight $+1$. 
It remains to check that the BV differential decomposes as stated, but this is nearly identical to the proof of Proposition \ref{prop:regraded}. 
\end{proof}

\subsection{Proof of (2,0) part of Theorem \ref{thm:twist}}\label{sec:(2,0)twist}

\begin{figure}
  \begin{equation*}
    \begin{tikzcd}[row sep = 1 em, column sep = 2 em]
      & -2 & -1 &  0 & 1 & 2 \\ \hline
      3 & & &  & \Omega^{0,3} \ar[rdd] &  \\
      5/2 & & & & &  \\
      2 & & & \Omega^{0,2} \ar[ruu] \ar[rdd]& & \Omega^{1,3} \ar[from=ddl] \\
      3/2 & & & \fermi{\Omega^{\frac32,3}(\Pi R_1')} 
        \ar[r, dotted, crossing over, "\circleds{1}"]  & \scalar{\Omega^{\frac32, 3}(\Pi R_1')} \arrow[green!65!black, r, crossing over] & \scalar{\Omega^{\frac32,3} (\Pi R_1')} \\
      1 & & \Omega^{0,1} \ar[rdd] \ar[ruu] & & \Omega^{1,2} \ar[from=ldd] &  \\
      \frac12 & & & & \fermi{\Omega^{\frac32,2}(\Pi R_1')} \ar[from=luu, red!65!black, crossing over] \ar[ruu,dashed, "\dbar"] &  \\
      0 & \Omega^{0} \ar[ruu] \ar[rdd] & & \Omega^{1,1} & & \\
      -\frac12 & & & \fermi{\Omega^{\frac32,1}(\Pi R_1')} \ar[red!65!black, ruu] \ar[red!65!black, ddr] & &  \\
      -1 & & \Omega^{1,0} \ar[ruu] &  &  & \\
      -3/2 & & \scalar{\Omega^{\frac{3}{2},0}(\Pi R_1')} \ar[ruu, dashed, "\dbar"] \ar[r,green!65!black] & \scalar{\Omega^{\frac32,0} (\Pi R_1')} \ar[r, dotted, "\circleds{2}"] & \fermi{\Omega^{\frac32,0} (\Pi R_1')} & \\
    \end{tikzcd}
  \end{equation*}
  \caption{The holomorphically twisted $\N=(2,0)$ theory $\cT_{(2,0)}^Q$. 
The horizontal grading is the cohomological $\ZZ$-grading.
Note that the green and red text sits in {\em odd} $\ZZ/2$-degree. 
The vertical grading is the weight with respect to $\U(1) \subset \MU(3)$.
}
  \label{fig:(2,0)holtwist}
\end{figure}
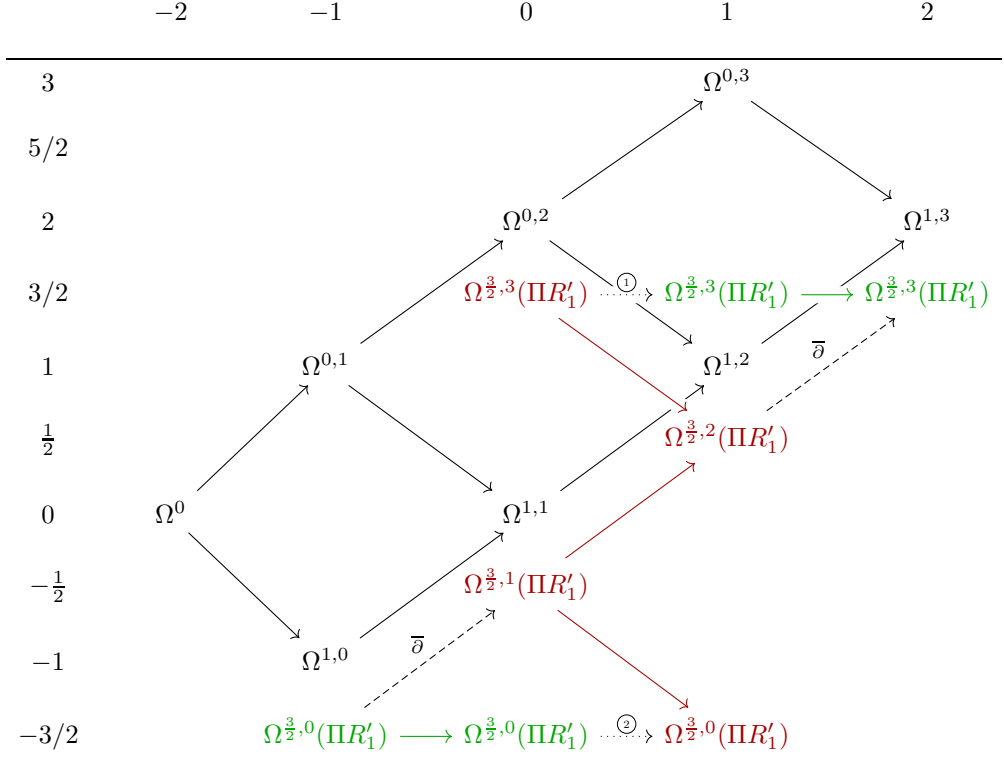

We now complete the proof of Theorem \ref{thm:twist}, which involves deforming the regraded theory described in Proposition \ref{prop:(2,0)regraded} by the holomorphic supercharge $Q$.
Throughout this section we refer to the description of the twisted theory in Figure \ref{fig:(2,0)holtwist}. 

According to Proposition \ref{prop:(2,0)regraded}, the $Q$-twisted theory splits as a sum of two complexes
\beqn\label{eqn:(2,0)holdecomp}
\mplet{1}^Q \oplus \cT_{\rm hyp} (R_1')^Q
\eeqn
where $\mplet{1}^Q$ is the $Q$-twist of the $(1,0)$ tensor multiplet and $\cT_{\rm hyp} (R_1')^Q$ is the theory obtained by deforming the $\MU(3)$-regraded hypermultiplet $\Pi \phi^*\cT^\alpha_{\rm hyp} (R_1')$ by $Q$.  

In Figure \ref{fig:(2,0)holtwist}, the black solid arrows represent the twist of the $\cN=(1,0)$ tensor multiplet, as we computed in \S \ref{sec:(1,0)twist} which corresponds to the first summand $\mplet{1}^Q$ in (\ref{eqn:(2,0)holdecomp}). 
The red text refers to the $\MU(3)$-regraded hypermultiplet $\Pi \phi^*\cT^\alpha_{\rm hyp} (R_1')$.
The red solid arrows represent the underlying classical BV differential of the regraded hypermultiplet.
Note that we use the shorthand notation $\Omega^{\pm \frac{3}{2}, \ell}(R_1')$ to mean the Dolbeault forms of type $(0,\ell)$ valued in the holomorphic vector bundle $K^{\pm \frac12} \otimes R_1'$. 
We have labeled the differentials generated by the holomorphic supercharge $Q$ acting on the hypermultiplet by the dotted and dashed arrows. 
As in the $\cN=(1,0)$ case, the dotted arrows $\begin{tikzcd} \; \ar[r, dotted] & \; \end{tikzcd}$ denote isomorphisms, and the dashed arrows $\begin{tikzcd} \; \ar[r, dashed] & \; \end{tikzcd}$ are given by the labeled differential operator, which we now proceed to characterize.
Again, we refer to the notation in \S \ref{ssec:module} where we constructed the action of supersymmetry on the tensor multiplet. 

We begin with the component of the supersymmetry action which transforms a fermion into a scalar. 
In the notation of \S \ref{sec:physical} this is the linear map $\rho_{\Psi,0}$. 
In the holomorphic decomposition, see Equation (\ref{eqn:(2,0)fermion}), of the fields we read off 
\[
\rho_{\Psi, 0} (Q \otimes \lambda) = \lambda^{\frac32,3} \in \Omega^{\frac32,3} \subset \Phi(0, R_1')[1]  ,
\]

This term accounts for the dotted arrow $\begin{tikzcd} \; \ar[r, dotted] & \; \end{tikzcd}$ in Figure \ref{fig:(2,0)holtwist} labeled $\circled{1}$.
Similarly, on the anti-fields we have
\[
\rho_{\Psi, 0} \left(Q \otimes \nu^+ \right) = \nu^{\frac32,0} \in \Omega^{\frac32,0} \subset \Psi_-(R_1'), 
\]
see the notation of Equation (\ref{eqn:(2,0)scalar}). 
This term accounts for the dotted arrow labeled $\circled{2}$. 

Next, we look at the component of supersymmetry which transforms a scalar into a fermion. 
In the notation of \S \ref{sec:physical} this is the linear map $\rho_{\Phi}$. 
In the holomorphic decomposition of fields we have
\[
\rho_{\Phi}(Q \otimes \nu) = \dbar \nu^{\frac32,0} \in \Omega^{\frac32, 1} \subset \Psi_-(R_1') .
\]
Similarly, on the anti-fields we have
\[
\rho_{\Phi}(Q \otimes \lambda^{+}) = \dbar \lambda^{\frac32,2} \in \Omega^{\frac32,3} \subset \Phi (0, R_1') .
\]
These maps account for each of the dashed arrows in Figure \ref{fig:(2,0)holtwist}.

Next, we will describe an equivalence of presymplectic BV theories
\[
\Phi : \cT_{(2,0)}^Q \to \chi(2) \oplus \hCS (\Pi R_1')
\]

On the $(1,0)$ tensor multiplet summand of the $(2,0)$ theory, the map $\Phi$ is defined to be the map (\ref{(1,0)phi}) that we used in the twist of the $(1,0)$ multiplet. 

On the $(1,0)$ hypermultiplet summand, the map is defined as follows. 
\beqn
\begin{array}{ccccccc}
\Phi(\nu^{\frac32, 0}) & = & \nu^{\frac32, 0} & \in & \Omega^{\frac32,0} \\
\Phi (\lambda^{\frac32 , 1}) & =&  \lambda^{\frac32, 1} & \in & \Omega^{\frac32, 1} \\ 
\Phi(\lambda^{\frac32, 2} + \nu^{\frac32, 3}) & = & \lambda^{\frac32, 2} - \dbar^* \nu^{+\frac32,3} & \in & \Omega^{\frac32, 2} \label{eqn:phihyper} \\
\Phi(\nu^{\frac32,3}) & = & \nu^{\frac32, 3} & \in & \Omega^{\frac32,3} .
\end{array}
\eeqn
The map $\Phi$ annihilates the remaining fields of the $(1,0)$ hypermultiplet. 

On the hypermultiplet, we note that this map is \emph{not} the obvious projection map of graded vector spaces, but is ``corrected" to account for the differentials mapping out of the acyclic subcomplex at the upper left of the hypermultiplet in Figure~\ref{fig:(2,0)holtwist}. 
The correction in this case is analogous to standard twist calculations, and follows the general rubric presented in Proposition~1.23 of~\cite{ESW}.
By this result, and the theorem for the $(1,0)$ tensor multiplet, it follows that $\Phi$ is a quasi-isomorphism. 

We have already shown how the map on the $(1,0)$ tensor multiplet is compatible with the degree $(-1)$ presymplectic structures. 
It is immediate to check that the map $\Phi$ restricted to the hypermultiplet strictly preserves the degree $(-1)$ presymplectic structures.


\subsubsection{An alternative description} \label{sec:alternate}


There is an alternative to the twisting data $(\phi, \alpha)$ in the case of the $(2,0)$ tensor multiplet. 
The key difference is that this variation admits a smaller global symmetry group. 
Note that the theory described in the previous section carries a global symmetry by the group $\MU(3) \times \Sp(1)'$, even {\em after} twisting.
This alternative twist breaks this global $\Sp(1)'$ symmetry completely, but further descends the $\MU(3)$-action to an action by $\U(3)$.

The reason this twist enjoys a smaller symmetry group is because it depends on the choice of a polarization of the $2$-dimensional symplectic vector space $R_1'$.
Such a polarization determines an embedding $i' : \U(1) \hookrightarrow \Sp(1)'$ which we now fix. 

Define the new twisting homomorphism by the composition
\[
\Tilde{\phi} : \MU(3) \xto{{\rm det}^{\frac12}} \U(1) \xto{\rm diag} \U(1) \times \U(1) \xto{i \times i'} \Sp(1) \times \Sp(1)' \subset \Sp(2).
\]

As in the previous section, $i : \U(1) \to \Sp(1)$ denotes the homomorphism for which $Q$ has weight $+1$.

Additionally, we have the regrading homomorphism 
\[
\Tilde{\alpha} : \U(1) \xto{\rm diag} \U(1) \times \U(1) \xto{i \times i'} \Sp(1) \times \Sp(1)' \subset \Sp(2) .
\]

To simplify the notation in the next section, we will denote by $\Tilde{\cT}_{(2,0)}$ the $\MU(3)$-regraded $(2,0)$ theory using this twisting data. 
The $Q$-twisted theory will be denoted by $\Tilde{\cT}^Q_{(2,0)}$. 

With this choice of a regrading homomorphism, the twisted theory $\Tilde{\cT}_{(2,0)}^Q$ descends to a $\U(3)$-equivariant theory and is concentrated in even $\ZZ/2$-degree and hence defines a $\ZZ$-graded theory. 
Aside from this, the only part of the calculation that changes is the subcomplex defined by the green and red text of Figure \ref{fig:(2,0)holtwist}, which we will henceforth denote by $\cA \subset \Tilde{\cT}_{(2,0)}^Q$.
 
For example, in the original description of the twist the scalar field lives in $\Pi \Omega^{\frac32, 0}(R_1')[1]$. 
According to this new twisting data this becomes
\[
\Pi \Omega^{\frac32, 0} (R_1') [1]  \oplus \Pi \Omega^{\frac32, 3} (R_1') [-1] \rightsquigarrow \left(\Omega^{0,0} \oplus \Omega^{3,0} [2]\right) \oplus  \left(\Omega^{3,0}[-2] \oplus \Omega^{3,3}[1] \right) .
\]
Similarly, using the original twisting data, the fermion field lives in $\Pi \Omega^{\frac32,1} \oplus \Pi \Omega^{\frac32 , 3}$. 
According to this new twisting data this becomes
\[
\Pi \Omega^{\frac32,1} \oplus \Pi \Omega^{\frac32 , 3}  \rightsquigarrow \left(\Omega^{0, 1} [-1] \oplus \Omega^{3,1} [1] \right) \oplus \left(\Omega^{0,3} [-1] \oplus \Omega^{3,3}[1] \right) .
\]

In total, using this alternative twisting data, the green and red subcomplex of the diagram in Figure \ref{fig:(2,0)holtwist}, which we denote $\cA$, is displayed in Figure \ref{fig:(2,0)holtwistalt}.
As before the solid arrows denote the differentials in the original untwisted theory. 
The dotted arrows $\begin{tikzcd} \; \ar[r, dotted] & \; \end{tikzcd}$ denote isomorphisms and the dashed arrows $\begin{tikzcd} \; \ar[r, dashed] & \; \end{tikzcd}$ arrows are given by the labeled differential operators induced by the action by $Q$. 
The green text labels the components arising from the scalar part of the untwisted theory, the red text labels the components arising from the fermion.  

\begin{figure} \label{fig:(2,0)holtwistalt}
  \begin{equation*}
    \begin{tikzcd}[row sep = 1 em, column sep = 2 em]
      & -2 & -1 &  0 & 1 & 2 & 3 \\ \hline
3 & & & & \fermi{\Omega^{0,3}} \ar[dr] \ar[dotted,r] & \scalar{\Omega^{0,3}} \ar[r] & \scalar{\Omega^{0,3}} \\
2 & & & & & \fermi{\Omega^{0,2}} \ar[ur, dashed, "\dbar"] & \\
1 & & & & \fermi{\Omega^{0,1}} \ar[ur] \ar[dr] & & \\
0 & & \fermi{\Omega^{3,3}} \ar[r, dotted, bend right = 15] & \begin{matrix} \scalar{\Omega^{0,0}} \\ \scalar{\Omega^{3,3}} \ar[ur, dashed, "\dbar"] \end{matrix} & \begin{matrix} \scalar{\Omega^{0,0}} \\ \scalar{\Omega^{3,3}} \ar[r,dotted, bend left = 15] \end{matrix}  & \fermi{\Omega^{0,0}} & \\
-1 & & & \fermi{\Omega^{3,2}} \ar[ur,dashed, "\dbar"] & & & \\
-2 & & \fermi{\Omega^{3,1}} \ar[ur] \ar[dr] & & & & \\
-3 & \scalar{\Omega^{3,0}} \ar[ur,dashed, "\dbar"] \ar[r] & \Omega^{3,0} \ar[r,dotted] & \fermi{\Omega^{3,0}} & & & 
    \end{tikzcd}
  \end{equation*}
  \caption{The description of the subcomplex $\cA \subset \Tilde{\cT}_{(2,0)}^Q$ using the alternative twisting data.}
\end{figure}
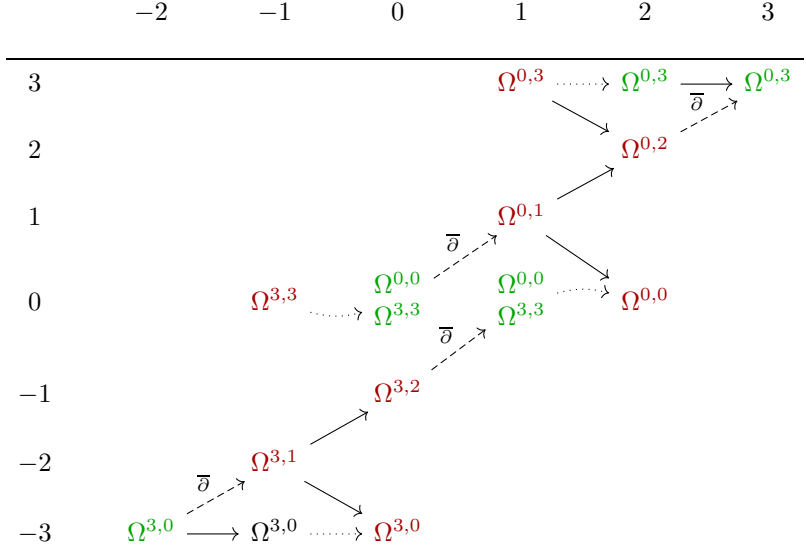

We recognize that the complex of Figure \ref{fig:(2,0)holtwistalt} admits a cochain map to the $\beta\gamma$ system on $\CC^3$.
Since the dotted arrows are isomorphisms, this cochain map is a quasi-isomorphism. 
The following proposition follows from tracing through the presymplectic BV structures, which is completely similar to the previous calculations. 

\begin{prop} \label{prop:alternate}
There is an equivalence of presymplectic BV theories
\[
\Phi : \Tilde{\cT}_{(2,0)}^Q \xto{\simeq} \chi(2) \oplus \beta\gamma(\CC) .
\]
Moreover, this equivalence is $\U(3)$-equivariant. 
\end{prop}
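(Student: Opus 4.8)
The plan is to exhibit $\Phi$ as a direct sum of two pieces---one on the $(1,0)$ tensor-multiplet summand and one on the subcomplex $\cA$---and to reduce each piece to calculations already carried out above. Recall from \S\ref{sec:alternate} that, with the alternative twisting data $(\Tilde\phi,\Tilde\alpha)$, the regraded theory still splits as $\Tilde{\cT}^Q_{(2,0)} = \cT^Q_{(1,0)} \oplus \cA$, where $\cA$ is the complex of Figure \ref{fig:(2,0)holtwistalt} obtained by deforming the re-regraded hypermultiplet by $Q$, and where the $(1,0)$ tensor summand together with its deformation is literally unchanged from \S\ref{sec:(1,0)twist}. On that summand I would take $\Phi$ to be the projection map of \eqref{(1,0)phi}, which by the $(1,0)$ part of Theorem \ref{thm:twist} (proved in \S\ref{sec:(1,0)twist}, with the homotopy $h$ of Proposition \ref{prop:pairings}) is an equivalence of presymplectic BV theories onto $\chi(2)$; since every ingredient---the projections, the primitive/nonprimitive decomposition of the de Rham forms, the K\"ahler form and the $\omega^{-1}$ appearing in $h$---is built purely from the complex structure and K\"ahler metric of $\CC^3$, this equivalence is $\U(3)$-equivariant. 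It then remains only to treat $\cA$ and to check equivariance there.

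For $\cA$: reading off Figure \ref{fig:(2,0)holtwistalt}, the pieces not belonging to the acyclic subcomplex spanned by the dotted (isomorphism) arrows assemble, after regrading, into $\Omega^{0,\bu}(\CC^3)$ in cohomological degrees $0,\dots,3$ (fields) together with $\Omega^{3,\bu}(\CC^3)$ in degrees $0,\dots,3$ (antifields), and the total differential restricted to these pieces---the original $\dbar$ together with the dashed $\dbar$-arrows generated by $Q$---is precisely the Dolbeault operator, i.e.\ the differential of $\beta\gamma(\CC)$. I would define $\Phi|_\cA$ as the projection onto these surviving pieces, corrected by the $\dbar^*$-terms dictated by the differentials leaving the acyclic part, exactly in the pattern of \eqref{eqn:phihyper} and following the general recipe of \cite[Prop.~1.23]{ESW}; because the dotted arrows are isomorphisms with acyclic kernel, $\Phi|_\cA$ is a quasi-isomorphism onto $\beta\gamma(\CC)$. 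One then checks, by the same bookkeeping as in \S\ref{sec:(2,0)twist} for the $(1,0)$-hypermultiplet summand of $\cT^Q_{(2,0)}$, that with these corrections $\Phi|_\cA$ \emph{strictly} intertwines the nondegenerate $(-1)$-shifted pairing on $\cA$ inherited from the hypermultiplet with the pairing $\int_{\CC^3}\beta\gamma$ on $\beta\gamma(\CC)$; no homotopy is needed on this summand. Assembling the two summands gives $\omega_{\Tilde{\cT}^Q_{(2,0)}} - \Phi^*\!\left(\omega_{\chi(2)}\oplus\omega_{\beta\gamma}\right) = (\QBV+Q)(h\oplus 0)$, so $\Phi$ is an equivalence of presymplectic BV theories.

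Finally, for the $\U(3)$-equivariance of $\Phi|_\cA$: the whole point of the alternative twisting data is that the half-density twists $K^{\pm\frac12}$ combine with the polarized $R_1' = \ell\oplus\ell^*$ and the diagonal regrading $\Tilde\alpha$ into honest integer powers of $K$, so that $\beta\gamma(\CC)$ and the surviving pieces of $\cA$ are genuinely associated bundles for $\U(3)$ (rather than merely for $\MU(3)$), and the correction operator $\dbar^*$ is built from the $\U(3)$-invariant K\"ahler metric. The one thing that must be verified is that breaking the $\Sp(1)'$ flavor symmetry by the choice of polarization does not obstruct the descent of the $\MU(3)$-action to $\U(3)$---equivalently, that the relevant $\Tilde\alpha$-weights are exactly those making all the identifications $\U(3)$-equivariant---which is the content of the weight computations displayed at the end of \S\ref{sec:alternate}. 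I expect the main (and only real) obstacle in the argument to be this bookkeeping, together with the simultaneous verification that $\Phi|_\cA$ is both a cochain map and a strict isometry after the $\dbar^*$-corrections; since all of these are cognate to computations already performed in \S\ref{sec:(1,0)twist}, \S\ref{sec:(2,0)twist}, and \cite{ESW}, I anticipate no genuine difficulty, only care.
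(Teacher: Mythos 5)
Your proposal follows essentially the same route as the paper: split $\Tilde{\cT}^Q_{(2,0)}$ into the $(1,0)$ tensor summand (handled by the projection of (\ref{(1,0)phi}) together with the homotopy of Proposition~\ref{prop:pairings}) and the hypermultiplet subcomplex $\cA$ (handled by the $\dbar^*$-corrected projection in the pattern of (\ref{eqn:phihyper}), which is a quasi-isomorphism because the dotted arrows are isomorphisms and which strictly preserves the pairing), with $\U(3)$-equivariance coming from the fact that the alternative twisting data turns all half-density twists into integer powers of $K$. The only slip is bookkeeping: the surviving $\Omega^{3,\bu}$ piece of $\cA$ sits in cohomological degrees $-2,\dots,1$, i.e.\ it is $\Omega^{3,\bu}[2]$ (the $\beta$ field), not degrees $0,\dots,3$ --- otherwise its pairing with $\gamma\in\Omega^{0,\bu}$ would not have degree $-1$.
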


The map $\Phi$ is defined nearly identically to the quasi-isomorphism defined in the previous section for the twisting data $(\phi, \alpha)$. 
The only difference is that one must decompose (and twist) the formula for $\Phi$ acting on the hypermultiplet as in Equation (\ref{eqn:phihyper}). 

\subsection{The twisted factorization algebras}
\label{sec:twfact}

In \S\ref{sec:preobs} we have defined a notion of Hamiltonian observables for certain classes of presymplectic BV theories. 
For a holomorphic supercharge $Q$, each of the twisted presymplectic BV theories $\cT_{(1,0)}^Q$, $\cT^Q_{(2,0)}$ and $\Tilde{\cT}^Q_{(2,0)}$ satisfy Condition (2) in \S \ref{sec:preobs}.
So, in each of these cases we obtain a $\PP_0$-factorization algebra of Hamiltonian observables. 

The twist of the $(1,0)$ theory $\cT_{(1,0)}^Q$ is defined on any complex three-fold $X$.
We denote the corresponding factorization algebra of observables on $X$ by $\Obs_{(1,0)}$, with the supercharge $Q$ understood. 
We can describe this $\PP_0$-factorization algebra explicitly as follows.
Recall $\cT^Q_{(1,0)} \simeq \chi(2)$ which, as a cochain complex, is $\Omega^{\leq 1, \bu} [2]$ equipped with the differential $\dbar + \partial$. 
Keeping track of shifts, one has $\chi(2)^! = \Omega^{\geq 2, \bu}[2]$, again equipped with the differential $\dbar + \partial$. 
Thus, the factorization algebra is described by
\[
\Obs_{(1,0)} = \left(\cO^{sm}(\Omega^{\geq 2, \bu}[1]), \dbar + \partial \right) 
\]
where $\cO^{sm}$ denotes the ``smooth" functionals as defined in \S\ref{sec:preobs}.
Explicitly, to an open set $U \subset X$, the factorization algebra assigns the cochain complex
\[
\Obs_{(1,0)} (U) = \bigg( \Sym \left(\Omega^{\leq 1,\bu}_c (U) [3] \right), \dbar + \partial \bigg) .
\]
With this description in hand, the $\PP_0$-structure is also easy to interpret. 
Given two linear observables $\cO, \cO \in \Omega^{\leq 1,\bu}_c (U) [3]$, the $\PP_0$-bracket is
\beqn\label{p0formula}
\{\cO , \cO' \} = \int_U \cO \partial \cO' .
\eeqn
The bracket extends to non-linear observables by the graded Leibniz rule. 
In \cite{BrianOwenEugene} this $\PP_0$-factorization algebra has appeared as the factorization algebra of boundary observables of abelian $7$-dimensional Chern--Simons theory. 
For more discussion on the relationship to $7$-dimensional Chern--Simons theory and topological M-theory we refer to \S \ref{sec:bcov}. 

We will not explicitly need to mention the factorization algebra associated to the twist of the $(2,0)$ theory $\cT^Q_{(2,0)}$.
However, we will study the factorization algebra associated to its alternative twist $\Tilde{\cT}^Q_{(2,0)}$, which we will denote by $\Obs_{(2,0)}$. 
Again, this theory exists on any complex three-fold $X$.
Similarly to the $(1,0)$ case, we obtain the following explicit description of this factorization algebra.
To an open set $U \subset X$, it assigns the cochain complex
\[
\Obs_{(2,0)} (U) = \bigg( \Sym \left(\Omega^{\leq 1,\bu}_c (U) [3] \oplus \Omega^{3,\bu}_c(U)[3] \oplus \Omega^{0,\bu}_c(U) [1] \right)  , \dbar + \partial \bigg) .
\]
The $\PP_0$-bracket on linear observables is again straightforward. 
The first linear factor is the same as in the $(1,0)$ case.
The second two linear factors are the linear observables of the $\beta\gamma$ system on $\CC^3$. 
For linear observables in $\Omega^{\leq 1, \bu}(U)[3]$ it is given by the same formula as in (\ref{p0formula}). 
The only other nonzero bracket between linear observables occurs between elements $\cO \in \Omega^{3,\bu}_c(U) [3]$ and $\cO' \in \Omega^{0,\bu}_c(U)[1]$ where it is given by
\[
\{\cO, \cO'\} = \int_U \cO \cO'  .
\]

%

\section{The non-minimal twist}
\label{sec:nonmin}

We have classified in \S \ref{sec:susy6} the possible twisting supercharges of the $(2,0)$ supersymmetry algebra.
We found that they were characterized by the rank of the supercharge, which for a non-trivial square-zero element could be either one or two. 
The minimal, rank one, case was studied in the last section.
We now turn to the further, non-minimal, twist of the $(2,0)$ theory. 

Upon applying a twisting homomorphism more natural to the non-minimal twisting supercharge, the non-minimal twist exists on manifolds of the form $M^4 \times \Sigma$ where $M^4$ is a smooth four-manifold and $\Sigma$ is a Riemann surface. 
Since the non-minimal supercharge leaves five directions invariant, this theory depends topologically on $M^4$ and holomorphically on $\Sigma$. 

Our main result is the following; see Theorems~\ref{thm:nm} and~\ref{thm:nmSO4} for more careful statements.
\begin{thm}
The non-minimal twist of the abelian $(2,0)$ tensor multiplet on $\RR^4 \times \CC$ is equivalent, as a presymplectic BV theory, to the theory whose complex of fields is
\[
  \Omega^{\bu}(\RR^4) \mathbin{\Hat{\otimes}} \Omega^{0,\bu}(\CC) [2]  .
\]
The $(-1)$-shifted presymplectic structure is
\beqn\label{presymplecticnm}
(\alpha, \alpha ') \mapsto \int \alpha \partial_{\CC} \alpha' .
\eeqn
Here, $\partial_{\CC}$ denotes the holomorphic de Rham operator on $\CC$. 
\end{thm}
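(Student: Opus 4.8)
The plan is to mirror the strategy of~\S\ref{sec:holtwist}, replacing the rank-one supercharge used there by a rank-two one. First I would fix a non-minimal square-zero supercharge $Q = \xi_1 \otimes r_1 + \xi_2 \otimes r_2 \in \Pi \Sigma_2$, with $w(r_1, r_2) = 0$ as in~\S\ref{ssec:nilps}; geometrically this determines a splitting $\RR^6 = \RR^4 \times \CC$ together with a holomorphic volume form on the $\CC$ factor, so that the relevant holonomy group is $\Spin(4) \times \U(1)$. As in the minimal case I would then choose a compatible twisting homomorphism $\phi$ and regrading homomorphism $\alpha$ adapted to this structure---concretely, recording the branching of the $\Spin(6)$ spin and vector representations and of the $R$-symmetry $\Sp(2) \cong \Spin(5)$ down to $\Spin(4)$, and regrading so that $Q_{\rm BV}$ and $Q$ both sit in cohomological degree $+1$. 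Decomposing the three summands $\thy_+(2)$, $\Fermi_-(R_2)$, $\Scalar(0,W)$ of $\cT_{(2,0)}$ against this data produces an explicit bigraded complex of sheaves on $\RR^4 \times \CC$, which one organizes into a diagram of the kind appearing in Figures~\ref{fig:(1,0)twist} and~\ref{fig:(2,0)holtwist}.

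Next I would compute the twisted differential. The genuinely new feature here, relative to~\S\ref{sec:holtwist}, is that the quadratic term $\rho^{(2)}$ of Theorem~\ref{thm:Linfinity} contributes: the twisted BV operator is $Q_{\rm BV} + \rho^{(1)}(Q) + \tfrac12 \rho^{(2)}(Q,Q)$, and although $[Q,Q] = 0$ kills the bracket-type pieces of $\rho^{(2)}$, the $\star$-type pieces built from $Q \star Q \in V \otimes W$ do not vanish (unlike for a rank-one supercharge, where $\xi \wedge \xi = 0$ forces $Q \star Q = 0$). Using the component formulas of~\S\ref{ssec:module} for $\rho^{(1)}$ and $\rho^{(2)}$, I would assemble the full twisted differential, exhibit the large acyclic subcomplex generated by the order-zero (``shift'') parts of $\rho^{(1)}$ and $\rho^{(2)}$, and identify the quotient with $\Omega^{\bu}(\RR^4) \mathbin{\Hat{\otimes}} \Omega^{0,\bu}(\CC)[2]$ carrying the differential $\d_{\RR^4} + \dbar_\CC$; the projection onto this quotient is the quasi-isomorphism $\Phi$. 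As a consistency check, restricting to the five invariant directions predicts exactly the claimed field content, and the decomposition $\cT_{(2,0)} \cong \cT_{(1,0)} \oplus \cT^{\rm hyp}_{(1,0)}(R_1')$ of Proposition~\ref{prop:decompose(2,0)} lets one recycle parts of~\S\ref{sec:holtwist}; alternatively one may realize the non-minimal twist as a further deformation of a fixed minimal twist by the residual supercharge, as indicated in~\S\ref{ssec:nilps}.

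Finally, just as in Proposition~\ref{prop:pairings}, the map $\Phi$ will not intertwine the presymplectic pairings strictly: I would produce a degree-$(-2)$ two-form $h$ on the fields of the twisted theory with $\omega_{\cT} - \Phi^* \omega = (Q_{\rm BV} + Q)\, h$, where $\omega$ is the pairing $(\alpha, \alpha') \mapsto \int \alpha\, \partial_\CC \alpha'$ of~\eqref{presymplecticnm}. The terms of $h$ pair each discarded acyclic ``antifield'' direction with the field it maps to under the order-zero parts of $\rho^{(1)}$ and $\rho^{(2)}$, and checking the identity is a term-by-term manipulation using Cartan's formula on $\RR^4$ and integration by parts on $\CC$, exactly as in~\S\ref{sec:(1,0)twist}. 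Equivariance under $\Spin(4)$ is then automatic, giving the $SO(4)$-covariant form of Theorem~\ref{thm:nmSO4}. The main obstacle I anticipate is bookkeeping rather than conceptual: correctly folding the $\rho^{(2)}(Q,Q)$ term into the twisted differential with the right normalizations, and then pinning down $h$---which is unavoidable, since the acyclic subcomplex being removed pairs nontrivially with the rest of the theory, so it is precisely this homotopy that fixes the pairing~\eqref{presymplecticnm} on the nose.
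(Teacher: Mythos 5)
Your outline is sound and would get to the stated result, but it takes a genuinely different route from the paper. You propose a one-step computation: regrade the \emph{untwisted} multiplet with $\Spin(4)\times\U(1)$-adapted twisting data, deform by the full rank-two supercharge $\mathbf{Q}$ including the quadratic term $\tfrac12\rho^{(2)}(\mathbf{Q},\mathbf{Q})$, and then cancel the acyclic subcomplex with a single presymplectic homotopy $h$. The paper instead proceeds in two stages: it writes $\mathbf{Q}=Q+Q'$ with $Q$ minimal, starts from the already-simplified holomorphic twist $\chi(2)\oplus\beta\gamma(\CC)$ of \S\ref{sec:alternate}, and deforms \emph{that} by the residual supercharge $Q'$. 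The price of the paper's route is precisely the transfer-of-structure subtlety you sidestep: since $\Phi:\Tilde{\cT}^Q_{(2,0)}\to\chi(2)\oplus\beta\gamma(\CC)$ does not strictly preserve the pairing, the action of $Q'$ must be corrected by a homotopy $H_{Q'}$ (Lemma~\ref{lem:hx}) before it becomes strictly presymplectic on the small model; the payoff is that the deformation then acts on a tiny complex, the acyclic kernel is visible at a glance (the dotted arrows $\langle\Pi,-\rangle$ are isomorphisms onto $\d z\,\Omega^{0,\bu}\subset\Omega^{1,\bu}$), and one obtains along the way the action of the whole residual algebra $\fg^Q$ on the holomorphic twist (Proposition~\ref{prop:modifiedsusy}), which is of independent interest. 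Your route buys uniformity with \S\ref{sec:holtwist} and needs only the final pairing homotopy, at the cost of a substantially larger diagram and of folding in $\rho^{(2)}$ by hand. Your key observation is correct and essential either way: for a rank-two supercharge $w(r_1,r_2)=0$ kills only the $\C\subset\wedge^2 R_2$ component, so $\mathbf{Q}\star\mathbf{Q}\in V\otimes W$ survives and $\tfrac12\rho^{(2)}(\mathbf{Q},\mathbf{Q})=\rho^{(2)}(Q,Q')$ contributes genuinely new arrows (in the paper's bookkeeping this is exactly the ``mysterious'' scalar-to-one-form-ghost arrow of Figure~\ref{fig:nonmin}, attributed to $\rho^{(2)}_\Scalar$, with the other two non-obvious arrows generated by $H_\pi$ and by the projection used in the hypermultiplet twist). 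The two accountings are consistent, since $\rho^{(2)}(Q,Q)$ and $\rho^{(2)}(Q',Q')$ both vanish for rank-one arguments; just be aware that if you carry out your version you must also track the cross-terms of $\rho^{(1)}(Q)$ with the acyclic cancellations that in the staged approach are already absorbed into $\Phi$ and $H_{Q'}$.
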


\subsection{The non-minimal deformation}

Before computing the twist, it is instructive to get a handle on the explicit data involved in choosing a non-minimal twisting supercharge. 
As a $\Spin(6)\times\Sp(2)$-module, the odd part of the supertranslation algebra $\fp_{(2,0)}$ is $\Sigma_2 \cong \Pi S_+ \otimes R_2$. 
It is thus easy to compute the stabilizer of a chosen rank-one supercharge, which is the product of the respective stabilizers of fixed vectors in~$S_+$ and~$R_2$ separately. 
This is the subgroup $\MU(3) \times \Sp(1)' \times \U(1) \subset \Spin(6) \times \Sp(2)$.
As representations of the stabilizer, $S_+$  and~$R_2$ decompose as
\deq{
S_+ = \det(L)^{\frac12} \oplus L\otimes \det(L)^{-\frac12}, \qquad
  R_2 = \C^{-1} \oplus (R_1')^{0} \oplus \C^{+1}.
}
Here, the superscripts $\CC^{\pm 1}$ denote the charges under $\U(1)$. 

We can thus consider the following diagram representing the  decomposition of $\Sigma_2$ as a $\MU(3) \times \U(1) \subset \Spin(6) \times \Sp(2)$ representation:
\begin{equation} \label{eqn:s6s2}
  \begin{tikzcd}
    \color{red!65!black}{\det(L)^{\frac12} \otimes \C^{-1}} & \det(L)^{\frac12} \otimes (R_1')^0 & \det(L)^{\frac12} \otimes \C^{+1} \\
    L\otimes\det(L)^{-\frac12}\otimes \C^{-1} & \color{blue!65!black}{L\otimes\det(L)^{-\frac12} \otimes (R_1')^0 } & \color{green!45!black}{L\otimes\det(L)^{-\frac12} \otimes \C^{+1}}
  \end{tikzcd}
\end{equation}
The holomorphic supercharge is indicated in red (note that we have not yet applied any twisting homomorphism). 
Its only nonzero bracket occurs with the supercharges in~$L\otimes\det(L)^{-\frac12}\otimes\C^{+1}$, represented in green above, using the degree-zero pairing on the $R$-symmetry space. As remarked above, this bracket witnesses a nullhomotopy of the translations in~$L$ with respect to the holomorphic supercharge.
The other bracket map of interest to us pairs the supercharges represented in blue with themselves, via the map 
\deq{
  (L\otimes\det(L)^{-\frac12} \otimes (R_1')^0)^{\otimes 2} \to \wedge^2 L \otimes \det(L)^{-1} \otimes \wedge^2 R_1' \cong L^\vee.
}

\begin{rmk}
This equivariant decomposition makes clear the structure of the tangent space to the nilpotence variety at a holomorphic supercharge. The dimension of the normal bundle is $3$, represented by the component colored green above; all other supercharges anticommute with $Q$, and therefore define first-order deformations, which are tangent vectors to the nilpotence variety. The dimension of the tangent space at a holomorphic supercharge is thus $12$, although the projective variety is in fact only 10-dimensional. The fibers of the tangent bundle are ``too large'' because the holomorphic locus is in fact the singular locus of the variety. In fact, as remarked above, 
the singular locus (or space of holomorphic supercharges) is a copy of $\PP^3 \times \PP^3$, consisting of four-by-four matrices of rank one; its tangent space is spanned by the black entries in the diagram~\eqref{eq:tgdiagram}. The red entry is $Q$ itself, representing the tangent direction along the affine cone of the projective variety. 
\end{rmk}

The deformations represented by the blue elements are of interest here; they generate the non-minimal twist (and therefore represent deforming away from the holomorphic locus of the nilpotence variety, into the locus of nonminimal supercharges). However, not all such infinitesimal deformations give rise to finite deformations of~$Q$; geometrically, this corresponds to the fact that the nilpotence variety is singular, and not all vectors in the algebraic tangent space correspond to paths in the variety. Since the nilpotence conditions are quadratic, though, this can be checked at order two: for a deforming supercharge $Q' \in L\otimes\det(L)^{-\frac12} \otimes (R_1')^0$, we just need the condition that 
\[
[Q',Q'] = 0 
\] 
inside $\fp_{(2,0)}$. 
Examining the bracket map discussed above shows immediately that the deforming supercharges with zero self-bracket are precisely the rank-one elements:
\deq{
  Q' = \alpha \otimes w : \quad \alpha \in L\otimes\det(L)^{-\frac12} , ~ w \in (R_1')^0.
}

The data of $Q'$ has an especially nice interpretation through the lens of the holomorphic twist.
We recall the alternative twisting homomorphism $\Tilde{\phi}$ of the $(2,0)$ theory from \S \ref{sec:alternate}. 
Notice that this twisting homomorphism breaks the $\Sp(1)'$ symmetry by fixing a polarization of $R_1'$.
Further, upon twisting by $\Tilde{\phi}$ the relevant component of the spinor representation decomposes under $\MU(3)$ as
\beqn\label{Qbivector}
L\otimes\det(L)^{-\frac12} \otimes (R_1')^0 \rightsquigarrow L \otimes \det (L)^{-1} \oplus L .
\eeqn
Without loss of generality we can assume $Q'$ lies in the first factor. 
Thus, from the perspective of the holomorphic twist, the datum of a further nonminimal twist therefore consists precisely of a polarization of the symplectic vector space $R_1'$, together with a nonzero translation invariant section of $\wedge^2 T^{1,0} \CC^3$, where $T^{1,0} \CC^3$ is the holomorphic tangent bundle. 
We choose holomorphic coordinates $(w_1,w_2, z)$ on $\CC^3$ and identify, without loss of generality, the supercharge $Q'$ with the translation invariant bivector 
\[
Q' = \partial_{w_1} \wedge \partial_{w_2} .
\]

The non-minimal twisting supercharge is of the form
\[
\mathbf{Q} := Q + Q'
\]
where $Q$ is the minimal supercharge lying in the red component of (\ref{eqn:s6s2}) and $Q'$ is a rank one supercharge lying in the blue component of (\ref{eqn:s6s2}). 

Most of the remainder of this section is devoted to the proof of the following description of the non-minimal twist.

\begin{thm} \label{thm:nm}
Using the twisting data $(\Tilde{\phi}, \Tilde{\alpha})$, the $\bQ$-twist of the $(2,0)$ tensor multiplet $\cT^{\bf Q}_{(2,0)}$ is equivalent to the free presymplectic BV theory whose complex of fields is
\beqn\label{eqn:nm}
\begin{tikzcd}
\mathbf{T} \;\; = \;\; \bigg(\Omega^{\leq 1, \bu}(\CC^3) / (\d z) [2] \ar[r, "\Pi \circ \partial"] &  \Omega^{0,\bu}(\CC^3) \bigg) .
\end{tikzcd}
\eeqn
where $\Pi$ is the translation invariant bivector $\partial_{w_1} \wedge \partial_{w_2}$. 
\end{thm}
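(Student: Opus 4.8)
The plan is to proceed exactly as in the proof of the minimal twist in \S\ref{sec:(2,0)twist}, but now deforming the differential by the full supercharge $\bQ = Q + Q'$ rather than just by $Q$. The first step is to take the description of the holomorphically twisted theory $\cT^Q_{(2,0)}$ already obtained in \S\ref{sec:(2,0)twist} (using the alternative twisting data $(\Tilde\phi,\Tilde\alpha)$, so that the theory is $\ZZ$-graded and $\U(3)$-equivariant) and then add the further deformation $Q'$. Since $Q'$ is the rank-one bivector $\partial_{w_1}\wedge\partial_{w_2} \in \wedge^2 T^{1,0}\CC^3$, its action on the physical fields is read off directly from the supersymmetry transformations recorded in \S\ref{ssec:module}: the components $\rho_\Fermi$, $\rho_\Phi$, $\rho_\chi$ evaluated on $Q'$ become first-order differential operators given by contracting the relevant form legs against the bivector $\Pi = \partial_{w_1}\wedge\partial_{w_2}$. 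Because $[Q,Q']=0$ and $[Q',Q']=0$ (established above by the rank-one condition), $\rho^{(1)}(\bQ)^2 + [\QBV,\rho^{(2)}(\bQ,\bQ)] = 0$, so the new total differential $\QBV + \rho^{(1)}(\bQ) + \rho^{(2)}(\bQ,\bQ)$ squares to zero and $\cT^{\bQ}_{(2,0)}$ is a well-defined presymplectic BV theory by Theorem~\ref{thm:Linfinity}.

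The second step is to simplify. Already in the holomorphic twist we found $\cT^Q_{(2,0)} \simeq \chi(2)\oplus\beta\gamma(\CC)$ (Proposition~\ref{prop:alternate}); since $Q'$ is a deformation, one can work with this smaller model and add the induced deformation $\delta_{Q'}$ there, exploiting homotopy-invariance of deformations as in the standard twisting yoga. The effect of $\delta_{Q'}$ on $\chi(2) = \Omega^{\leq 1,\bu}(\CC^3)[2]$ with differential $\partial + \dbar$ is to add a term contracting with $\Pi$; the key observation is that contraction with $\Pi = \partial_{w_1}\wedge\partial_{w_2}$ quotients out the $\d z$-direction of the $(1,\bu)$-forms, while on the $\beta\gamma$ piece the deformation identifies $\Omega^{0,\bu}(\CC^3)[1]$ (the $\gamma$-field) with a subcomplex of the antifields. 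Carrying out the elimination of the resulting acyclic pairs—exactly as in the dotted-arrow bookkeeping of \S\ref{sec:(2,0)twist}—should leave the two-term complex
\[
\mathbf{T} = \Big(\Omega^{\leq 1,\bu}(\CC^3)/(\d z)[2] \xto{\ \Pi\circ\partial\ } \Omega^{0,\bu}(\CC^3)\Big)
\]
claimed in the theorem, where the map $\Pi\circ\partial$ is the composite of $\partial$ with contraction against $\Pi$. One then checks that the induced presymplectic form is the pullback of $\omega_\chi$, possibly up to a $\QBV$-exact homotopy, by exhibiting an explicit degree $(-2)$ two-form $h$ as in Proposition~\ref{prop:pairings}; the correction terms track the same acyclic pairs that were eliminated.

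The main obstacle I expect is precisely the last point: the quasi-isomorphism realizing the simplification to $\mathbf{T}$ will \emph{not} preserve the presymplectic structure strictly, and—just as in the holomorphic case—the kernel pairs nontrivially with the surviving fields, so one must produce the homotopy $h$ witnessing $\omega_{\cT^{\bQ}} - \Phi^*\omega_{\mathbf{T}} = [\QBV + \rho^{(1)}(\bQ), h]$ by hand. The bookkeeping is more involved than in \S\ref{sec:(1,0)twist} because the deformation now mixes the tensor-multiplet summand $\chi(2)$ with the hypermultiplet/$\beta\gamma$ summand: the bivector $Q'$ acts across the two summands, so the acyclic subcomplexes to be cancelled are no longer confined to one factor. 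Organizing the calculation by the $\MU(3)$-weight grading (as in Figures~\ref{fig:(2,0)holtwist} and~\ref{fig:(2,0)holtwistalt}) and by the $\d w_1, \d w_2, \d z$ leg structure should make the cancellations manageable, but verifying that every term in $\omega_{\cT^{\bQ}} - \Phi^*\omega_{\mathbf{T}}$ is accounted for is the delicate part. A secondary, more bureaucratic issue is checking that $Q'$ has weight $+1$ under the regrading $\Tilde\alpha$—i.e.\ that $\bQ$ is genuinely homogeneous of cohomological degree $1$ after regrading—which follows from~\eqref{Qbivector} but should be stated explicitly so that $\cT^{\bQ}_{(2,0)}$ is $\ZZ$-graded.
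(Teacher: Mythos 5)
Your overall strategy --- compute the holomorphic twist first, then deform the small model $\chi(2)\oplus\beta\gamma(\CC)$ by the transferred action of $Q'$ and cancel acyclic pairs --- is the same as the paper's. But the step you wave at with ``exploiting homotopy-invariance of deformations as in the standard twisting yoga'' is exactly where the real content lies, and as written your proposal has a gap there. The naive transfer of $\rho_Q(Q')$ along the quasi-isomorphism $\Phi$ does \emph{not} preserve the shifted presymplectic structure on $\chi(2)\oplus\beta\gamma(\CC)$, precisely because $\Phi$ itself only preserves the pairings up to homotopy. The paper's fix (Lemma~\ref{lem:hx}) is to correct the action \emph{before} deforming: one replaces $\rho_Q(\pi)$ by $\rho_Q(\pi)+[Q_{\rm BV}+Q,H_\pi]$ with the explicit homotopy $H_\pi(\beta^{2,0})=\langle\pi,\beta^{2,0}\rangle$, and only the corrected endomorphism $\Tilde\rho_Q(\pi)$ is strictly presymplectic (Proposition~\ref{prop:modifiedsusy}). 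Crucially, this homotopy is not mere bookkeeping: it is what \emph{generates} the surviving differential $\Tilde\rho^{(1)}_Q(\pi)\mathsf{A}=\langle\pi,\partial\mathsf{A}\rangle$, i.e.\ the map $\Pi\circ\partial:\Omega^{\leq 1,\bu}/(\d z)\to\Omega^{0,\bu}$ appearing in the statement of the theorem. In your proposal this differential has no identified source --- the supersymmetry transformations of \S\ref{ssec:module} evaluated naively on $Q'$ do not produce it --- so you cannot actually arrive at $\mathbf{T}$ as stated. Deferring all presymplectic corrections to a single terminal homotopy $h$, as you suggest, is not obviously impossible, but you would still need to discover the $H_\pi$-generated term just to know which complex you are projecting onto.

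A second, more concrete error: your description of the cancellation on the $\beta\gamma$ summand is backwards. The $\gamma$-field $\Omega^{0,\bu}(\CC^3)$ is \emph{not} absorbed into the antifields; it survives as the second term of $\mathbf{T}$, namely the target of $\Pi\circ\partial$. What gets cancelled is the $\beta$-field $\Omega^{3,\bu}(\CC^3)$, via the isomorphism $\langle\Pi,\cdot\rangle:\Omega^{3,\bu}\to\d z\cdot\Omega^{0,\bu}\subset\Omega^{1,\bu}$ onto the $\d z$-legged component of the one-form fields --- this single cancellation is simultaneously why the quotient by $(\d z)$ appears and why $\beta$ disappears. Your closing point about checking that $Q'$ is homogeneous of degree $+1$ under $\Tilde\alpha$ is correct and worth making explicit, but it is the easy part of the argument.
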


In the statement of the theorem, the $(-1)$-shifted presymplectic form on $\bT$ is 
\[
(\alpha, \alpha ') \mapsto \int \alpha \partial_{\CC_z} \alpha' .
\]
Here, $\partial_{\CC}$ denote the holomorphic de Rham operator on $\CC$. 

\begin{figure}
  \begin{equation*}
    \begin{tikzcd}[row sep = 3 em, column sep = 3 em]
      -2 & -1 &  0 & 1 & 2 & 3 \\ \hline
       & &  & \Omega^{0,3} \ar[rd] &  \\
       & & \Omega^{0,2} \ar[ru] \ar[rd] & \Omega^{3,3} \ar[r, dotted, "{\<\Pi,\cdot\>}"] & \Omega^{1,3}  \ar[dr, dashed, "\partial_{\CC^2}"]  \\
       & \Omega^{0,1} \ar[rd] \ar[ru] & \Omega^{3,2} \ar[r, dotted, "{\<\Pi,\cdot\>}"] \ar[ru] & \Omega^{1,2}  \ar[dr, dashed, "\partial_{\CC^2}"]  \ar[ru] & & \Omega^{0,3} \\
       \Omega^{0,0} \ar[ru] \ar[rd] & \Omega^{3,1} \ar[r, dotted, "{\<\Pi,\cdot\>}"] \ar[ru] & \Omega^{1,1} \ar[dr, dashed, "\partial_{\CC^2}"] \ar[ru] & & \Omega^{0,2} \ar[ru] \\
       \Omega^{3,0} \ar[ru] \ar[r,dotted, "{\<\Pi,\cdot\>}"]  & \Omega^{1,0} \ar[dr, dashed, "\partial_{\CC^2}"]\ar[ru] & & \Omega^{0,1} \ar[ru] &  & \\
       & & \Omega^0 \ar[ru] & & & \\
    \end{tikzcd}
  \end{equation*}
  \caption{The solid arrows represent the holomorphic twist of the $\cN=(2,0)$ theory.
The dashed and dotted arrows represent the action by the supercharge $Q'$ which deforms the minimal twist to the non-minimal twist.}
  \label{fig:nonmin}
\end{figure}
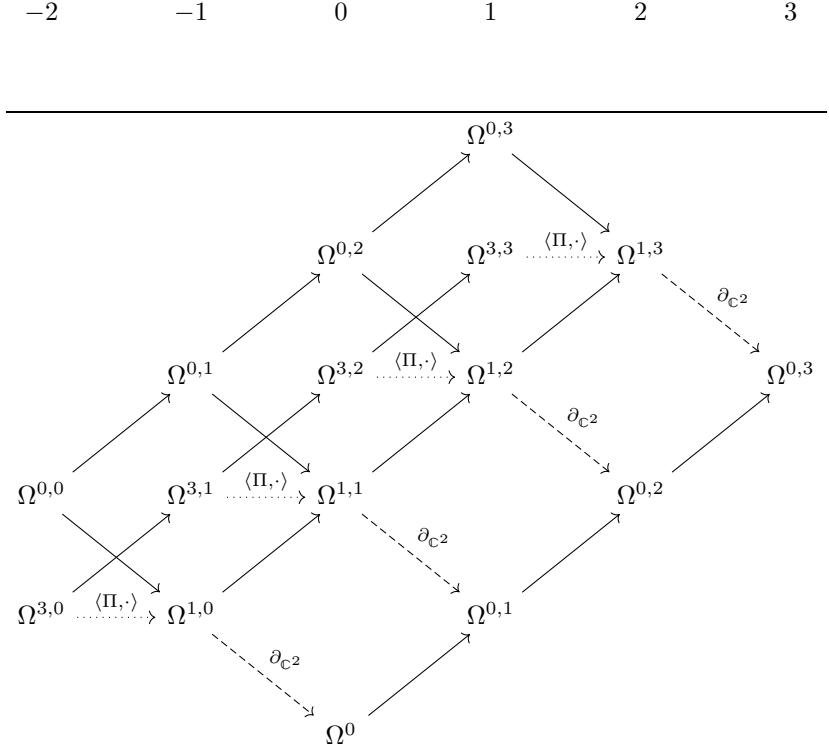

\begin{rmk}
In the description of the non-minimal twist in (\ref{eqn:nm}), we have used the Calabi--Yau form $\d w_1 \d w_2 \d z$ on $\CC^3$ and the fact that $\<\Pi, \d w_1 \d w_2 \d z\> = \d z$. 
On a general, not necessarily Calabi--Yau, three-fold $X$ equipped with a bivector $\Pi \in {\rm PV}^{2,hol}(X)$ we can write this description more invariantly as
\[
\begin{tikzcd}
\Omega^{\leq 1, \bu}(X) / ({\rm Im} (\Pi)) \; [2] \ar[r, "\Pi \circ \partial"] &  \Omega^{0,\bu}(X) 
\end{tikzcd}
\]
Here, ${\rm Im}(\Pi) \subset T^*_X$ is the image of the bundle map $\Pi : \wedge^3 T_X^* \to T^*_X$. 

In~\eqref{Qbivector}, we could have alternatively chosen the deformation to a non-minimal supercharge $Q'$ to be the data of a holomorphic one-form.
On $\CC^3$, these lead to equivalent non-minimal twists.
Globally, however, the data of a non-vanishing holomorphic one-form $\eta \in \Omega^{1,hol}(X)$ leads to the following description of the twist:
\[
\begin{tikzcd}
\Omega^{\leq 1, \bu}(X) / (\eta) \; [2] \ar[r, "\eta \wedge \partial"] &  \Omega^{3,\bu}(X) .
\end{tikzcd}
\]
Here, $\Omega^{\leq 1, \bu}(X) / (\eta)$ denotes the quotient of $\Omega^{\leq 1, \bu}(X)$ by the subspace $\eta \Omega^{0,\bu}(X)$.

We pose the question as to what extent these global descriptions of the non-minimal twist of the $(2,0)$ theory depend on the holomorphic data of a bivector or one-form respectively.  
\end{rmk}

\subsection{Symmetries of the holomorphic twist}

The first step in the proof of Theorem \ref{thm:nm} is to exhibit the action of the non-minimal deformation $Q'$ on the description of the minimal, holomorphic, twist of the $(2,0)$ theory given in the previous section.
We will use the description of Proposition \ref{prop:alternate} of the minimal twist in terms of the theory of the chiral two-form and the $\beta\gamma$ system.
 
At the level of the twisted theory we break the symmetry by the $(2,0)$ super Poincar\'{e} algebra $\sp{2}$ to its $Q$-cohomology.
Upon regrading and applying the twisting data of \S \ref{sec:alternate} to the $Q$-cohomology of $\sp{2}$, this gives us an action of a $\ZZ$-graded Lie algebra $\sp{2}^Q$ on the holomorphic twist of the $(2,0)$ theory. 

Let $\fg \subset \sp{2}$ be the $\ZZ/2$-graded sub Lie algebra consisting of the holomorphic translations plus the portion in blue in Equation (\ref{eqn:s6s2}).
Upon regrading and deforming by $Q$, we obtain a subalgebra $\fg^Q \subset \sp{2}^Q$. 

If $L$ is a complex three-dimensional vector space spanned by the symbols $\partial_z, \partial_{w_1}, \partial_{w_2}$, then as a $\ZZ$-graded vector space we have
\[
\fg^Q = L^* [1] \oplus L \oplus \wedge^2 L [-1] .
\]
whose elements we will denote by $(\eta, v, \pi)$. 
This space has a $\ZZ$-graded Lie algebra structure whose bracket is defined by the $\U(L)$-invariant pairing of $L^*$ with $\wedge^2 L$ as in $[\eta, \pi] = \<\eta, \pi\> \in L$. 

In the notation of \S \ref{sec:holtwist}, the minimal $Q$-twist of the $(2,0)$ theory, using the twisting data $(\Tilde{\phi}, \Tilde{\alpha})$, was denoted $\Tilde{\cT}_{(2,0)}^Q$. 
In Proposition \ref{prop:alternate} we have shown that $\Tilde{\cT}^Q_{(2,0)}$ is equivalent to the presymplectic BV theory $\chi(2) \oplus \beta\gamma(\CC)$ through an explicit quasi-isomorphism
\[
\Phi : \Tilde{\cT}^Q_{(2,0)} \to \chi(2) \oplus \beta\gamma(\CC) .
\]

As an element of $\sp{2}$ that commutes with $Q$, the deformation $Q'$ acts on $\Tilde{\cT}^Q_{(2,0)}$. 
Naively, one could transfer the action of $Q'$ along $\Phi$.
However, since $\Phi$ does not {\em strictly} preserve the presymplectic form, what results is an action of $Q'$ on $\chi(2) \oplus \beta\gamma(\CC)$ that {\em does not} preserve the shifted presymplectic pairing. 

By pulling back along $\Phi$, the $(-1)$-shifted presymplectic form $\omega_{\chi} + \omega_{\beta\gamma}$ on $\chi(2) \oplus \beta\gamma(\CC)$ defines a $(-1)$-shifted presymplectic form on $\Tilde{\cT}^Q_{(2,0)}$. 
Since we know the two shifted presymplectic structures are equivalent, we know abstractly that given any symmetry of the $(2,0)$ theory that is compatible with the holomorphic supercharge and the original presymplectic structure $\omega_{\cT}$, that we can find a homotopy equivalent symmetry that preserves $\Phi^* (\omega_\chi + \omega_{\beta\gamma})$. 

For symmetries arising from the sub Lie algebra $\fg \subset \sp{2}$, we have the following explicit result. 

\begin{lem}\label{lem:hx}
Let $X \in \fg \subset \sp{2}$ and denote by $\rho_Q(X)$ the associated endomorphism of $\Tilde{\cT}^Q_{(2,0)}$ of $\Z$-degree $|X|$. 
There exists a degree-$(|X|-1)$ endomorphism $H_X$ of $\Tilde{\cT}^Q_{(2,0)}$ such that  
\beqn\label{eqn:hx}
\Phi\left(\rho_Q (X) + [Q_{\rm BV} + Q, H_X]\right)
\eeqn
strictly preserves the $(-1)$-presymplectic form $\omega_\chi + \omega_{\beta\gamma}$. 
\end{lem}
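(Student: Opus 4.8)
The plan is to produce the homotopy $H_X$ by a systematic ``naturality'' argument, using the fact that the discrepancy between $\rho_Q(X)$ acting on $\Tilde\cT^Q_{(2,0)}$ and a $\Phi$-compatible symmetry is controlled by the homotopy $h$ (and its hypermultiplet analogue) witnessing $\omega_\cT - \Phi^*(\omega_\chi + \omega_{\beta\gamma}) = (Q_{\rm BV}+Q)h$, which was constructed explicitly in Proposition~\ref{prop:pairings} (together with the straightforward strict statement on the hypermultiplet summand). The key observation is that $\rho_Q(X)$ preserves $\omega_\cT$ strictly (this follows from Theorem~\ref{thm:Linfinity}, since $\rho^{(1)}$ and $\rho^{(2)}$ strictly preserve $\omega_{\cT_{(2,0)}}$, and twisting and regrading leave the pairing untouched). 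Therefore, on the nose,
\deq{
  \rho_Q(X)^* \left( \Phi^*(\omega_\chi + \omega_{\beta\gamma}) \right) - \Phi^*(\omega_\chi + \omega_{\beta\gamma})
  = \rho_Q(X)^*\left(\omega_\cT - (Q_{\rm BV}+Q)h\right) - \left(\omega_\cT - (Q_{\rm BV}+Q)h\right),
}
and since $\rho_Q(X)$ commutes with the total differential and preserves $\omega_\cT$, the right-hand side equals $-(Q_{\rm BV}+Q)\left(\rho_Q(X)^* h - h\right)$, i.e.\ an exact term in the complex of two-forms. This means the Lie derivative of $\Phi^*(\omega_\chi+\omega_{\beta\gamma})$ along $\rho_Q(X)$ is $(Q_{\rm BV}+Q)$-exact, witnessed by the two-form $k_X := \rho_Q(X)^* h - h$.

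The next step is to convert this exactness of a Lie derivative of a (pre)symplectic form into the existence of a correcting endomorphism $H_X$. Because $\Phi^*(\omega_\chi+\omega_{\beta\gamma})$ is built from a differential operator pairing on a linear space of fields, a two-form $k_X$ of the appropriate degree that is exact can be ``integrated'': one writes $k_X = \iota_{H_X}\left(\Phi^*(\omega_\chi+\omega_{\beta\gamma})\right)$ for some (local, linear) endomorphism $H_X$, using that the pairing, although degenerate, is nondegenerate enough on the relevant subspaces — concretely, the components of $h$ in Proposition~\ref{prop:pairings} pair acyclic pieces against each other and against $\chi(2)$, and $k_X$ inherits this structure. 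Then the identity $\Lie_{\rho_Q(X)}\left(\Phi^*\omega\right) = (Q_{\rm BV}+Q)k_X = (Q_{\rm BV}+Q)\iota_{H_X}\omega$ combined with the Cartan-type formula $\Lie_{[Q_{\rm BV}+Q,\,H_X]}\omega = (Q_{\rm BV}+Q)\iota_{H_X}\omega - (-1)^{|X|}\iota_{H_X}(Q_{\rm BV}+Q)\omega$ and the fact that $\omega$ is closed under the total differential shows that $\rho_Q(X) + [Q_{\rm BV}+Q, H_X]$ has vanishing Lie derivative on $\omega$, i.e.\ it strictly preserves $\omega_\chi+\omega_{\beta\gamma}$ after transport along $\Phi$. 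Care is needed with signs, since $|X|$ can be odd (the blue supercharges contribute in odd parity), but these are bookkeeping.

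I would organize the argument so that it applies uniformly to all $X\in\fg$: since $\fg^Q = L^*[1]\oplus L\oplus \wedge^2 L[-1]$ is small and each graded piece acts by an explicit local operator (holomorphic translations, the bivector contraction, and the dual one-form wedge), one can simply exhibit $H_X$ piece by piece, reading off its components from $k_X = \rho_Q(X)^*h - h$ using Table~\ref{tab:Phi} and equation~\eqref{eq:homotopy}. The main obstacle I anticipate is \emph{locality} together with \emph{degeneracy}: one must check that the correcting endomorphism $H_X$ can indeed be chosen to be a differential operator (not merely an abstract linear map), and that the degeneracy of $\omega_\chi+\omega_{\beta\gamma}$ does not obstruct solving $k_X = \iota_{H_X}\omega$ — i.e.\ that $k_X$ has no component pairing $\ker(\omega)$ with $\ker(\omega)$ that would force $H_X$ out of the space of allowed endomorphisms. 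This should follow because $k_X$ is built from $h$, which by construction only involves the acyclic Z-shaped subcomplexes and their pairing into $\chi(2)$, all of which lie in the ``nondegenerate directions'' relative to the presymplectic structure; verifying this concretely (a finite check over the summands in Figures~\ref{fig:(2,0)holtwist} and~\ref{fig:(2,0)holtwistalt}) is the technical heart of the lemma.
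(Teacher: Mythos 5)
Your route is genuinely different from the paper's, and the organizing idea is a nice one: since $\rho_Q(X)$ strictly preserves $\omega_\cT$ (Theorem~\ref{thm:Linfinity}) and $\omega_\cT - \Phi^*(\omega_\chi+\omega_{\beta\gamma}) = (Q_{\rm BV}+Q)h$, the failure of $\rho_Q(X)$ to preserve the pulled-back form is automatically $(Q_{\rm BV}+Q)$-exact, witnessed by $k_X = L_{\rho_Q(X)}h$. The paper does none of this; it simply exhibits $H_X$ case by case over the three graded pieces of $\fg^Q = L^*[1]\oplus L\oplus\wedge^2L[-1]$ (taking $H_X=0$ on translations and, e.g., $H_\pi(\beta^{2,0}) = \langle\pi,\beta^{2,0}\rangle$ for the bivector directions), then computes the resulting endomorphism and checks preservation directly. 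Your argument, if completed, would explain \emph{why} such an $H_X$ must exist rather than verifying that a guess works.

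However, as written there is a genuine gap, and it sits exactly where you flag it: the passage from the $D$-exact two-form $k_X$ to an endomorphism $H_X$ with $L_{H_X}(\Phi^*\omega) = -k_X$. Because the pairing is degenerate, the map $H \mapsto L_H(\Phi^*\omega)$ from local endomorphisms to local two-forms is not surjective — any two-form in its image vanishes when both arguments lie in $\ker(\Phi^*\omega)$ — so solvability is a nontrivial condition on $k_X$ that must be verified component by component against the explicit $h$ of Proposition~\ref{prop:pairings} and the action of $\rho_Q(X)$. You defer precisely this check, but it is the entire content of the lemma; without it the argument establishes only that the obstruction is exact, not that it can be killed by a bracket $[Q_{\rm BV}+Q, H_X]$. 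A second, smaller omission: the lemma's conclusion concerns the endomorphism \emph{induced on the target} $\chi(2)\oplus\beta\gamma(\CC)$, so besides solving the moment-map-type equation you must also arrange that $\rho_Q(X) + [Q_{\rm BV}+Q,H_X]$ descends along $\Phi$ (preserves $\ker\Phi$); preserving the degenerate pullback $\Phi^*(\omega_\chi+\omega_{\beta\gamma})$ on the source does not by itself guarantee this. Both points are resolved in the paper only by the explicit computation that your proposal postpones.
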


We denote by $\Tilde{\rho}_Q(X)$ the endomorphism $\Phi(\rho_Q (X) + [Q_{\rm BV} + Q, H_X])$ of $\chi(2) \oplus \beta\gamma(\CC)$. 

\begin{proof}
If $X$ is a holomorphic translation we can simply take $\Tilde{\rho}_Q(X) = \rho_Q(X)$ and $H_X = 0$. 

Suppose that $X \in \sp{2}$ becomes an element $\pi \in \wedge^2 L \subset \fg^Q$ under the twisting homomorphism $\Tilde{\phi}$. 
For clarity, we will denote the holomorphic decomposition of the bosons and fermions in the $(1,0)$ hypermultiplet by $\nu^{\bu,\bu}$ and $\lambda^{\bu,\bu}$ respectively. 
We denote by $\beta^{\bu,\bu}, \phi^{\bu,\bu}, \psi^{\bu,\bu}_-$ elements of the holomorphically decomposed $(1,0)$ tensor multiplet. 
The homotopy is defined by
\[
H_\pi (\beta^{2,0}) = \<\pi, \beta^{2,0}\> \in \Omega^{0,0}_\nu .
\]

One can compute that the resulting endomorphism $\Phi\left(\rho_Q (\pi) + [Q_{\rm BV} + Q, H_\pi]\right)$ is given by the dotted and dashed arrows of Figure~\ref{fig:nonmin}. 
This is readily seen to preserve the shifted presymplectic structure.

For an element $X \in \fg$ which  becomes an element $\eta \in L^* \subset \fg^Q$ under the twisting homomorphism $\Tilde{\phi}$ the definition of the homotopy is similar. 
\end{proof}

To comment briefly on this calculation, we refer again to the dotted and dashed arrows in Figure~\ref{fig:nonmin}. While most of these originate in standard supersymmetry transformations of the untwisted theory, three are subtle: the leftmost dotted arrow, which carries a physical scalar field to a one-form ghost of the two-form field; the leftmost dashed arrow, which carries a one-form ghost to a physical scalar field via a differential operator; and the dashed arrow third from left, a component of which carries a physical fermi field to a fermi antifield, but is not part of the original BV differential.  

These three mysterious arrows have three different origins. The first is generated by the $L_\infty$ closure term $\rho^{(2)}_\Scalar$. The others, though, are of (untwisted) BV degree $+1$, and so cannot originate from any subtleties of the module structure. The third term is, in fact, generated by the projection map $\Phi$ in~\eqref{eqn:phihyper}, used in computing the twist of the $(1,0)$ hypermultiplet; the second term is generated by the homotopy $H_\pi$ discussed above, which replaces the fermi field $\psi^{1,0}$---whose antifield is \emph{not} eliminated after the holomorphic twist---by its ``new'' antifield $\del \beta^{1,0}$. 

The endomorphisms $\rho_Q(X)$ and $\rho_Q(X) + [Q_{\rm BV} + Q, H_X]$ are homotopy equivalent.
Thus, as a consequence of this proposition, we obtain an equivalent action of $\fg^Q$ on $\Tilde{\cT}^Q_{(2,0)}$ which is presymplectic upon applying the quasi-isomorphism $\Phi$. 
This action is described explicitly in Proposition \ref{prop:modifiedsusy} below, which we take a brief moment to foreground. 
 
As a graded vector space, $\chi(2) \oplus \beta\gamma(\CC)$ decomposes as
\[
(\mathsf{c} , \mathsf{A} , \beta , \gamma) \in \Omega^{0, \bu}[2] \oplus \Omega^{1,\bu} [1] \oplus \Omega^{3, \bu}[2] \oplus \Omega^{0,\bu}  .
\]
The first two components comprise the theory $\chi(2)$ and the second two comprise the $\beta\gamma$ system. 

Recall, there is the internal $\dbar$ differential and also the differential $\mathsf{c} \mapsto \partial \mathsf{c} = \mathsf{A}$ where $\partial$ is the holomorphic de Rham operator on $\CC^3$. 
As in the case of the full supersymmetry algebra, the action of $\fg^Q$ on $\Tilde{\cT}^Q_{(2,0)}$ through $\Tilde{\rho}_Q$ is an action only up to homotopy.
We decompose $\Tilde{\rho}_Q = \Tilde{\rho}^{(1)}_Q + \Tilde{\rho}^{(2)}_Q$ where $\Tilde{\rho}_Q^{(1)}$ is linear and $\Tilde{\rho}^{(2)}_Q$ is quadratic in $\fg^Q$. 
Unpacking the action of supersymmetry given in \S \ref{sec:susy} we obtain the following description of the action of $\fg^Q$ at the level of the holomorphic twist. 

\begin{prop}\label{prop:modifiedsusy}
The action $\Tilde{\rho}_Q = \Tilde{\rho}^{(1)}_Q + \Tilde{\rho}^{(2)}_Q$ of $\fg^Q$ on $\chi(2) \oplus \beta\gamma(\CC)$ is given by:
\begin{itemize}
\item[(1)] the linear term $\Tilde{\rho}^{(1)}_Q$ is defined on holomorphic translations $v \in L$ by $\Tilde{\rho}^{(1)}_Q (v) \alpha = L_v\alpha$, where $\alpha$ is any field.
On the remaining part of the algebra, $\Tilde{\rho}^{(1)}_Q$ is
\[
\begin{array}{ccccccccccccc}
\Tilde{\rho}^{(1)}_Q (\eta) \mathsf{A} & = & \eta \wedge \partial \mathsf{A} & \in & \Omega^{3,\bu}_{\beta} & , & \Tilde{\rho}^{(1)}_Q(\pi) \mathsf{A} & = & \<\pi , \partial \mathsf{A}\> & \in & \Omega^{0,\bu}_\gamma \\ 
\Tilde{\rho}^{(1)}_Q(\eta) \gamma & = & \eta \wedge \gamma & \in & \Omega^{1,\bu}_\mathsf{A} & , & \Tilde{\rho}^{(1)}_Q(\pi) \beta & = & \<\pi , \beta\> & \in & \Omega^{1,\bu}_{\mathsf{A}} .
\end{array} 
\]
Whenever it appears, the symbol $\<\cdot , \cdot\>$ refers to the obvious $\U(L)$-invariant pairing. 

\item[(2)] 
The quadratic term $\Tilde{\rho}^{(2)}$ is given by
\[
\begin{array}{ccccc}
\Tilde{\rho}^{(2)}_Q (\eta \otimes \pi) \mathsf{A} & = & \iota_{\<\eta, \pi\>} \mathsf{A} & \in & \Omega^{0,\bu}_{\mathsf{c}}.
\end{array}
\]
\end{itemize}
\end{prop}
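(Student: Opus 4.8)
The plan is to compute $\Tilde{\rho}_Q$ directly, by pushing the $L_\infty$ module structure of \S\ref{ssec:module} through the holomorphic twist and then through the quasi-isomorphism $\Phi$ together with the homotopies $H_X$ supplied by Lemma~\ref{lem:hx}. By that lemma, $\Tilde{\rho}_Q(X) = \Phi\bigl(\rho_Q(X) + [Q_{\rm BV}+Q, H_X]\bigr)$, and since $\Phi$ (from Proposition~\ref{prop:alternate}, restricted to the hypermultiplet via~\eqref{eqn:phihyper}) and the $H_X$ are all explicit, the computation is mechanical; the work lies in organizing it degree by degree in $\fg^Q = L^*[1]\oplus L\oplus\wedge^2 L[-1]$, and in checking that no higher transfer terms appear.

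First I would dispose of the holomorphic translations $v\in L$. These act trivially on the $R$-symmetry, so they are untouched by the twisting homomorphism and act on every field by the Lie derivative $L_v$, exactly as in \S\ref{ssec:module}. Since $L_v$ strictly commutes with $Q_{\rm BV}+Q$ and strictly preserves the presymplectic pairing, Lemma~\ref{lem:hx} permits $H_v = 0$, so $\Tilde{\rho}_Q(v) = L_v$ on $\chi(2)\oplus\beta\gamma(\CC)$; as there are no closure terms among translations, they contribute nothing to $\Tilde{\rho}^{(2)}_Q$.

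Next I would treat the non-minimal direction $\pi\in\wedge^2 L$, i.e.\ the blue supercharges of~\eqref{eqn:s6s2}, which after the twisting homomorphism $\Tilde\phi$ land in $L\otimes\det(L)^{-1}$ by~\eqref{Qbivector}. I would restrict the linear variation $\rho^{(1)}$---the sum of $\rho_\Fermi$, $\rho_\Scalar$, $\rho_\thy$ and their anti-maps from \S\ref{sec:physical}---and the quadratic term $\rho^{(2)}$ of Theorem~\ref{thm:Linfinity} to the fields surviving the holomorphic twist, using the holomorphic decompositions of \S\ref{sec:holtwist}; then add the correction $[Q_{\rm BV}+Q, H_\pi]$ with $H_\pi(\beta^{2,0}) = \langle\pi,\beta^{2,0}\rangle$ as in Lemma~\ref{lem:hx}; then push forward along $\Phi$. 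The surviving components should assemble into precisely the dotted and dashed arrows of Figure~\ref{fig:nonmin}, namely $\Tilde{\rho}^{(1)}_Q(\pi)\mathsf{A} = \langle\pi,\partial\mathsf{A}\rangle\in\Omega^{0,\bu}_\gamma$ and $\Tilde{\rho}^{(1)}_Q(\pi)\beta = \langle\pi,\beta\rangle\in\Omega^{1,\bu}_{\mathsf{A}}$, all other components vanishing. The one genuinely new closure term, $\Tilde{\rho}^{(2)}_Q(\eta\otimes\pi)\mathsf{A} = \iota_{\langle\eta,\pi\rangle}\mathsf{A}$, I would obtain by combining the scalar closure term $\rho^{(2)}_\Scalar$---which after twisting contributes contraction by $[\eta,\pi]\in L$, just as $\rho^{(2)}_\chi$ does in \S\ref{ssec:module}---with the homotopy terms $[Q_{\rm BV}+Q, H_{[\eta,\pi]}]$ versus $H_\eta\circ H_\pi$, checking that the net effect is $\Lie_{\langle\eta,\pi\rangle}$ modulo $\Tilde{\rho}^{(1)}_Q([\eta,\pi])$, which isolates $\iota_{\langle\eta,\pi\rangle}$. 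The case $\eta\in L^*[1]$ is entirely parallel: trace the relevant transformation, add $H_\eta$, push forward by $\Phi$, obtaining $\Tilde{\rho}^{(1)}_Q(\eta)\mathsf{A} = \eta\wedge\partial\mathsf{A}$ and $\Tilde{\rho}^{(1)}_Q(\eta)\gamma = \eta\wedge\gamma$. Finally, that there are no $\Tilde{\rho}^{(j)}_Q$ for $j\geq 3$ follows because $\rho^{(j)} = 0$ for $j\geq 3$ and every $H_X$ is linear in the fields, so the homotopy-transfer procedure terminates at quadratic order.

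The main obstacle is the bookkeeping in the middle step: identifying exactly which components of the untwisted variations survive the projection $\Phi$ of~\eqref{eqn:phihyper}, and correctly accounting for the three ``mysterious'' arrows of Figure~\ref{fig:nonmin}, whose origins---the $L_\infty$ term $\rho^{(2)}_\Scalar$, the correction built into $\Phi$ on the hypermultiplet, and the homotopy $H_\pi$, which trades the fermi field $\psi^{1,0}$ for its new antifield $\partial\beta^{1,0}$---are quite different from one another. Getting the fermion degree and chirality conventions consistent across these three contributions is the delicate point; once those are pinned down, $\MU(3)$-equivariance forces every formula in the proposition.
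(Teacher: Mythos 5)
Your proposal matches the paper's (very terse) argument: the paper likewise obtains the formulas by unpacking the $L_\infty$ module structure of \S\ref{ssec:module} through the twist, applying $\Tilde{\rho}_Q(X) = \Phi\bigl(\rho_Q(X) + [Q_{\rm BV}+Q, H_X]\bigr)$ with the homotopies of Lemma~\ref{lem:hx}, and identifying the three subtle arrows of Figure~\ref{fig:nonmin} with exactly the sources you name ($\rho^{(2)}_\Scalar$ with one leg on $Q$, the correction in $\Phi$ on the hypermultiplet, and $H_\pi$). The only slight imprecision is attributing the closure term $\iota_{\<\eta,\pi\>}\mathsf{A}$ primarily to $\rho^{(2)}_\Scalar$ rather than to the contraction term $\rho^{(2)}_\chi$ acting on the $\beta^{1,\bullet}$ components of $\mathsf{A}$, but your proposed consistency check (isolating $\iota_{\<\eta,\pi\>}$ from the failure of $\Tilde{\rho}^{(1)}_Q(\eta)$ and $\Tilde{\rho}^{(1)}_Q(\pi)$ to close onto $\Tilde{\rho}^{(1)}_Q([\eta,\pi])$) is precisely the verification the paper records in the remark following the proposition, so this does not affect the argument.
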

Conceptually, as remarked above, the key step in proving this proposition is to observe that the homotopy $H_\pi$ described in Lemma~\ref{lem:hx} generates the transformation $\Tilde{\rho}^{(1)}(\pi)  \mathsf{A} = \<\pi, \partial \mathsf{A}\>$ via~\eqref{eqn:hx}.

\begin{rmk}
It is instructive to verify directly that the action described in the above proposition is an $L_\infty$-action. 
To see this, the key relation to observe is 
\[
\eta \cdot (\pi \cdot \mathsf{A}) - \pi \cdot (\eta \cdot \mathsf{A}) =  \iota_{\<\eta, \pi\>} \partial \mathsf{A} .
\]
for any $\eta, \pi, \mathsf{A}$ where the $\cdot$ denotes the linear action $\Tilde{\rho}_Q^{(1)}$. 
\end{rmk}

We can now give a proof of the main result of this section. 
\begin{proof}[Proof of Theorem \ref{thm:nm}]
The shifted presymplectic action of $X \in \fg^Q$ at the level of the holomorphic twist $\chi(2) \oplus \beta\gamma(\CC)$ is given by $\Tilde{\rho}_Q (X)$. 

The non-minimal deformation $Q'$ determines a nontrivial element in $\wedge^2 L [-1] \subset \fg^Q$ that we identify with $\partial_{w_1} \wedge \partial_{w_2}$.
Schematically, the action $\Tilde{\rho}_{Q}(Q')$ is given by the dotted and dashed arrows in Figure \ref{fig:nonmin}.
Let $\Tilde{Q}_{\rm BV}$ denote the solid arrows in this figure, which describes the linear BV differential of the holomorphic twist. 

We observe that each of the dotted arrows is of the form
\[
\<\Pi, -\> : \Omega^{3, \bu}(\CC^3) \to \Omega^{1, \bu} (\CC^3) .
\]
If we decompose $\Omega^{1,\bu}(\CC^3)$ as $\Omega^{1,\bu}(\CC^2_w) \mathbin{\Hat{\otimes}} \Omega^{0,\bu}(\CC_z) \oplus \Omega^{0,\bu}(\CC^2) \mathbin{\Hat{\otimes}} \Omega^{1,\bu}(\CC_z)$ 
we see that this map is an isomorphism onto the second component. 

So, we see that there is a projection from the total complex $\left(\Tilde{\cT}^Q_{(2,0)}, \Tilde{Q}_{\rm BV} + Q' \right)$ to $\bT$ whose kernel is acyclic. 
\end{proof}

\subsection{A refined twisting homomorphism}

We proceed to describe the twisting data that is somewhat more natural to the non-minimal twist. 

Consider the twisting homomorphism is defined by the composition
\[
\phi_{\rm top} : \U(2) \times \U(1) \to \U(3) \times \U(3) \xto{\det^{\frac12} \times \det^{\frac12}}  \U(1) \times \U(1) \xto{(i , (i')^{-1})} \Sp(1) \times \Sp(1)' \hookrightarrow \Sp(2)  .
\]
The first map is the block diagonal embedding of $(A, x) \in \U(2) \times \U(1)$ into $\U(3)$ via $\begin{pmatrix} A & 0 \\ 0 & x \end{pmatrix}$ in the first factor and via $\begin{pmatrix} A & 0 \\ 0 & x^{-1} \end{pmatrix}$ into the second factor. 
Also, $i : \U(1) \to \Sp(1)$ is the unique homomorphism for which $Q$ has weight $+1$ and $i' : \U(1) \to \Sp(1)$ is defined by the polarization determined by $Q'$. 

Additionally, we have the regrading homomorphism 
\[
\alpha_{\rm top} : \U(1) \xto{\rm diag} \U(1) \times \U(1) \xto{i \times i'} \Sp(1) \times \Sp(1)' \subset \Sp(2) .
\]
Note that this is identical to the regrading homomorphism $\Tilde{\alpha}$ of \S \ref{sec:alternate}). 

It is a direct calculation to verify that $\phi_{\rm top}$, $\alpha_{\rm top}$ constitute twisting data for the the nonminimal twisting supercharge ${\bf Q} = Q+Q'$. 

The affect of $\phi_{\rm top}$ is to ``twist" the graded vector space underlying $\bT$ of Theorem \ref{thm:nm} to
\beqn\label{nmalt}
\phi^* \bT = \left(\Omega^{0,\bu}(\CC^2) [2] \oplus \Omega^{1,\bu}(\CC^2)[1] \oplus \Omega^{3,\bu}(\CC^2)\right) \otimes \Omega^{0,\bu}(\CC_z) .
\eeqn
The differential then becomes $\dbar_{\CC^2} + \partial_{\CC^2} + \dbar_{\CC_z}$. 

Moreover, $\phi_{\rm top}$ satisfies the following property. 
First, note that there is a natural embedding $j : \U(2) \hookrightarrow \SO(4)$. 
Unwinding the definition of $\phi_{\rm top}$ above, one finds that it can be factored as
\[
\begin{tikzcd} 
\U(2) \times \U(1) \ar[dr, "j \times 1"'] \ar[rr, "\phi_{\rm top}"] & & {\rm Sp}(2) \\
& \SO(4) \times \U(1) \ar[ur, "\phi_{\rm SO}"'] &
\end{tikzcd}
\]
where $\phi_{\rm SO}$ is the following composition of maps:
\[
  \SO(4) \times \U(1) \xto{\cong} \SU(2) \times \SU(2) \times \U(1) \xto{\pr_{1,3}} \Spin(3) \times \Spin(2) \to \Spin(5) \xto{\cong} \Sp(2) .
\]

The pair $(\phi_{\rm SO}, \alpha)$ constitutes yet another set of twisting data for the non-minimal supercharge $\bQ = Q + Q'$. (To see that the embedding along a short root of $\lie{sp}(2)$ is the correct one, recall that $\bQ$ is defined by the wedge  pairing of a Lagrangian subspace in $R_2$ with a two-dimensional subspace in~$S_+$; the stabilizer of a Lagrangian subspace in $\lie{sp}(n)$ is $\lie{gl}(n)$, embedded along the short roots via the diagram inclusion $A_{n-1} \hookrightarrow C_n$.
The twisting homomorphism $\phi_{\rm SO}$ allows one to identify 
(\ref{nmalt}) with $\Omega^\bu (\RR^4) \otimes \Omega^{0,\bu}(\CC_z)[2]$. 
As a consequence, we have the following. 

\begin{prop} \label{thm:nmSO4}
Using the twisting data $(\phi_{\rm SO}, \alpha_{\rm SO})$, the $\bQ$-twist of the $(2,0)$ tensor multiplet $\cT^{\bf Q}_{(2,0)}$ is equivalent to the free presymplectic BV theory whose BV complex of fields is
\[
  \Omega^{\bu} (\RR^4) \mathbin{\Hat{\otimes}} \Omega^{0,\bu}(\CC) [2] 
\]
and whose $(-1)$-shifted presymplectic structure is given in (\ref{presymplecticnm}). 
The equivalence is $\SO(4) \times \U(1)$-equivariant. 
\end{prop}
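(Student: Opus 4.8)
The plan is to deduce the statement from Theorem~\ref{thm:nm} together with the structural properties of the refined twisting homomorphism~$\phi_{\rm top}$ established above, in three moves: re-present $\bT$ in the equivariant bundle frame adapted to~$\phi_{\rm top}$; reassemble the Dolbeault pieces along the four-manifold factor into the de Rham complex; and use the factorization of $\phi_{\rm top}$ through $\SO(4)\times\U(1)$ to promote that identification to an $\SO(4)\times\U(1)$-equivariant one.

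First I would invoke Theorem~\ref{thm:nm}: with the twisting data $(\Tilde\phi,\Tilde\alpha)$, the twist $\cT^{\bQ}_{(2,0)}$ is equivalent, as a presymplectic BV theory on $\CC^3$, to $\bT$ of~\eqref{eqn:nm}, with $(-1)$-shifted presymplectic form $(\alpha,\alpha')\mapsto\int\alpha\,\partial_{\CC_z}\alpha'$. Since $\alpha_{\rm SO}=\alpha_{\rm top}=\Tilde\alpha$ and the presymplectic pairing is untouched by the choice of twisting homomorphism, passing from $\Tilde\phi$ to $\phi_{\rm top}$ is only a change of equivariant bundle presentation: I would decompose every Dolbeault form appearing in $\bT$ along $\CC^3=\CC^2_w\oplus\CC_z$, apply the $\det^{\frac12}$-regradings prescribed by $\phi_{\rm top}$, and track the three ``layers'' of $\bT$ together with the operator $\Pi\circ\partial$, whose contraction along $Q'=\partial_{w_1}\wedge\partial_{w_2}$ reduces to the holomorphic de Rham operator $\partial_{\CC^2}$ once the $\d z$-direction has been quotiented out. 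This is exactly the computation recorded in~\eqref{nmalt}: $\phi^*\bT$ is $\bigl(\Omega^{0,\bu}(\CC^2)[2]\oplus\Omega^{1,\bu}(\CC^2)[1]\oplus\Omega^{2,\bu}(\CC^2)\bigr)\mathbin{\Hat{\otimes}}\Omega^{0,\bu}(\CC_z)$, with total differential $\dbar_{\CC^2}+\partial_{\CC^2}+\dbar_{\CC_z}$. Because $\partial_{\CC_z}$ and the integration pairing involve only the $\CC_z$ factor, which $\phi_{\rm top}$ leaves intact up to a $\U(1)$-action under which everything is equivariant, the presymplectic form transports verbatim.

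Next I would observe that $\Omega^{0,\bu}(\CC^2)[2]\oplus\Omega^{1,\bu}(\CC^2)[1]\oplus\Omega^{2,\bu}(\CC^2)$, equipped with the total differential $\partial_{\CC^2}+\dbar_{\CC^2}$, is nothing but the complexified de Rham complex of $\CC^2$, regraded by total form degree and shifted by $[2]$; that is, it is $\Omega^{\bu}(\RR^4)[2]$ with the de Rham differential $\d$, an identification that is manifestly $\U(2)$-equivariant. Now invoke the factorization $\phi_{\rm top}=\phi_{\rm SO}\circ(j\times1)$ with $j:\U(2)\hookrightarrow\SO(4)$: since $\phi_{\rm top}$ factors through $\SO(4)\times\U(1)$, the $\U(2)\times\U(1)$-equivariant presentation~\eqref{nmalt} is the restriction along $j\times1$ of an $\SO(4)\times\U(1)$-equivariant one, and the de Rham complex is precisely the $\SO(4)$-equivariant object reducing to it --- the span of all $(p,q)$-forms on $\CC^2$ being the span of all de Rham forms on $\RR^4$, independently of the complex structure. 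Tensoring with the untouched factor $\Omega^{0,\bu}(\CC_z)$ yields $\Omega^{\bu}(\RR^4)\mathbin{\Hat{\otimes}}\Omega^{0,\bu}(\CC)[2]$ with differential $\d_{\RR^4}+\dbar_\CC$ and presymplectic form $(\alpha,\alpha')\mapsto\int\alpha\,\partial_\CC\alpha'$, all three being $\SO(4)\times\U(1)$-equivariant on the nose. It remains to confirm --- a direct check using $\bQ=Q+Q'$ and the diagram inclusion $\lie{gl}(2)\hookrightarrow\lie{sp}(2)$ along the short roots --- that $(\phi_{\rm SO},\alpha_{\rm SO})$ really is admissible twisting data for $\bQ$, which is what licenses reading off the twist in this frame and yields the asserted equivariance.

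I expect the real work to lie in the first move: faithfully tracking the half-canonical ($\det^{\frac12}$) regradings through the change of twisting homomorphism, so as to verify that in~\eqref{nmalt} the four-manifold factor carries genuinely \emph{untwisted} $(p,q)$-forms on $\CC^2$, with no residual $K^{\frac12}_{\CC^2}$. That cancellation is what removes any spin-structure dependence from the topological direction --- consistent with the non-minimal twist being an honest de Rham theory on $M^4$ --- and what makes the presymplectic pairing land correctly in densities after the reassembly. Once this bookkeeping is pinned down, the de Rham reorganization, the $\SO(4)$-equivariant lift, and the transport of the pairing are all routine.
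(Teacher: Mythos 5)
Your proposal is correct and follows essentially the same route as the paper: invoke Theorem~\ref{thm:nm}, re-present $\bT$ in the $\phi_{\rm top}$-frame as in~\eqref{nmalt}, identify $\bigl(\Omega^{0,\bu}(\CC^2)[2]\oplus\Omega^{1,\bu}(\CC^2)[1]\oplus\Omega^{2,\bu}(\CC^2),\,\partial+\dbar\bigr)$ with $\Omega^\bu(\RR^4)[2]$, and use the factorization $\phi_{\rm top}=\phi_{\rm SO}\circ(j\times 1)$ for the $\SO(4)\times\U(1)$-equivariance, with the pairing transporting verbatim since it only involves the $\CC_z$ factor. Your $\Omega^{2,\bu}(\CC^2)$ in the last slot is the correct reading of the paper's~\eqref{nmalt} (where $\Omega^{3,\bu}(\CC^2)$ appears to be a typo), and your emphasis on verifying the cancellation of the $\det^{\frac12}$ twists is exactly where the paper's ``direct calculation'' lives.
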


This description of the nonminimal twist makes sense on a manifold of the form $M \times \Sigma$ where $M$ is a smooth four-manifold and $\Sigma$ is a Riemann surface. 

\section{Comparison to Kodaira--Spencer gravity}\label{sec:bcov}

In this section we document a relationship between the twist of the tensor multiplet and a holomorphic theory defined on Calabi--Yau manifolds that has roots in string theory and theories of supergravity. 
This theory, which we will refer to as Kodaira--Spencer theory, is gravitational in the sense that it describes variations of the Calabi--Yau structure; it was first introduced in~\cite{BCOV} as the closed string field theory describing the $B$-twisted topological string on three-folds. 
Work of Costello--Li \cite{CLBCOV1, CLsugra, CLtypeI} has began to systematically exhibit the relationship of Kodaira--Spencer theory on more general manifolds to twists of other classes of string theories and theories of supergravity. 

The main result of this section (Proposition~\ref{prop:bcov}) can be stated heuristically as follows: up to topological degrees of freedom, {\em the theory of the field strengths of the holomorphic twist of the abelian $(2,0)$ tensor multiplet on a Calabi--Yau three-fold is equivalent to the free limit of (minimal) Kodaira--Spencer theory}. 
There is a similar statement for the $(1,0)$ tensor multiplet and a Type I Kodaira--Spencer theory.
This resolves a simple form of a conjecture in Costello--Li in \cite{CLBCOV1}.
This can also be seen as an enhancement of a result of Mari{\~n}o, Minasian, Moore, and Strominger \cite{MMMS}, where it is shown that the equations of motion of the M5 brane theory on a Calabi--Yau three-fold include the Kodaira--Spencer equations of motion.

We consider Kodaira--Spencer theory on a Calabi--Yau three-fold $X$, and we denote by $\Omega$ the nowhere vanishing holomorphic volume form.
Denote by ${\rm PV}^{i,j}(X) = \Gamma(X, \wedge^i T_X \otimes \wedge^j \Bar{T})$ the $j$th term in the Dolbeault resolution of polyvector fields of type $i$. 
The fields of Kodaira--Spencer theory are
\[
\cT_{\rm KS} \overset{\rm def}{=} {\rm PV}^{\bu,\bu}(X)[[t]] [2] .
\]
Here $t$ denotes a formal parameter of degree $+2$. 
The gradings are such that the degree of the component $t^k {\rm PV}^{i,j}$ is $i+j+2k-2$.
The complex of fields carries the differential
\[
Q_{\rm KS} = \dbar + t \partial_{\Omega}
\]
where $\partial_\Omega$ fits into the diagram
\[
\begin{tikzcd}
{\rm PV}^{i,j}(X) \ar[r, "\partial_\Omega"] \ar[d, "\Omega"', "\cong"] & {\rm PV}^{i-1, j}(X) \ar[d, "\Omega", "\cong"'] \\
\Omega^{3-i, j}(X) \ar[r, "\partial"] & \Omega^{4-i,\bu}(X) .
\end{tikzcd} 
\]
for $i \geq 1$. 
Note that $\partial_\Omega$ is an operator of degree $-1$ on $\cT_{\rm KS}$, so that $\dbar + t \partial_\Omega$ is an operator of homogenous degree $+1$. 
The fields of Kodaira--Spencer theory are not the sections of a finite rank vector bundle, but we will pick out certain subspaces of fields which are the sections of a finite rank bundle. 

Kodaira--Spencer theory fits into the BV formalism as a (degenerate) Poisson BV theory \cite{CLBCOV1}.
For a precise definition of a Poisson BV theory see \cite{ButsonYoo}.  
The degree $+1$ Poisson bivector $\Pi_{\rm KS}$ on $\cT_{\rm KS}$ which endows $\cT_{\rm KS}$ with a Poisson BV structure is defined by
\[
  \Pi_{\rm KS}  = (\partial \otimes 1) \delta_{\rm \Delta} \in \Bar{\cT}_{\rm KS}(X) \mathbin{\Hat{\otimes}} \Bar{\cT}_{\rm KS}(X) .
\]
Here, $\delta_{\Delta}$ is the Dirac delta-function on the diagonal in $X \times X$. 

Any Poisson BV theory defines a $\PP_0$-factorization algebra of observables \cite{ButsonYoo}.
For the free limit of Kodaira--Spencer theory this $\PP_0$-factorization algebra is completely explicit. 
To an open set $U \subset X$ one assigns the cochain complex:
\begin{align*}
\Obs_{\rm KS} (U) & = \bigg( \cO^{sm} \left(\cT_{{\rm KS}} (U)\right), Q_{\rm KS} \bigg) \\ & = \bigg(\Sym\left(\cT_{{\rm KS},c}^! (U)\right) \; ,  \; Q_{\rm KS} \bigg)  .
\end{align*}
The BV bracket is defined via contraction with $\Pi_{\rm KS}$. 
We denote the resulting $\PP_0$-factorization algebra for Kodaira--Spencer theory by ${\rm Obs}_{\rm KS}$.

There are variations of the theory obtained by looking at certain subcomplexes of $\cT_{\rm KS}$ and by restricting the $\PP_0$-bivector.
They are called: {\em minimal} Kodaira--Spencer theory, denoted by $\Tilde{\cT}_{\rm KS}$; {\em Type I} Kodaira--Spencer theory, denoted $\cT_{\rm Type \; I}$; and {\em minimal Type I} Kodaira--Spencer theory, denoted $\Tilde{\cT}_{\rm Type I}$. 
They fit into the following diagram of embeddings of complexes of fields:
\[
\begin{tikzcd}
& \Tilde{\cT}_{\rm KS} \ar[rd] & \\
\Tilde{\cT}_{\rm Type \; I} \ar[ur] \ar[dr] & & \cT_{\rm KS} \\
& \cT_{\rm Type \; I}\ar[ur] & 
\end{tikzcd}
\]
The corresponding $\PP_0$ factorization algebras of classical observables will be denoted $\Tilde{\Obs}_{\rm KS}$, $\Obs_{\rm Type\; I}$, and $\Tilde{\Obs}_{\rm Type \; I}$ (whose definitions we recall below). 

The goal of this section is relate Kodaira--Spencer theory to the twists of the $(1,0)$ and $(2,0)$ superconformal theories using factorization algebras.
Recall that in \S \ref{sec:obs} we showed that {\em \psBV{}} theories, such as the chiral $2k$-form $\chi(2k)$, admit a $\PP_0$-factorization algebra consisting of the ``Hamitlonian" observables. 
We have provided a detailed description of the factorization algebras associated to the holomorphic twists of the $(1,0)$ and $(2,0)$ theories in \S \ref{sec:twfact}.
The main result is the following.

\begin{prop}\label{prop:bcov}
Let $X$ be a Calabi--Yau three-fold and $Q$ be a holomorphic supercharge. 
The following statements are true regarding the holomorphic twists $\cT^Q_{(2,0)}$ and $\cT^Q_{(1,0)}$ of the $\cN = (2,0)$ and $\cN = (1,0)$ tensor multiplets, respectively:
\begin{itemize}
\item[(1)] There is a sequence of morphisms of complexes of fields:
\beqn\label{bcovdiagram1}
\begin{tikzcd}
& \beta\gamma (\CC) \oplus \Omega^{\geq 2, \bu}[1] \ar["f", dr] & \\
\cT^Q_{(2,0)} \ar[ur, "F"] & & \Tilde{\cT}_{\rm KS} .
\end{tikzcd}  
\eeqn
which induces a morphism of $\PP_0$-factorization algebras on $X$:
\[
\Tilde{\Obs}_{\rm KS} \to \Obs_{(2,0)}
\]
whose fiber is a locally constant factorization algebra.
\item[(2)] There is a sequence of morphisms of complexes of fields:
\beqn\label{bcovdiagram2}
\begin{tikzcd}
& \Omega^{\geq 2, \bu}[1] \ar[dr, "f", "\cong"'] & \\
\cT_{(1,0)}^Q \ar[ur, "F"] & & \Tilde{\cT}_{{\rm Type \; I}} .
\end{tikzcd}  
\eeqn
which induces a quasi-isomorphism of $\PP_0$-factorization algebras on $X$:
\[
\Tilde{\Obs}_{\rm Type \; I} \xto{\simeq} \Obs_{(1,0)}
\]
\end{itemize}
\end{prop}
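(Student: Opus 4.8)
The plan is to establish the two parts of Proposition~\ref{prop:bcov} by first writing down the morphisms $F$ and $f$ at the level of complexes of fields, then verifying that they respect the $\PP_0$-structures, and finally identifying the fiber of the induced map on factorization algebras. I would begin by recalling the explicit descriptions we have in hand: by Proposition~\ref{prop:alternate}, $\cT^Q_{(2,0)} \simeq \chi(2) \oplus \beta\gamma(\C)$, and by Theorem~\ref{thm:twist}, $\cT^Q_{(1,0)} \simeq \chi(2)$; as a cochain complex $\chi(2) = (\Omega^{\leq 1,\bu}(X)[2], \dbar + \partial)$, so its ``field strength'' complex is the image under $\partial$, namely $\Omega^{\geq 2,\bu}(X)[1]$ with differential $\dbar + \partial$. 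The map $F$ is then the composite of the equivalence with the map $\partial : \Omega^{\leq 1,\bu}[2] \to \Omega^{\geq 2,\bu}[1]$ (tensored with the identity on $\beta\gamma(\C)$ in the $(2,0)$ case), which sends a potential to its field strength; this is a cochain map precisely because $\partial$ commutes with $\dbar + \partial$ on the nose. The map $f$ then uses the Calabi--Yau form $\Omega$ to identify $\Omega^{\geq 2,\bu}(X)[1]$ with a shift of the polyvector-field complex ${\rm PV}^{\leq 1,\bu}(X)$, under which the de Rham operator $\partial$ becomes the divergence operator $\partial_\Omega$. I would recall the definitions of minimal Kodaira--Spencer theory $\Tilde{\cT}_{\rm KS}$ and minimal Type I theory $\Tilde{\cT}_{\rm Type\ I}$ from~\cite{CLBCOV1}, check that $\Tilde{\cT}_{\rm Type\ I}$ is exactly $(\Omega^{\geq 2,\bu}(X)[1], \dbar+\partial)$ under the $\Omega$-twist (giving the claimed isomorphism $f$ in part~(2)), and that $\Tilde{\cT}_{\rm KS}$ differs from $\Omega^{\geq 2,\bu}[1]$ precisely by the appearance of a $\beta\gamma$-type summand together with topological (constant) modes, which is what the map $f$ in part~(1) must absorb.

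The second step is to pass to observables. For each of the four theories in play the relevant $\PP_0$-factorization algebra is explicitly a symmetrized space of compactly supported sections with differential $\dbar + \partial$ and bracket given by integration against $\partial$ (or contraction with the Kodaira--Spencer bivector $\Pi_{\rm KS} = (\partial \otimes 1)\delta_\Delta$); these are recorded in \S\ref{sec:twfact} and in the present section. A morphism of complexes of fields $g: \cE \to \cE'$ that intertwines the presymplectic/Poisson data induces a morphism of $\PP_0$-factorization algebras in the opposite direction, $\Obs_{\cE'} \to \Obs_\cE$, by precomposition $\cO^{sm}(\cE') \to \cO^{sm}(\cE)$ (equivalently, by $\Sym$ of the dual-on-compact-supports map). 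So I would check that the composite $f \circ F$ intertwines the shifted presymplectic structure on $\cT^Q_{(\bullet,0)}$ with the Poisson structure on the (minimal) Kodaira--Spencer side. Concretely this amounts to the identity $\langle \partial \alpha, \partial \beta \rangle_{\rm KS} = \int \alpha\, \partial \beta$ (up to the $\Omega$-identification and a sign), which is an integration-by-parts computation using $\partial^2 = 0$; this is the routine part. The induced maps on observables are then $\Tilde{\Obs}_{\rm Type\ I} \to \Obs_{(1,0)}$ and $\Tilde{\Obs}_{\rm KS} \to \Obs_{(2,0)}$ as claimed.

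The third step is to identify the fibers. In part~(2), since $f$ is an isomorphism and $F$ is $\partial$ composed with the equivalence $\cT^Q_{(1,0)} \simeq \chi(2)$, the kernel of $f\circ F$ at the level of fields is the subcomplex $\Omega^{0,\bu}(X)[2]$ (the ``pure potential'' directions), which is exactly the part of $\chi(2)$ that pairs trivially under $\omega_\chi = \int \alpha\,\partial\alpha'$ with everything except via $\partial$; the point of the theorem is that on observables this acyclic-up-to-constants piece either dies or contributes only locally constant data. I expect the cleanest argument is: $\partial : \Omega^{\leq 1,\bu}[2] \to \Omega^{\geq 2,\bu}[1]$ is a quasi-isomorphism away from the constants by the holomorphic Poincar\'e lemma, hence $F$ is a quasi-isomorphism onto $\Tilde{\cT}_{\rm Type\ I}$ up to a locally constant contribution; but because the factorization algebra of Hamiltonian observables was defined (\S\ref{sec:preobs}, condition (2)) using the cone complex $\Cone(\omega)[-1] = (\cE \xrightarrow{\omega} \cE^![-1])$ — equivalently using $\cE^![-1] = \Omega^{\geq 2,\bu}[1]$ directly — the constants never enter $\Obs_{(1,0)}$ in the first place, and one gets an honest quasi-isomorphism $\Tilde{\Obs}_{\rm Type\ I} \xrightarrow{\simeq} \Obs_{(1,0)}$. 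In part~(1) the $\beta\gamma(\C)$ summand maps isomorphically, and the extra generators $\Omega^{3,\bu}_c[3]$, $\Omega^{0,\bu}_c[1]$ of $\Obs_{(2,0)}$ versus $\Tilde{\Obs}_{\rm KS}$ account for a fiber which, being built from topological (de Rham) rather than holomorphic data, is a locally constant factorization algebra; I would verify local constancy by checking that the fiber's value on a disk is quasi-isomorphic to its value on a point using the de Rham resolution, exactly as in the chiral-boson observable computation of Example~\ref{eg:chiralobs}.

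The main obstacle I anticipate is bookkeeping the difference between the four Kodaira--Spencer variants and matching it precisely to the $\beta\gamma$-summand and the topological correction, i.e.\ pinning down what ``up to topological degrees of freedom'' means rigorously: this requires unpacking the definitions of $\Tilde{\cT}_{\rm KS}$ versus $\cT_{\rm KS}$ (the truncation in the $t$-variable and in polyvector degree) from~\cite{CLBCOV1} and showing that after the $\Omega$-twist the minimal theories land exactly in the subcomplexes $\Omega^{\geq 2,\bu}[1]$ (Type I) and $\Omega^{\geq 2,\bu}[1] \oplus \beta\gamma(\C)$ (full), with the Kodaira--Spencer bivector $\Pi_{\rm KS}$ restricting to the $\partial$-pairing. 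Once that dictionary is fixed, the cochain-map and bracket-compatibility checks are routine integrations by parts, and the fiber computations reduce to the holomorphic Poincar\'e lemma plus the observation that the Hamiltonian-observable construction already quotients out the degenerate directions.
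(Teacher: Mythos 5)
Your proposal is correct and follows essentially the same route as the paper: identify $F$ with the holomorphic de Rham (field-strength) map $\partial:\Omega^{\leq 1,\bu}[2]\to\Omega^{\geq 2,\bu}[1]$, use the Calabi--Yau form to match $\Tilde{\cT}_{\rm Type\,I}$ and the three ``lines'' of $\Tilde{\cT}_{\rm KS}$ with complexes of holomorphic forms, observe that the Hamiltonian observables are by construction functions on the field-strength complex so that $F^*$ is a quasi-isomorphism, and locate the locally constant fiber in $\ker(\partial:\Omega^{0,\bu}\to\Omega^{\geq 1,\bu})=\underline{\C}$. The only slips are cosmetic (e.g.\ the kernel of $F$ on fields is not just $\Omega^{0,\bu}[2]$ but also contains $\partial$-closed $(1,\bu)$-forms, which your appeal to the holomorphic Poincar\'e lemma handles anyway), and your extra explicit checks of $\PP_0$-compatibility and local constancy are finer-grained than, but consistent with, what the paper records.
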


These result may be summarized as follows.
For the $(1,0)$ theory, one finds that the factorization algebra of Hamiltonian observables of the \psBV{} theory $\cT^Q_{(1,0)}$ is equivalent to the free limit of the observables of Type I Kodaira--Spencer theory. 
For the $(2,0)$ theory, the observables of the \psBV{} theory $\cT^Q_{(2,0)}$ differ from the free limit of the observables of minimal Kodaira--Spencer theory by a locally constant factorization algebra. 
This locally constant part has been appeared in \cite{SuryaYoo} in what they refer to as Kodaira--Spencer theory {\em with potentials}. 

The connection between Kodaira--Spencer theory and the tensor multiplet is through the field strength. 
Indeed, the map labeled $F$ in the above statement is a holomorphic version of the field strength of the chiral two-form. 
In the sections below, it is defined as the obvious extension of the following map of Dolbeault complexes
\[
\partial : \Omega^{\leq 1, \bu} [2] \to \Omega^{\geq 2, \bu} [1] 
\]
given by the holomorphic de Rham operator. 
By our results in the previous sections, given a two-form element $\chi \in \Omega^{\leq 1, \bu}(X)$, the component of the three-form field strength $\partial \chi \in \Omega^{\geq 2, \bu}$ is the only piece that survives the holomorphic twist.

Finally, we remark that the $\PP_0$-factorization algebra $\Tilde{\Obs}_{\rm Type \; I} \simeq \Obs_{(1,0)}$ has appeared as the factorization algebra of boundary observables of $7$-dimensional abelian Chern--Simons theory. 
Likewise, minimal Kodaira--Spencer theory $\Tilde{\cT}_{\rm KS}$ also appears as a boundary condition of $7$-dimensional abelian Chern--Simons theory.

\subsection{Minimal theory}
\def\PV{{\rm PV}}

Many of the fields in the complex $\cT_{\rm KS}$ are invisible to the shifted Poisson structure we have just introduced. 
There is a piece of $\cT_{\rm KS}$ that ``sees" the Poisson bracket, called the minimal theory.
The fields of the minimal theory form the subcomplex of fields of full Kodaira--Spencer theory $\Tilde{\cT}_{\rm BCOV} \subset {\rm PV}^{\bu,\bu}(X)[[t]] [2]$ defined by
\[
\Tilde{\cT}_{\rm KS} \overset{\rm def}{=} \bigoplus_{i+k \leq 2} t^k {\rm PV}^{i,\bu} [-i-2k+2] .
\]
The shifted Poisson tensor $\Pi_{\rm KS}$ restricts to one on this subcomplex, thus defining another Poisson BV theory whose fields are $\Tilde{\cT}_{\rm KS}$. 

\subsubsection{Proof of part (1) of Proposition \ref{prop:bcov}}

This is a direct calculation. 
Observe that the minimal fields decompose into six graded summands:
\[
\Tilde{\cT}_{\rm KS} = \PV^{0,\bu} [2] \oplus \PV^{1,\bu} [1] \oplus t \PV^{0,\bu} \oplus \PV^{2,\bu} \oplus t \PV^{1,\bu} [-1] \oplus t^2 \PV^{0,\bu} [-2] .
\]
and the differential takes the form:
\beqn\label{bcovdiagram3}
\begin{tikzcd}
\ul{-2} & \ul{-1} & \ul{0} & \ul{1} & \ul{2} \\
\PV^{0,\bu} & & & & \\
& \PV^{1,\bu} \ar[r, "t \partial_\Omega"] & t \PV^{0,\bu} & & \\
& & \PV^{2,\bu}  \ar[r, "t \partial_\Omega"] & t \PV^{1,\bu}  \ar[r, "t \partial_\Omega"] & t^2 \PV^{0,\bu} 
\end{tikzcd}
\eeqn
Using the Calabi--Yau form $\Omega$ we can identify each line above with some complex of differential forms. 
For the first line, we have $\PV^{0,\bu} \overset{\Omega}{\cong} \Omega^{3,\bu}$.
The second line is isomorphic to the cochain complex $\Omega^{\geq 2 , \bu}[1]$, where $\partial_\Omega$ is identified with the holomorphic de Rham operator. 
This is the standard resolution of closed two-forms up to a shift. 
Similarly, the third line is isomorphic to $\Omega^{\geq 1, \bu}$. 
This is the standard resolution for closed one-forms. 

In total, the cochain complex of minimal Kodaira--Spencer theory $\cT_{\rm KS}$ is isomorphic to 
\[
\Omega^{3, \bu}[2] \oplus \Omega^{\geq 2, \bu} [1] \oplus \Omega^{\geq 1, \bu} .
\]

We define the morphism $f$ in the first diagram (\ref{bcovdiagram1}) of Proposition \ref{prop:bcov}. 
Recall, the cochain complex of fields of the $\beta\gamma$ system with values in $\CC$ is
\[
\Omega^{0,\bu} \oplus \Omega^{3,\bu}[2] .
\]
On the components $\Omega^{3,\bu}[2]$ and $\Omega^{\geq 2, \bu}[1]$, we take $f$ to be the identity morphism. 
On the component $\Omega^{0,\bu}$ we take $f$ to be the holomorphic de Rham operator
\[
\partial : \Omega^{0,\bu} \to \Omega^{\geq 1,\bu} .
\]

Using the description of the holomorphic twist in \S \ref{sec:alternate}, we have identified the minimal twist of the $\cN=(2,0)$ theory with $\cT^Q_{(2,0)} \cong \chi(2) \oplus \beta\gamma(\CC)$. 
The morphism $F$ is defined to be the identity on the $\beta\gamma(\CC)$ component. 
On $\chi(2) = \Omega^{\leq 1, \bu}[2]$, $F$ is defined by the holomorphic de Rham operator
\[
\partial : \Omega^{\leq 1, \bu}[2] \to \Omega^{\geq 2, \bu} [1] .
\]

To finish the proof, we introduce an intermediate factorization algebra that we think of as the observables associated to the Poisson BV theory $\beta\gamma(\CC) \oplus \Omega^{\geq 2, \bu}[1]$. 
Let $\cF$ be the factorization algebra which assigns to $U \subset X$ the cochain complex
\[
\cF(U) = \bigg(\Sym\left(\beta\gamma^!_c(U) \oplus \Omega^{\leq 1, \bu}_c (U) [3] \right) , \dbar_{\beta \gamma} + \dbar + \partial \bigg) .
\]

The maps $f,g$ induce maps of factorization algebras
\[
\Obs_{\rm KS} \xto{f^*} \cF \xto{F^*} \Obs_{(2,0)}
\]
Following the description of $\Obs_{(2,0)}$ given in \S \ref{sec:twfact}, we observe that the map $F^*$ is a quasi--isomorphism. 
The result follows from the fact that the kernel of $f$ is the sheaf of constant functions $\ul{\CC}$.

\subsection{Type I theory}

Type I Kodaira--Spencer theory has underlying complex of fields
\[
\cT_{\rm Type \; I} = \bigoplus_{i + k = \; {\rm odd}} t^k \PV^{i,\bu}[-i-2k-2] .
\]
This describes the conjectural spacetime string field theory of the Type I topological string, see \cite{CLtypeI}.

The complex of fields of minimal Type I Kodaira--Spencer theory $\Tilde{\cT}_{\rm Type \; I}$ is the intersection of the fields of the minimal theory with the Type I theory.
The only polyvector fields that appear are of arity zero and one, so that:
\[
\Tilde{\cT}_{\rm Type \; I} =  \PV^{1,\bu} [1]  \oplus t \PV^{0,\bu} 
\]
As before, the differential is the internal $\dbar$ operator plus the operator $t \partial_\Omega$ which maps the first component to the second. 
Notice that $\Tilde{\cT}_{\rm Type \; I} \subset \Tilde{\cT}_{\rm KS}$ as the middle line in diagram (\ref{bcovdiagram3}). 

The proof of part (2) of Proposition \ref{prop:bcov} is more direct than the last section. 
We have already explained how to identify $\Tilde{\cT}_{\rm Type\; I}$ with the resolution of closed two-forms $\Omega^{\geq 2, \bu} [1]$. 
This is the isomorphism $f$ in diagram (\ref{bcovdiagram2}). 

The morphism $F$ in diagram (\ref{bcovdiagram2}) is the holomorphic de Rham operator $\partial$.
The same argument as in the last section shows that $F \circ f$ defines the desired quasi-isomorphism
\[
(F \circ f)^* :
\Tilde{\Obs}_{\rm Type \; I} \xto{\simeq} \Obs_{(1,0)} .
\]

\section{Dimensional reduction} \label{sec:dimred}

In this final section, we place the $\N=(2,0)$ theory on various geometries, including both naive dimensional reduction and compactification on product manifolds. We begin with the twisted theories, showing that the holomorphic twist reduces to the twist of five-dimensional supersymmetric Yang--Mills theory up to a copy of the constant sheaf. We then go on to give a description of the holomorphic twist after compactification along a complex surface, as well as the two-dimensional theory obtained by placing the nonminimal twist on a smooth four-manifold. Finally, we consider dimensional reduction to five dimensions at the level of the untwisted theory, and show that this produces untwisted five-dimensional Yang--Mills as expected, up to the same copy of the constant sheaf. The calculation leads us into some considerations related to electric--magnetic duality, through which we argue that the presence of the constant sheaf is reasonable and in fact expected from a physics perspective. In the final portion, we offer some more speculative thoughts on how the zero modes can be correctly handled by passing to a nonperturbative description of the theory. 

\subsection{Compactifications of the twisted theories}

\subsubsection{Reduction to twisted four-dimensional Yang--Mills}

In this section, let $E$ be an elliptic curve and $Y$ a complex surface.
We consider the holomorphic twists of the $(1,0)$ and $(2,0)$ theories on the complex three-fold $Y \times E$.
Recall, in \S \ref{sec:twfact} we have defined the factorization algebra of classical observables of the holomorphic $(1,0)$ theory $\Obs_{(1,0)}$ and of the holomorphic twist (using the alternative twisting homomorphism of \S \ref{sec:alternate}) of the $(2,0)$ theory $\Obs_{(2,0)}$. 
We look at the dimensional reduction of these factorization algebras along the elliptic curve $E$, meaning we consider the pushforward along the projection map $Y \times E \to Y$. 

Upon reduction along $E$, we find a relationship of the factorization algebras $\Obs_{(1,0)}$ and $\Obs_{(2,0)}$ to the factorization algebras associated to the holomorphic twists of pure 4d $\cN=2$ and $\cN = 4$ Yang--Mills theory for the abelian one-dimensional Lie algebra.

Following \cite{CostelloHolomorphic, ESW}, we recall the description of the holomorphic twist of supersymmetric Yang--Mills in four-dimensions.
Each of these holomorphic twists exists on any complex surface $Y$. 

For the case of $4d$ $\cN=2$, the holomorphic twist is described by the underlying complex of fields
\beqn\label{4dn2}
\Omega^{0,\bu}(Y) [\ep] [1] \oplus \Omega^{2,\bu}(Y) [\ep] [1]
\eeqn
where $\ep$ is a formal parameter of degree $+1$. 
This theory is free and is equipped with the linear BRST operator given by the $\dbar$-operator.
The degree $(-1)$ pairing on the space of fields is given by the integration pairing along $Y$ together with the Berezin integral in the odd $\ep$ direction.
That is, given fields $A + \ep A'$ and $B + \ep B'$ where $A, A' \in \Omega^{0,\bu}(Y)$, $B,B'\in \Omega^{2,\bu}(Y)$, the pairing is
\[
(A + \ep A' , B + \ep B') \mapsto \int_Y (A B' + A' B) .
\]
Since the pure supersymmetric Yang--Mills theory for an abelian Lie algebra is a free theory, we consider the ``smooth" version $\cO^{sm}$ of the classical observables, just as in \S\ref{sec:fact}. 
We denote the associated factorization algebra of classical observables by $\Obs_{4d \; \cN = 2}$.
Via the degree $(-1)$ pairing this factorization algebra is equipped with a $\PP_0$-structure.

The description of the holomorphic twist of $4d$ $\cN=4$ supersymmetric Yang--Mills theory for abelian Lie algebra is similar. 
The underlying complex of fields is
\beqn\label{4dn4}
\Omega^{0,\bu}(Y) [\ep, \delta] [1] \oplus \Omega^{2,\bu}(Y) [\ep,\delta] [2]
\eeqn
The degree $(-1)$ pairing is given by the integration pairing along $Y$ together with the Berezin integral in the odd $\ep,\delta$ directions. 

\begin{prop}
Let $\pi : Y \times E \to Y$ be the projection. 
\begin{itemize}
\item 
There is a morphism of $\PP_0$-factorization algebras on $Y$
\[
\pi_*\Obs_{(1,0)} \to \Obs_{4d \; \cN=2} 
\]
whose cofiber is a locally constant factorization algebra with trivial $\PP_0$-structure.
\item
There is a morphism of $\PP_0$-factorization algebras on $Y$
\[
\pi_*\Obs_{(2,0)} \to \Obs_{4d \; \cN=4}
\]
whose cofiber is a locally constant factorization algebra with trivial $\PP_0$-structure.
\end{itemize}
\end{prop}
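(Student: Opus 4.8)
The plan is to reduce the statement to an explicit chain-level computation using the factorization-algebra description of the holomorphically twisted theories from \S\ref{sec:twfact}, together with the standard push-forward (dimensional reduction) of factorization algebras along the projection $\pi\colon Y\times E\to Y$. First I would recall that $\Obs_{(1,0)}(U) = \bigl(\Sym(\Omega^{\leq 1,\bu}_c(U)[3]),\dbar+\del\bigr)$ and $\Obs_{(2,0)}(U)$ additionally contains the $\beta\gamma$-summand $\Sym(\Omega^{3,\bu}_c(U)[3]\oplus\Omega^{0,\bu}_c(U)[1])$, with the $\PP_0$-bracket given on linear observables by $\int\cO\,\del\cO'$ (plus the degenerate pairing between the two $\beta\gamma$ factors in the $(2,0)$ case). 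Since both theories are free, the push-forward commutes with taking $\cO^{sm}$, and $(\pi_*\Obs)(U) = \Obs(\pi^{-1}(U)) = \Obs(U\times E)$; so the computation boils down to the Dolbeault cohomology of $E$, i.e.\ $H^\bu(\Omega^{0,\bu}(E)) \cong \CC\oplus\CC[-1]$, and similarly $H^\bu(\Omega^{1,\bu}(E))\cong\CC\,dz\oplus\CC\,dz\,[-1]$. Running this through the Künneth decomposition of $\Omega^{\leq 1,\bu}(Y\times E)$ and $\Omega^{3,\bu}(Y\times E)$, the surviving pieces are exactly $\Omega^{0,\bu}(Y)[\ep]$ and $\Omega^{2,\bu}(Y)[\ep]$ for the $(1,0)$ reduction (with $\ep$ the generator of $H^1(E,\cO_E)$), matching the field content~\eqref{4dn2}; in the $(2,0)$ case the extra $\beta\gamma$-summand contributes a second odd parameter $\delta$, matching~\eqref{4dn4}.

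Next I would construct the morphism of factorization algebras explicitly. It is induced by a map of the underlying free field complexes: the holomorphic de Rham operator $\del\colon \Omega^{\leq 1,\bu}(Y\times E)[2]\to\Omega^{\geq 2,\bu}(Y\times E)[1]$ (cf.\ the field-strength map $F$ of \S\ref{sec:bcov}), followed by projection onto the $E$-invariant part and identification with the $4d$ theory. One checks that this is a map of presymplectic BV theories—i.e.\ it is a cochain map and intertwines the $(-1)$-shifted pairings up to the homotopy that was already built in Proposition~\ref{prop:pairings} (the Berezin integral over $\ep$, resp.\ $\ep,\delta$, arises precisely from the $H^1(E)$-pairing $\int_E dz\wedge(-)$). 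Dualizing and applying $\Sym$ gives the desired map $\pi_*\Obs_{(1,0)}\to\Obs_{4d\,\cN=2}$, and similarly in the $(2,0)$ case; compatibility with the $\PP_0$-brackets is immediate since both are induced from the respective shifted pairings and the map respects those pairings.

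The final point is to identify the cofiber. By the above, the cofiber at the level of field complexes is the complex spanned by the $E$-\emph{non}-invariant Dolbeault classes that are \emph{not} in the image of $\del$—concretely, the $(0,0)$ and $(1,0)$ components on $Y$ tensored with $H^0(E,\cO_E)$ that get killed when passing to field strengths, i.e.\ the "zero modes" $\Omega^0(Y)\cdot 1$ and $\Omega^1(Y)\cdot 1$ and their $\ep$- (resp.\ $\ep,\delta$-)partners. These assemble into a cochain complex with constant-coefficient differentials (only $\del$ on $Y$, no $\dbar$), hence a locally constant factorization algebra; and since the $\PP_0$-bracket is $\int\cO\,\del\cO'$, which vanishes identically on this kernel (the pairing factored through the field strength), the induced $\PP_0$-structure on the cofiber is trivial. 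The main obstacle I expect is bookkeeping: carefully tracking the Künneth decomposition across all form types $\Omega^{i,\bu}$ for $i\le 1$ (resp.\ including $\Omega^{3,\bu}$ and $\Omega^{0,\bu}$ for $(2,0)$), verifying that the surviving summands regrade correctly to~\eqref{4dn2}/\eqref{4dn4} with the right placement of the odd parameters, and checking that the homotopy equivalence of presymplectic structures descends compatibly—essentially repeating the type of computation done in Propositions~\ref{prop:pairings} and~\ref{prop:alternate}, but now relative to the base $Y$. Everything else is formal: push-forward of free BV factorization algebras, and the exactness of $\Sym$ applied to a cochain map with the stated cofiber.
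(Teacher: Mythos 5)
Your overall strategy is the paper's: use formality of the elliptic curve to replace $\Omega^{0,\bu}(E)$ by $\CC[\ep]$, recognize the pushforward of the twisted field complexes in terms of Dolbeault complexes on $Y$ with one (resp.\ two) odd parameters, let the holomorphic de Rham operator carry the comparison to the four-dimensional theory, and trace the defect to the constant sheaf arising as the kernel of $\partial$. The treatment of the $(2,0)$ case, with the $\beta\gamma$ summand supplying the second odd parameter $\delta$, is also exactly as in the paper.

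One step, as you have literally written it, does not exist, although the fix is essentially a matter of direction. After pushing forward, the relevant summand of $\pi_*\Omega^{\geq 2,\bu}[1]$ is $\d z\,\Omega^{\geq 1,\bu}(Y)[\ep][1]$, which resolves the sheaf of \emph{closed holomorphic one-forms} on $Y$; this is not quasi-isomorphic to the summand $\Omega^{0,\bu}(Y)[\ep][1]$ of the $4d$ $\cN=2$ fields, so there is no ``identification with the $4d$ theory'' after projecting to the $E$-invariant part. Rather, it \emph{receives} the map $\partial$ from $\Omega^{0,\bu}(Y)[\ep][1]$, which is surjective on cohomology sheaves with kernel the constant sheaf $\CC[\ep][1]$. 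The morphism of observables must therefore be built contravariantly: one writes $\partial\colon \cE_{4d}\to\pi_*\Omega^{\geq 2,\bu}[1]$ (the complex on which $\Obs_{(1,0)}$ is smooth functionals) and pulls back functionals, which produces $\pi_*\Obs_{(1,0)}\to\Obs_{4d\;\cN=2}$ with the stated direction. Correspondingly, the cofiber is governed by (symmetric powers of the dual of) the constant sheaf, not by ``$\Omega^0(Y)\cdot 1$ and $\Omega^1(Y)\cdot 1$'' — the latter complex, with only $\partial$ as differential, is not locally constant. Two smaller points: the homotopy of Proposition~\ref{prop:pairings} is not needed here, since both $\PP_0$-brackets are induced by $\partial$-type pairings and the compatibility is strict; and the vanishing of the bracket on the cofiber follows precisely because the pairing $\int \cO\,\partial\cO'$ kills the constants, which you do say correctly.
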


\begin{proof}
We consider the $(1,0)$ case first.
Following the description in \S \ref{sec:twfact}, the factorization algebra $\Obs_{(1,0)}$ is given by the ``smooth" functionals on the sheaf of cochain complexes $\Omega^{\geq 2, \bu}[1]$ on $Y \times E$.
Since $E$ is formal, there is a quasi-isomorphism $\CC[\ep] \xto{\simeq} \Omega^{0,\bu}(E)$.
Here, $\ep$ is a chosen generator for the sheaf of sections of the anti-holomorphic canonical bundle on $E$. 
Thus, there is a quasi-isomorphism of sheaves on $Y$:
\[
\Omega^{2, \bu}(Y) [\ep] \oplus \d z \Omega^{\geq 1, \bu} (Y) [\ep] \xto{\simeq} \pi_* \Omega^{\geq 2 , \bu} .
\]
Here, $\d z$ denotes the holomorphic volume form on the elliptic curve.

The sheaf of cochain complexes $\Omega^{\geq 1, \bu}(Y)$
is a resolution for the sheaf of closed one-forms on the complex surface $Y$. 
The $\partial$-operator determines a map of cochain complexes $\partial : \Omega^{0,\bu}(Y) \to \Omega^{\geq 1, \bu}(Y)$ whose kernel is the sheaf of constant functions. 

Putting this together, we find that there is a map of sheaves of cochain complexes on $Y$:
\[
\Omega^{0,\bu}(Y) [\ep] [1] \oplus \Omega^{2,\bu}(Y) [\ep] [1] \xto{\partial} \pi_* \Omega^{\geq 2, \bu} [1].
\]
We recognize the left-hand side as the complex of fields underlying the holomorphic twist of $4d$ $\cN=2$. 
Applying the functor of taking the ``smooth" functionals $\cO^{sm}(-)$ we obtain the first statement of the proposition. 
It is immediate to verify that this map intertwines the $\PP_0$-structures.

The second statement is not much more difficult. 
Recall, the complex of fields of the holomorphic twist of the $(2,0)$ theory is obtained by adjoining the $\beta\gamma$ system on the three-fold $Y \times E$ to the holomorphic twist of the $(1,0)$ theory. 
As a sheaf on $Y \times E$, the complex of fields of the $\beta\gamma$ system is 
\[
\Omega^{0,\bu}(Y \times E) \oplus \Omega^{3,\bu} (Y \times E) [2] .
\]
Pushing forward along $\pi$ the complex becomes
\[
\Omega^{0,\bu} (Y) [\ep] \oplus \d z \Omega^{2,\bu} (Y)[\ep] [2]  .
\]
Notice that this is a symplectic BV theory with the wedge and integrate pairing.
The statement then follows from the observation that there is an isomorphism of symplectic BV theories
\[
\bigg(\Omega^{0,\bu}(Y) [\ep] [1] \oplus \Omega^{2,\bu}(Y) [\ep] [1] \bigg) \oplus \bigg(\Omega^{0,\bu} (Y) [\ep] \oplus \d z \Omega^{2,\bu} (Y)[\ep] [2]\bigg)
\]
with the holomorphic twist of $4d$ $\cN=4$ as in (\ref{4dn4}).
\end{proof}

\subsubsection{Reduction to 2d CFT}

Consider the higher dimensional $\beta\gamma$ system $\beta\gamma(X , \CC)$ on a three-fold $X$ with values in the trivial bundle.
The space of fields is $\Omega^{0,\bu}(X) \oplus \Omega^{3,\bu}(X)[2]$.
If $Y$ is a compact K\"{a}hler surface, $\Sigma$ a Riemann surface, and $\pi : Y \times \Sigma \to \Sigma$ the projection, then there is an equivalence of free BV theories on $\Sigma$:
\beqn\label{reducebg}
\pi_* \beta\gamma(X ; \CC) = \beta\gamma\left(\Sigma ; H^\bu(Y, \cO_Y)\right) .
\eeqn
This is the $\beta\gamma$ system on $\Sigma$ with values in the (graded) vector space $H^\bu(Y,\cO_Y)$. 

Let $\chi(2)$ be the theory of the chiral two-form on $Y \times \Sigma$ with values in $\CC$. 
The following result follows from a direct calculation of the sheaf-theoretic pushforward of the complex $\chi(2)$ 
along the map $\pi : Y \times \Sigma \to \Sigma$. 
In the result, we use the fact that on a compact surface $Y$ there is a symmetric bilinear form on the graded vector space $H^{\bu}(Y, \Omega^{1}_Y)[1] $ provided by Serre duality. 

\begin{lem}
Let $Y$ be a compact K\"{a}hler surface and $\Sigma$ a Riemann surface.
There is an equivalence of presymplectic BV theories on $\Sigma$
\beqn\label{chireduce}
\pi_* \chi(2)(Y \times \Sigma)  \; \simeq  \; \left(\Omega^{\bu}(\Sigma) 
\otimes H^\bu(Y , \cO_Y) \right) [2] \oplus \thy\left(0 , H^\bu(Y, \Omega^1_Y)[1]\right)(\Sigma).
\eeqn
On the right-hand side, the $(-1)$-shifted presymplectic form is trivial on the first summand and the second summand is the chiral boson on $\Sigma$ with values in the graded vector space $H^\bu(Y , \Omega^1_Y)[1]$. 
\end{lem}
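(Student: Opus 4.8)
The plan is to compute the derived pushforward $\pi_* \chi(2)$ directly from the definition of $\chi(2)$ as the \psBV{} theory $(\Omega^{\leq 1, \bu}(Y \times \Sigma)[2], \partial + \dbar)$ with pairing $\omega_\chi(\alpha, \alpha') = \int \alpha \, \partial \alpha'$, and to identify the result with the right-hand side of~\eqref{chireduce} as a \psBV{} theory. First I would use the K\"unneth decomposition of Dolbeault forms on a product: $\Omega^{p,\bu}(Y\times\Sigma) \simeq \bigoplus_{a+b=p} \Omega^{a,\bu}(Y) \mathbin{\Hat\otimes} \Omega^{b,\bu}(\Sigma)$, and the fact that $\pi_*$ at the derived level amounts to replacing $\Omega^{a,\bu}(Y)$ by its cohomology $H^\bu(Y,\Omega^a_Y)$ (using that $Y$ is compact K\"ahler, so the Dolbeault complexes are formal and the Hodge decomposition holds). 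Since $\chi(2)$ involves only $p \leq 1$, we only pick up $a \in \{0,1\}$ and $b \in \{0,1\}$ with $a + b \leq 1$, giving three pieces: $H^\bu(Y,\cO_Y)\mathbin{\Hat\otimes}\Omega^{0,\bu}(\Sigma)$, $H^\bu(Y,\cO_Y)\mathbin{\Hat\otimes}\Omega^{1,\bu}(\Sigma)$, and $H^\bu(Y,\Omega^1_Y)\mathbin{\Hat\otimes}\Omega^{0,\bu}(\Sigma)$. The total differential $\partial + \dbar$ restricted along $\pi$ becomes $\partial_\Sigma + \dbar_\Sigma$ (the $\partial_Y$ and $\dbar_Y$ pieces are killed by passing to cohomology of $Y$); the only surviving holomorphic de Rham operator is $\partial_\Sigma: \Omega^{0,\bu}(\Sigma) \to \Omega^{1,\bu}(\Sigma)$, acting with coefficients in $H^\bu(Y,\cO_Y)$.

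Next I would organize these three pieces into the two summands claimed. The pieces with $H^\bu(Y,\cO_Y)$ coefficients, namely $H^\bu(Y,\cO_Y)\otimes(\Omega^{0,\bu}(\Sigma) \xto{\partial_\Sigma} \Omega^{1,\bu}(\Sigma))[2]$, form exactly $\chi(0, H^\bu(Y,\cO_Y))$ up to the grading conventions — except that $\partial_\Sigma$ acting on $\Omega^{0,\bu}(\Sigma)$ and then nothing further; comparing with Example~\ref{eg:chiralboson}, the chiral boson with values in $W$ is $(\Omega^{0,\bu}(\Sigma)\otimes W, \dbar)$ with $\omega = \partial\otimes h$. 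Here, though, the presymplectic pairing $\omega_\chi$ restricted to this sector pairs $\Omega^{0,\bu}(\Sigma)$ against $\Omega^{1,\bu}(\Sigma)$ via $\int_\Sigma \alpha\,\partial_\Sigma\alpha'$ only through the $H^\bu(Y,\cO_Y)$ factor — but $H^\bu(Y,\cO_Y)$ carries no natural nondegenerate symmetric form in general (Serre duality pairs it with $H^\bu(Y,K_Y)$, which does not appear), so this sector has \emph{trivial} induced presymplectic form after the pushforward: the $\partial_\Sigma$ in $\omega_\chi$ is already "used up" as the differential, and the coefficient pairing vanishes. This gives the first summand $(\Omega^\bu(\Sigma)\otimes H^\bu(Y,\cO_Y))[2]$ with trivial pairing, where I would note that $\Omega^{0,\bu}(\Sigma)\xto{\partial_\Sigma}\Omega^{1,\bu}(\Sigma)$ assembles into the full de Rham complex $\Omega^\bu(\Sigma)$ by the holomorphic Poincar\'e-type identification on a Riemann surface (so $\partial_\Sigma$ becomes the de Rham $d$ after completing $\Omega^{1,\bu}$ appropriately). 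The third piece, $H^\bu(Y,\Omega^1_Y)\mathbin{\Hat\otimes}\Omega^{0,\bu}(\Sigma)[2]$, has no outgoing differential (there is no $\Omega^{2,\bu}$ in $\chi(2)$) and so is $(\Omega^{0,\bu}(\Sigma)\otimes H^\bu(Y,\Omega^1_Y)[\text{shift}], \dbar_\Sigma)$, which is precisely $\thy(0, H^\bu(Y,\Omega^1_Y)[1])$ once the pairing is identified.

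The key step requiring care is the pairing on the second summand. I would argue that $\omega_\chi$ restricted to $H^\bu(Y,\Omega^1_Y)\mathbin{\Hat\otimes}\Omega^{0,\bu}(\Sigma)$ is $\int_{Y\times\Sigma}\alpha\,\partial\alpha'$ where $\partial = \partial_Y + \partial_\Sigma$; the $\partial_\Sigma$ term pairs this sector with itself and the $Y$-integration becomes the Serre duality pairing $H^\bu(Y,\Omega^1_Y)\otimes H^\bu(Y,\Omega^1_Y)\to H^\bu(Y,\Omega^2_Y) = H^\bu(Y, K_Y) \xto{\int_Y}\CC[\text{shift}]$, which is the symmetric bilinear form $h$ referenced in the statement. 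Thus the induced pairing is exactly $\partial_\Sigma\otimes h$, matching the chiral-boson pairing of Example~\ref{eg:chiralboson}. The main obstacle I anticipate is bookkeeping: tracking cohomological degree shifts through the K\"unneth decomposition and the $[2]$ shift, checking the parity/sign conventions so that $h$ comes out symmetric (Serre duality on a surface introduces a sign depending on the degree, which here combines with the shift $[1]$ to land on a genuinely symmetric form), and verifying compatibility $[\QBV, \omega] = 0$ survives the pushforward — but all of these are mechanical once the K\"unneth splitting is set up. A secondary subtlety is justifying that the sheaf-theoretic pushforward along the \emph{non-proper-in-the-analytic-sense} but fiberwise-compact map $\pi$ computes what we want at the level of \psBV{} theories (i.e. commutes with $\cO^{sm}$ and the factorization structure); here I would appeal to compactness of $Y$ and formality, exactly as in the reduction lemmas for the $\beta\gamma$ system~\eqref{reducebg} proved just above.
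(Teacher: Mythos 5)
Your proposal is correct and follows essentially the same route as the paper, which simply asserts that the lemma ``follows from a direct calculation of the sheaf-theoretic pushforward'' using the Serre-duality pairing on $H^\bu(Y,\Omega^1_Y)[1]$ --- your K\"unneth-plus-formality computation is exactly that calculation spelled out. The only point worth making explicit is that the potential cross-pairing between the two summands, coming from the $\partial_Y$ term of $\omega_\chi$ evaluated on $\Omega^{1,\bu}(Y)\mathbin{\Hat{\otimes}}\Omega^{0,\bu}(\Sigma)$ against $\Omega^{0,\bu}(Y)\mathbin{\Hat{\otimes}}\Omega^{1,\bu}(\Sigma)$, vanishes on harmonic representatives --- the same mechanism you already invoke to kill $\partial_Y$ in the differential.
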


Combining this lemma with (\ref{reducebg}) we obtain the following description of the reduction of the holomorphic twist of the $(2,0)$ theory along $\pi : Y \times \Sigma \to \Sigma$. 

\begin{prop}
Suppose $Y$ is a compact K\"{a}hler surface. 
The compactification of the holomorphic twist of the abelian $(2,0)$ theory along $Y$ is equivalent to the direct sum of the following three presymplectic BV theories on $\Sigma$:
\beqn\label{reduce(2,0)}
\beta\gamma(H^\bu(Y, \cO_Y)) \oplus \thy\left(0 , H^\bu(Y, \Omega^1_Y)[1]\right) \oplus  \left(\Omega^{\bu}(\Sigma) 
\otimes H^\bu(Y , \cO_Y) \right) [2] .
\eeqn
\end{prop}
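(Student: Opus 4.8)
The plan is to assemble the proposition from three ingredients that are already in hand: the identification (Theorem~\ref{thm:twist}, in the precise form of Proposition~\ref{prop:alternate}) of the holomorphic twist of the abelian $(2,0)$ tensor multiplet on the three-fold $X = Y \times \Sigma$ with the \psBV{} theory $\chi(2) \oplus \beta\gamma(\CC)$, together with the two pushforward computations~\eqref{reducebg} and~\eqref{chireduce}. First I would observe that compactification along the fibers of $\pi : Y \times \Sigma \to \Sigma$ is the derived fiberwise pushforward, computed at the level of the Dolbeault models by the K\"unneth formula and the Hodge decomposition on the \emph{compact K\"ahler} surface $Y$; in particular $\pi_*$ is additive and sends a direct sum of \psBV{} theories to the direct sum of the pushforwards, carrying the orthogonal direct sum of presymplectic pairings to the orthogonal direct sum. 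Hence the compactification of the holomorphic twist of the $(2,0)$ theory along $Y$ is equivalent, as a \psBV{} theory on $\Sigma$, to $\pi_*\chi(2)(Y\times\Sigma) \oplus \pi_*\beta\gamma(X;\CC)$.

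Next I would substitute the two known evaluations. For the first summand, Equation~\eqref{chireduce} gives $\pi_*\chi(2)(Y\times\Sigma) \simeq \left(\Omega^{\bu}(\Sigma) \otimes H^\bu(Y,\cO_Y)\right)[2] \oplus \thy\left(0, H^\bu(Y,\Omega^1_Y)[1]\right)(\Sigma)$, with the presymplectic form trivial on the first piece and equal to the chiral-boson pairing built from the Serre-duality bilinear form on $H^\bu(Y,\Omega^1_Y)[1]$ on the second. For the second summand, Equation~\eqref{reducebg} gives $\pi_*\beta\gamma(X;\CC) = \beta\gamma\left(\Sigma; H^\bu(Y,\cO_Y)\right)$: here one uses that the canonical bundle of $Y \times \Sigma$ is the external product of $K_Y$ and $K_\Sigma$, so that the ``$\gamma$'' fields $\Omega^{3,\bu}(Y\times\Sigma)$ reduce to $\Omega^{1,\bu}(\Sigma) \otimes H^\bu(Y, K_Y)$, and then Serre duality on $Y$ to identify $H^\bu(Y,K_Y)$ with the graded dual of $H^\bu(Y,\cO_Y)$, so that the wedge-and-integrate pairing on $X$ descends to the (nondegenerate) $\beta\gamma$ pairing on $\Sigma$ valued in $H^\bu(Y,\cO_Y)$. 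Collecting the three resulting summands yields precisely~\eqref{reduce(2,0)}.

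The only real content beyond bookkeeping is the compatibility of $\pi_*$ with the \psBV{} structure, i.e.\ that $\pi_*\omega$ is again the presymplectic form of a \psBV{} theory on $\Sigma$ and that it splits off the trivial pairing on $\Omega^{\bu}(\Sigma)\otimes H^\bu(Y,\cO_Y)$; this has already been carried out inside the proofs of~\eqref{reducebg} and~\eqref{chireduce}, so at this stage it reduces to invoking those statements. The main potential obstacle is therefore purely organizational: tracking the cohomological shifts attached to $H^\bu(Y,\cO_Y)$, $H^\bu(Y,\Omega^1_Y)$, and $H^\bu(Y,K_Y)$, and the Serre-duality identifications among them, so that the three summands carry exactly the gradings and pairings asserted. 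Since $Y$ is a surface these shifts are small and the identifications are the standard ones, so I expect no genuine difficulty.
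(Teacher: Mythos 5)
Your proposal is correct and follows essentially the same route as the paper: the paper likewise obtains this proposition by combining the identification $\cT^Q_{(2,0)}\simeq \chi(2)\oplus\beta\gamma(\CC)$ with the pushforward computations~\eqref{chireduce} and~\eqref{reducebg}, using additivity of $\pi_*$ on direct sums of \psBV{} theories. The extra detail you supply on Serre duality identifying $H^\bu(Y,K_Y)$ with the dual of $H^\bu(Y,\cO_Y)$ is exactly what underlies~\eqref{reducebg}, so nothing is missing.
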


The first summand in (\ref{reduce(2,0)}) is the $\beta\gamma$ chiral CFT with values in $H^\bu(Y, \cO_Y)$ and the second summand is the chiral boson with values in $H^\bu(Y, \Omega^1_Y)$. 
The last summand is quasi-isomorphic to the constant sheaf in degree $-2$, and so has only topological degrees of freedom.
Moreover, the last summand is equipped with the trivial shifted presymplectic structure. 

Next, consider the non-minimal twist of the $(2,0)$ theory which we will place on $M \times \Sigma$, where $M$ is a closed four-manifold. 
The presymplectic BV complex is of the form $\Omega^{\bu}(M) \otimes \Omega^{0,\bu}(\Sigma)[2]$. 
Similarly as in the holomorphic twist, the compactification along $M$ produces the theory of the chiral boson.
This time, however, it is valued in the graded vector space $H^\bu(M , \CC) [2]$, the de Rham cohomology of $M$ shifted by two. 
Note that the integration pairing endows this graded space with a graded symmetric bilinear form. 

\begin{prop}
Let $M$ be a closed four-manifold. 
The compactification of the non-minimal twist of the abelian $(2,0)$ theory along $M$ is equivalent, as a presymplectic BV theory, to the chiral boson $\chi(0 , H^\bu(M , \CC)[2])$ on $\Sigma$.
\end{prop}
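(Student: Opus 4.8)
The strategy is to compute the sheaf-theoretic pushforward $\pi_*$ along $\pi: M \times \Sigma \to \Sigma$ of the presymplectic BV complex $\Omega^{\bu}(M) \mathbin{\Hat{\otimes}} \Omega^{0,\bu}(\Sigma)[2]$ describing the non-minimal twist (Theorem~\ref{thm:nmSO4}), and to track what happens to the $(-1)$-shifted presymplectic structure~\eqref{presymplecticnm} under this operation. First I would observe that, since the presymplectic complex factors as a completed tensor product of a complex on $M$ with a complex on $\Sigma$, the pushforward factors as well: $\pi_*\left(\Omega^{\bu}(M) \mathbin{\Hat{\otimes}} \Omega^{0,\bu}(\Sigma)[2]\right) \simeq \left(\pi_*^M \Omega^{\bu}(M)\right) \mathbin{\Hat{\otimes}} \Omega^{0,\bu}(\Sigma)[2]$, where $\pi_*^M$ denotes taking global sections/cohomology along the compact factor $M$. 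Here I would invoke the fact that $M$ is compact (and the de Rham complex is elliptic), so that $\Omega^{\bu}(M)$ is formal: there is a quasi-isomorphism $H^{\bu}(M,\C) \xto{\simeq} \Omega^{\bu}(M)$ of complexes (with zero differential on the left), choosing harmonic representatives via a Riemannian metric on $M$. This collapses the $M$-directions to the graded vector space $H^{\bu}(M,\C)$, sitting in degrees $0$ through $4$, and after the overall shift $[2]$, in degrees $-2$ through $+2$.

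The next step is to identify the resulting theory. After the reduction we obtain the complex $H^{\bu}(M,\C)[2] \mathbin{\Hat{\otimes}} \Omega^{0,\bu}(\Sigma)$, or equivalently $\Omega^{0,\bu}(\Sigma) \otimes H^{\bu}(M,\C)[2]$, with differential $\dbar$ on $\Sigma$ and trivial differential on the finite-dimensional factor. This is precisely the underlying complex of the chiral boson $\chi(0, H^{\bu}(M,\C)[2])$ on $\Sigma$ from Example~\ref{eg:chiralboson} (using Remark~\ref{rmk:h}, the pairing datum need not be nondegenerate, though here it is by Poincar\'e duality). It remains to match the presymplectic structures. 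The pairing~\eqref{presymplecticnm} is $(\alpha, \alpha') \mapsto \int_{M \times \Sigma} \alpha \, \partial_\Sigma \alpha'$; under the K\"unneth decomposition and integration over $M$ this becomes $\int_\Sigma (\cdot) \, \partial_\Sigma (\cdot)$ paired with the integration pairing $\int_M$ on $H^{\bu}(M,\C)$, which is exactly the graded-symmetric bilinear form $h$ on $H^{\bu}(M,\C)[2]$ appearing in the definition $\omega = \partial \otimes h$ of the chiral boson with values in a graded vector space. So the reduced presymplectic form is $\partial_\Sigma \otimes h$, matching $\chi(0, H^{\bu}(M,\C)[2])$ on the nose.

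The main obstacle, and the step requiring the most care, is verifying that the formality quasi-isomorphism $H^{\bu}(M,\C) \xto{\simeq} \Omega^{\bu}(M)$ can be promoted to an \emph{equivalence of presymplectic BV theories} after tensoring with $\Omega^{0,\bu}(\Sigma)[2]$ --- that is, one must check that pulling back the pairing $\partial_\Sigma \otimes h$ along this quasi-isomorphism agrees with $\pi_*$ of~\eqref{presymplecticnm} up to a $Q_{\rm BV}$-exact homotopy, in the sense discussed in~\S\ref{sec:holtwist}. On a compact $M$, harmonic representatives pair with each other exactly as their cohomology classes do under $\int_M$, and the orthogonal complement of the harmonics (the image of $\d$ and $\d^*$) is acyclic; the pairing restricted to this acyclic part is then handled by the standard homotopy argument for eliminating acyclic pairs (analogous to the remark following the discussion of $\ker(\Phi)$ in~\S\ref{sec:holtwist}). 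I would spell this out by choosing the Hodge decomposition $\Omega^{\bu}(M) = \cH^{\bu} \oplus \d\Omega^{\bu-1} \oplus \d^*\Omega^{\bu+1}$ and exhibiting the explicit degree $(-2)$ homotopy built from the Green's operator on $M$; everything else is routine K\"unneth bookkeeping and an application of the general formalism of~\S\ref{sec:pre}.
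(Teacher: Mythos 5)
Your proposal is correct and follows essentially the same route the paper intends: the paper offers only a one-line justification (push forward the complex $\Omega^{\bu}(M)\mathbin{\Hat{\otimes}}\Omega^{0,\bu}(\Sigma)[2]$ along $M\times\Sigma\to\Sigma$, replace $\Omega^\bu(M)$ by $H^\bu(M,\C)$, and observe that the integration pairing supplies the graded symmetric form $h$), and your write-up is exactly that computation with the missing details supplied. The extra care you take with the Hodge decomposition and the homotopy on the acyclic complement is consistent with the paper's general treatment of equivalences of presymplectic structures and is a faithful elaboration rather than a different argument.
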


\subsection{Untwisted dimensional reduction}

We close this paper by giving some results on dimensional reduction at the level of the untwisted theory. It is expected that dimensional reduction along a circle should produce five-dimensional $\N=2$ super Yang--Mills theory, but with an inverse dependence of the 5d coupling constant on the compactification radius, compared to the dependence expected from a typical Kaluza--Klein reduction. As in the results for the twisted theory above, we will be able to show that our formulation reduces correctly, up to a copy of the constant sheaf. We check that the presymplectic structure agrees with the standard BV pairing after dimensional reduction. Accounting correctly for this copy of the constant sheaf should require passing to a nonperturbative description of the theory; we offer some speculative comments on this, and plan to return to a more rigorous analysis in future work.

In five-dimensional $\N=2$ supersymmetry,  the $R$-symmetry group is $\Sp(2) \cong \SO(5)$, just as in  six dimensions. The chiral spinor repreduces to  the (unique) five-dimensional spinor representation; dimensional reduction of the fermions and the  scalars  in~$\cT_{(2,0)}$ is thus trivial. The only difficulty is thus to check that our description of~$\thy_+(2)$ reduces to the (nondegenerate) BV theory of a one-form gauge field in five dimensions. We check this in the theorem below; the full statement for supersymmetric theories follows immediately (Corollary~\ref{cor:untwisted}).

\begin{prop} \label{prop:tensorred}
  Let $\thy_+(2)^\text{red}$ denote the dimensional reduction of~$\thy_+(2)$ to five dimensions along $S^1$. There is a map of theories
  \deq{
    \Xi: \thy_+(2)^\text{red} \to \Scalar(1),
  }
  where $\Scalar(1)$ is the standard nondegenerate theory of perturbative abelian one-form fields on~$\R^5$. The kernel of this map is a copy of the constant sheaf in BV degree $-2$. Furthermore, the shifted presymplectic structure $\Xi^*\omega_\Scalar$ agrees with the shifted presymplectic structure inherited from~$\thy_+(2)$.
\end{prop}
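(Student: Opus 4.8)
\emph{Overview and the reduction.} The plan is to establish the three assertions — existence of $\Xi$, the shape of its kernel, and the compatibility of shifted presymplectic structures — by computing $\thy_+(2)^\text{red}$ by hand and then deducing the last (and subtlest) statement from the constraint description of $\thy_+(2)$ recorded in~\S\ref{sec:pre}. First I would carry out the reduction. Writing $\R^6 = \R^5\times S^1$ with $\theta$ the coordinate on $S^1$, every $S^1$-invariant $p$-form is uniquely $a + \d\theta\wedge b$ with $a\in\Omega^p(\R^5)$ and $b\in\Omega^{p-1}(\R^5)$; one has $\d(a + \d\theta\wedge b) = \d a - \d\theta\wedge\d b$, and the product metric gives $\star_6(a+\d\theta\wedge b) = \star_5 b - \d\theta\wedge\star_5 a$. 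A short calculation identifies $\Omega^3_+(\R^6)$ with $\Omega^2(\R^5)$ via $\gamma\mapsto -i\star_5\gamma + \d\theta\wedge\gamma$, so that $\thy_+(2)$ — which by Example~\ref{eg:sd} is $\Omega^0\to\Omega^1\to\Omega^2\xto{\d_+}\Omega^3_+$ in degrees $-2,\dots,1$ — reduces to the four-term complex
\deq{
  \Omega^0(\R^5)\;\to\;\Omega^1(\R^5)\oplus\Omega^0(\R^5)\;\to\;\Omega^2(\R^5)\oplus\Omega^1(\R^5)\;\to\;\Omega^2(\R^5)
}
in degrees $-2$ through $1$, with the differential out of degree $0$ the \emph{first-order} operator $(\beta, A)\mapsto\tfrac12(i\star_5\d\beta - \d A)$ — the ``democratic'' packaging of five-dimensional Maxwell theory, in which $A$ is the photon, $\beta$ its magnetic-dual two-form, and the degree-one relation $\d A = i\star_5\d\beta$ ties the two field strengths together. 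Pushing $\omega_{\thy_+}$ (which on $\R^6$ pairs $\beta\in\Omega^2$ with $\gamma\in\Omega^3_+$ by $\int_{\R^6}\beta\wedge\d\gamma$) forward and integrating over $S^1$ gives the pairing $\omega^\text{red}_{\thy_+}\bigl((\beta, A),\gamma\bigr) = \int_{\R^5}\bigl(-\beta\wedge\d\gamma + iA\wedge\d\star_5\gamma\bigr)$ between the degree-$0$ and degree-$1$ fields.

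\emph{The map and its kernel.} I would then define $\Xi$ to ``forget the magnetic potential'': it is zero in degree $-2$, is the projection $(\lambda, c)\mapsto c$ in degree $-1$ and $(\beta, A)\mapsto A$ in degree $0$, and in degree $1$ is $\gamma\mapsto\d\star_5\gamma\in\Omega^4(\R^5)$ (up to a normalizing constant). The one nontrivial cochain-map identity is in degree $0$: applying $\Xi$ to $\tfrac12(i\star_5\d\beta - \d A)$ gives $\tfrac i2\d\star_5\star_5\d\beta - \tfrac12\d\star_5\d A = -\tfrac12\d\star_5\d A$, since $\d\star_5\star_5\d\beta = \d\d\beta = 0$, and this is the Maxwell operator $\d\star\d$ defining $\Scalar(1)$ (Example~\ref{eg:freepform}); so $\Xi$ is a map of theories. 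Its literal kernel is the complex $\Omega^0(\R^5)\xto{\d}\Omega^1(\R^5)\xto{\d}\Omega^2(\R^5)\xto{\frac i2\star_5\d}\ker(\star\d\star)\subset\Omega^2(\R^5)$, whose cohomology, by the Poincar\'e lemma on $\R^5$, is $\underline{\C}$ in degree $-2$ and zero in degrees $-1, 0, 1$; thus $\ker\Xi$ is quasi-isomorphic to the constant sheaf placed in BV degree $-2$. One checks along the way that the cokernel of $\Xi$ is acyclic — it is concentrated in degrees $1, 2$ and isomorphic to $(\Omega^5\xto{\sim}\Omega^5)$ — so that $\Xi$ is surjective on cohomology and is an equivalence of theories once the constant sheaf is collapsed.

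\emph{The presymplectic structures.} For the last claim the cleanest argument is conceptual. By~\S\ref{sec:pre}, $\thy_+(2)$ is the constrained system attached to the \emph{nondegenerate} theory $\Scalar(2)$ of a free two-form on $\R^6$, with $\omega_{\thy_+} = j^*\omega_{\Scalar(2)}$ for a cochain map $j\colon\thy_+(2)\to\Scalar(2)$. Dimensional reduction commutes with pullback, and the reduction of the free $6$d two-form splits as a \emph{decoupled} sum $\Scalar(2)^\text{red}\cong\Scalar(2)_{\R^5}\oplus\Scalar(1)_{\R^5}$, with $\omega^\text{red}_{\Scalar(2)} = \omega_{\Scalar(2)_{\R^5}}\oplus\omega_{\Scalar(1)_{\R^5}}$; under this splitting the $\Scalar(1)_{\R^5}$-component of $j^\text{red}$ is exactly $\Xi$. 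Hence
\deq{
  \omega^\text{red}_{\thy_+} - \Xi^*\omega_{\Scalar} \;=\; j_2^*\,\omega_{\Scalar(2)_{\R^5}}, \qquad j_2 := \mathrm{pr}_1\circ j^\text{red}\colon \thy_+(2)^\text{red}\to\Scalar(2)_{\R^5},
}
and it remains to see that the right-hand side is $Q_{\rm BV}$-exact — i.e.\ that the two shifted presymplectic structures are \emph{equivalent} in the sense of~\S\ref{sec:holtwist}. I would verify this either abstractly — $j_2$ induces the zero pairing on cohomology, since $\Scalar(2)$ on $\R^5$ has cohomology concentrated in degrees $-2$ and $0$, which cannot pair under a degree-$(-1)$ form, so by ellipticity the closed degree-$(-1)$ two-form $j_2^*\omega_{\Scalar(2)_{\R^5}}$ is exact — or by exhibiting the explicit degree-$(-2)$ primitive $h\propto\int_{\R^5}\gamma\wedge\star_5\gamma$ (the $5$d Hodge pairing on the shared degree-one field) and checking $[Q_{\rm BV}, h] = \omega^\text{red}_{\thy_+} - \Xi^*\omega_{\Scalar}$ using $\star_5^2 = 1$ on middle-degree forms and an integration by parts.

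\emph{Main obstacle.} The existence of $\Xi$ and the kernel computation are essentially formal once the reduction is set up; the real content is the last step, and I expect the difficulty to lie precisely there — reconciling the first-order duality operator with the second-order Maxwell operator, tracking the $5$d Hodge-star and orientation signs through the $S^1$-reduction (so that the constants in the homotopy $h$ work out), and making the phrase ``the kernel is the constant sheaf'' precise, since the top term $\ker(\star\d\star)$ of $\ker\Xi$ is not locally free and one must argue at the level of sheaves, or equivalently with the homotopy fibre of $\Xi$.
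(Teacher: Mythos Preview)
Your reduction, your map $\Xi$, and your kernel computation match the paper's up to cosmetic repackaging. The paper applies a Hodge-star isomorphism in BV degree $+1$ (replacing your $\gamma\in\Omega^2(\R^5)$ by $\star\gamma\in\Omega^3$) to rewrite the reduced complex as $\cU=\Omega^{\leq1}(\R^5)[1]\oplus\Omega^{\leq3}(\R^5)[2]$, after which $\Xi$ is the identity on the first summand and the de~Rham operator $d$ on the second; for the kernel the paper computes the \emph{cone} of $\Xi$ and identifies it with $\Omega^\bu(\R^5)[3]$, which sidesteps precisely the non-locally-free top-term issue you flag at the end.

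The presymplectic step is where you genuinely diverge. The paper writes down $\omega^\text{red}$ directly as a pair of de~Rham operators and asserts $\Xi^*\omega_\Scalar=\omega_\cU$ by inspection, without invoking the constraint description. Your route --- factoring through $j:\thy_+(2)\hookrightarrow\Scalar(2)$, using the splitting $\Scalar(2)^\text{red}\cong\Scalar(2)_{\R^5}\oplus\Scalar(1)_{\R^5}$ to identify $\Xi$ as the $\Scalar(1)$-component of $j^\text{red}$, and recognizing the defect as the pullback of the \emph{magnetic} pairing $\omega_{\Scalar(2)_{\R^5}}$ --- is more structural and makes the link to electric--magnetic duality (which the paper discusses informally immediately afterward) completely explicit. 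The paper's check is shorter; your argument explains \emph{why} the forms agree rather than merely verifying it. One caution: the variation $[Q_{\rm BV},h]$ of your explicit homotopy $h\propto\int\gamma\wedge\star\gamma$ produces a fixed linear combination of \emph{both} the $\beta$--$\gamma$ and the $A$--$\gamma$ pairings, so it only trivializes $j_2^*\omega_{\Scalar(2)_{\R^5}}$ for one particular choice of the normalizing constant in $\Xi(\gamma)$; either pin that constant down, or rely on your abstract exactness argument, which on $\R^5$ is immediate since $H^\bu(\thy_+(2)^\text{red})$ is one-dimensional in a single degree.
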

\begin{proof}
We begin by placing the theory on~$\R^5 \times S^1$. For all of the fields other than the self-dual antifield, the decomposition is straightforward, using the fact that 
\deq{
  \Omega^i(\R^6) \cong \left( \Omega^i(\R^5) \otimes \Omega^0(S^1) \right)  \oplus \left( \Omega^{i-1}(\R^5) \otimes \Omega^1(S^1) \right) .
}
The self-dual forms $\Omega^3_+$, however, sit diagonally with respect to this decomposition. If $p: \R^5 \times S^1 \to \R^5$ denotes the obvious projection and $\alpha \in \Omega^3(\R^5)$ is a fixed two-form, then 
  \deq{ \pi_+ p^* \alpha = \frac{1}{2} \left( p^* \alpha -  \sqrt{-1} \, dt \wedge p^* (\star  \alpha) \right). 
  }
  Note that $\star$ here denotes the \emph{five}-dimensional Hodge star operator; $\star\alpha$ is thus an element of $\Omega^2(\R^5)$. $\theta$ is a coordinate on~$S^1$.

  In  light of this decomposition, we can  rewrite $\thy_+(2)$ on $\R^5 \times  S^1$ in the  following fashion:
  \begin{equation}
    \begin{tikzcd}
      \Omega^0(\R^5) \otimes \Omega^0(S^1)[2] \arrow{r}{\d_{\R^5}} \arrow{rd}{\d_{S^1}} &        \Omega^1(\R^5) \otimes \Omega^0(S^1)[1] \arrow{r}{\d_{\R^5}} \arrow{rd}{\d_{S^1}} &  \Omega^2(\R^5) \otimes \Omega^0(S^1)  \arrow{rd}{\d_{S^1} + (\star\d)_{\R^5}} \\
      &       \Omega^0(\R^5) \otimes \Omega^1(S^1)[1] \arrow{r}{\d_{\R^5}}  &        \Omega^1(\R^5) \otimes \Omega^1(S^1) \arrow{r}{\d_{\R^5}}  &  \Omega^2(\R^5) \otimes \Omega^1(S^1)[-1]  .
    \end{tikzcd}
  \end{equation}
  (We have made use of the fact that the spaces $\Omega^2(\R^5)\otimes\Omega^1(S^1)$ and~$\Omega^3(\R^5)\otimes\Omega^0(S^1)$ are isomorphic via the six-dimensional Hodge star.) 

  To dimensionally reduce along the circle, we pass to the cohomology of $\d_{S^1}$. This results in the following cochain complex $\thy_+(2)^\text{red}$ of vector bundles on~$\R^5$: 
\begin{equation}
    \begin{tikzcd}
      \Omega^0(\R^5)[2] \arrow{r}{\d}  &        \Omega^1(\R^5)[1] \arrow{r}{\d} &  \Omega^2(\R^5) \arrow{rd}{\star\d} \\
      &       \Omega^0(\R^5)[1] \arrow{r}{\d}  &        \Omega^1(\R^5) \arrow{r}{\d}  &  \Omega^2(\R^5) [-1].
    \end{tikzcd}
  \end{equation}
  Of  course, the shifted presymplectic structure on~$\thy_+(2)^\text{red}$ is inherited from the structure on~$\thy_+(2)$. It is easy to verify that, as  a  skew map from $\thy_+(2)^\text{red}$ to~$(\thy_+(2)^\text{red})^![-1]$, the pairing $\omega^\text{red}$ is just the pair of first-order differential operators represented by dashed arrows in the below diagram:
  \begin{equation}
    \label{eq:red-presymp}
  \begin{tikzcd}
    \Omega^0(\R^5)[2] \arrow{r}{}  &        \Omega^1(\R^5)[1] \arrow{r}{} &  \Omega^2(\R^5) \arrow{rd}{\star\d}  \\ 
      &       \Omega^0(\R^5)[1] \arrow{r}{}  &        \Omega^1(\R^5) \arrow{r}{}  &  \Omega^2(\R^5) [-1] \arrow[dashed]{d}{\d}\\
      & & & \Omega^3(\R^5)[-1] \arrow{r}{} & \Omega^4(\R^5)[-2] \arrow{r}{} & \Omega^5(\R^5)[-3] \\
      & & \arrow[dashed, <-, bend left = 50, crossing over]{uuu}{\d} \Omega^3(\R^5) \arrow{r}{} \arrow{ru}{\d \star} & \Omega^4(\R^5)[-1] \arrow{r}{} & \Omega^5(\R^5)[-2]. \\
    \end{tikzcd}
  \end{equation}
  There is a single check that $\omega^\text{red}$ defines a cochain map, which is trivial.

  Now, there is a strict isomorphism of cochain complexes of vector bundles, defined to be the identity map in all nonpositive degrees and the Hodge star map in degree 1. This isomorphism allows us to replace $\thy_+(2)^\text{red}$ by the cochain complex of vector bundles in the diagram below, which we will call $\cU$ for brevity:
\begin{equation}
    \begin{tikzcd}
      \Omega^0(\R^5)[2] \arrow{r}{\d}  &        \Omega^1(\R^5)[1] \arrow{r}{\d} &  \Omega^2(\R^5) \arrow{r}{\d} & \Omega^3(\R^5)[-1] \\
      &       \Omega^0(\R^5)[1] \arrow{r}{\d}  &        \Omega^1(\R^5) \arrow{ru}{\star\d}.
    \end{tikzcd}
\end{equation}
The shifted presymplectic structure on~$\cU$ is analogous to (in fact, just isomorphic to) that depicted in~\eqref{eq:red-presymp}. We are dealing with the complex 
\deq{
  \cU = (\Omega^{\leq 1}(\R^5)[1] \oplus \Omega^{\leq 3}(\R^5)[2], \star\d), 
}
and it is immediate that
\deq{
  \cU^![-1] \cong ( \Omega^{\geq 4} (\R^5) [-1] \oplus \Omega^{\geq 2}(\R^5), \d\star ).
}
The map $\omega_\cU$ consists of the two obvious de~Rham operators that define maps of degree zero from~$\cU$ to~$\cU^![-1]$. 

It remains to define the map of theories from the theorem statement. Recall that five-dimensional abelian (perturbative) Yang--Mills theory is described by the BV complex~$\Scalar(1) = \Scalar(1,\C)$ from~\S\ref{sec:pre}, which takes the form
\deq{ \Scalar(1) = (\Omega^{\leq 1}(\R^5)[1] \oplus \Omega^{\geq 4}(\R^5)[-1], \d\star\d).
}
There is then a map of cochain complexes defined by the vertical arrows in the following diagram:
\begin{equation}
  \Xi: \cU \to \Scalar(1), \quad 
  \begin{tikzcd} \Omega^{\leq 1}(\R^5)[1] \arrow{r}{\star\d } \arrow{d}{\id} & \Omega^{\leq 3}(\R^5)[2] \arrow{d}{\d} \\
    \Omega^{\leq 1}(\R^5)[1]\arrow{r}{\d\star\d} & \Omega^{\geq 4}(\R^5)[-1].
  \end{tikzcd}
\end{equation}
We now form the cone of this map. Using~\cite[Proposition~1.23]{ESW}, we can eliminate the acyclic piece, obtaining the description
\begin{equation}
  \Cone(\Xi) \cong \left( \begin{tikzcd} \Omega^{\leq 3}(\R^5)[3] \arrow{r}{\d} & \Omega^{\geq 4}(\R^5)[-1] \end{tikzcd} \right) .
\end{equation}
This complex has cohomology only in the left-hand term; after totalizing, we obtain $\Omega^\bu(\R^5)[3]$, thus finding that the kernel is a copy of the constant sheaf in degree $-2$. 
It is entirely straightforward to check that $\Xi^*\omega_\Scalar = \omega_\cU$. Since $\cU\cong\thy_+(2)^\text{red}$, we have completed the proof.
\end{proof}

  As remarked above, the full statement, pertaining to dimensional reduction of the full $\N=(2,0)$ abelian tensor multiplet, follows immediately from Proposition~\ref{prop:tensorred}:
\begin{cor} \label{cor:untwisted}
  Let $\cT^\text{red}_{(2,0)}$ denote the dimensional reduction of~$\cT_{(2,0)}$ to five dimensions along $S^1$, and let $\text{YM}_{\N=2}$ denote the $\N=2$ supersymmetric (perturbative) abelian Yang--Mills multiplet on~$\R^5$. 
  There is a map of theories 
  \deq{
    \hat{\Xi}: \cT^\text{red}_{(2,0)} \to \text{YM}_{\N=2},
  }
  defined by extending $\Xi$ by identity morphisms. The kernel of this map is a copy of the constant sheaf in BV degree $-2$. Furthermore, the shifted presymplectic structure $\Xi^*\omega_\text{YM}$ agrees with the shifted presymplectic structure inherited from~$\cT_{(2,0)}$.
\end{cor}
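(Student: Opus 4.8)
The plan is to reduce the corollary to Proposition~\ref{prop:tensorred} by way of the direct-sum decomposition of the multiplet. First I would recall that, by definition, $\cT_{(2,0)} = \thy_+(2) \oplus \Fermi_-(R_2) \oplus \Scalar(0,W)$ as \psBV{} theories, the $(-1)$-shifted presymplectic pairing being the orthogonal direct sum of $\omega_{\chi_+}$ on the first summand and the two \emph{nondegenerate} BV pairings on the fermion and scalar summands. Dimensional reduction along $S^1$ --- which, as in the proof of Proposition~\ref{prop:tensorred}, means passing to the cohomology of $\d_{S^1}$ on $\Omega^\bu(\R^5)\otimes\Omega^\bu(S^1)$-valued sections --- is exact and commutes with finite direct sums, so this decomposition descends to
\[
  \cT^\text{red}_{(2,0)} \;\cong\; \thy_+(2)^\text{red} \oplus \Fermi_-(R_2)^\text{red} \oplus \Scalar(0,W)^\text{red},
\]
again as a direct sum of (pre)symplectic BV theories, with no cross-terms.

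Next I would dispose of the last two summands. Under $\Spin(5)\hookrightarrow\Spin(6)$, both chiral spinor representations $S_\pm$ restrict to the unique Dirac spinor of $\Spin(5)$, and the six-dimensional Dirac operator restricts to the five-dimensional one; since the spinor and trivial bundles simply pull back from $\R^5$, taking $\d_{S^1}$-cohomology is clean and yields $\Fermi_-(R_2)^\text{red}$, the free BV theory of a five-dimensional Dirac fermion valued in $R_2$, together with $\Scalar(0,W)^\text{red} = \Scalar(0,W)$ on $\R^5$, each carrying its standard nondegenerate pairing. Because the $R$-symmetry group $\Sp(2)\cong\SO(5)$ is common to the six- and five-dimensional theories and acts in the same way on $R_2$ and $W$, the orthogonal direct sum of $\Scalar(1)$ (the image of $\thy_+(2)^\text{red}$ under $\Xi$), the five-dimensional Dirac fermion valued in $R_2$, and $\Scalar(0,W)$ is, essentially by definition, the perturbative abelian $\N=2$ Yang--Mills multiplet $\text{YM}_{\N=2}$ on $\R^5$; I would record this identification as an elementary representation-theoretic fact.

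It then remains only to assemble the map. I would set $\hat\Xi$ equal to $\Xi$ of Proposition~\ref{prop:tensorred} on the $\thy_+(2)^\text{red}$ summand and to the identity on the other two; this is visibly a cochain map, since $\Xi$ is and the summands do not interact. Its kernel is $\ker(\Xi)\oplus 0\oplus 0$, a copy of the constant sheaf in BV degree $-2$ by Proposition~\ref{prop:tensorred}. For the presymplectic form: the pairing on $\cT^\text{red}_{(2,0)}$ is the orthogonal sum of $\omega^\text{red}_{\chi_+}$ and the fermion and scalar pairings; on the first summand $\Xi^*\omega_\Scalar = \omega^\text{red}_{\chi_+}$ by Proposition~\ref{prop:tensorred}, and on the other two $\hat\Xi$ is the identity and hence preserves the pairings strictly, so $\hat\Xi^*\omega_{\text{YM}} = \omega^\text{red}_{(2,0)}$. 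Since every genuine subtlety --- the diagonal placement of $\Omega^3_+$ under the Kaluza--Klein splitting, the electric--magnetic rotation by the five-dimensional Hodge star in degree $1$, and the emergence of the constant sheaf from a cone computation --- is already encapsulated in Proposition~\ref{prop:tensorred}, the only real work here is the bookkeeping just described, and I do not expect a substantive obstacle.
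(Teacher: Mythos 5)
Your proposal is correct and follows essentially the same route as the paper: the paper likewise observes that the fermion and scalar summands reduce trivially (the $\Spin(6)$ chiral spinor restricting to the unique $\Spin(5)$ spinor, with the $R$-symmetry group $\Sp(2)\cong\SO(5)$ unchanged), so that the entire content of the corollary is Proposition~\ref{prop:tensorred} applied to the $\thy_+(2)$ summand, with $\hat\Xi$ defined by extending $\Xi$ by identity morphisms. Your additional bookkeeping about the orthogonal direct-sum structure of the pairing and the kernel sitting entirely in the $\thy_+(2)^\text{red}$ summand is exactly the intended (and in the paper, implicit) argument.
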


\subsubsection{Electric--magnetic duality and the physical interpretation of the proof of Proposition~\ref{prop:tensorred}}
  For the physicist reader, the language of the proof of Proposition~\ref{prop:tensorred} may be unfamiliar, but the manipulations should at least have a familiar flavor. We briefly recall the typical description of electric--magnetic duality that is  folk wisdom among physicists: A theory of $p$-form gauge fields in dimension $d$ has a gauge potential $A \in \Omega^p(\R^d)$, whose field strength is a gauge-invariant $(p+1)$-form given (in the abelian case) just by $F = dA$. $F$ satisfies an equation of motion $\d{\star F} = 0$, but also a ``Bianchi identity'' $\d  F = 0$, which is (at least in contractible open sets) equivalent to the existence of the potential $A$. These equations could be just as well phrased in terms of the ``dual'' field strength, the $(d-p-1)$-form $G = \star F$, with the roles of the equations of motion and the Bianchi identity reversed. In light of the Bianchi identity, $G$ can be written as the field strength of a potential $B \in \Omega^{d - p - 2}(\R^d)$. One can sum this up by saying that an equivalence is expected between the theories of $p$-forms and $(d-p-2)$-forms; the two different descriptions are sometimes referred to as the ``electric'' and ``magnetic'' sides of the duality, since the Hodge star operation in standard Maxwell theory reverses the components of $F$ that are the physical electric and magnetic fields. (In Maxwell theory in four dimensions, both the electric and magnetic gauge fields are one-forms.)

  One might therefore expect an equivalence between the theories we have called $\Scalar(p)$ and~$\Scalar(d-p-2)$. However, it is not possible to write down a quasi-isomorphism relating the two; an attempt to follow the logic of the above argument always produces results which disagree by certain shifted copies of the constant sheaf. Moreover, this is not surprising from the physical perspective: From the point of view of the electric theory, the magnetic degrees of freedom---which are the 't~Hooft operators---are nonperturbative, and are most naturally thought of as disorder operators. In the absence of any consideration of the global structure of the gauge group, or the related issue of Dirac quantization, one cannot expect to capture these degrees of freedom correctly. 

  One formulation of the physics argument above, in the BV formalism and at a perturbative level, goes as follows; see~\cite{ElliottAbelian} for a rigorous treatment of nonperturbative issues in electric--magnetic duality from a BV perspective. We begin with the BV theory 
  \begin{equation}
    \Scalar(p) = \begin{tikzcd} \Omega^{\leq p}(\R^d)[p] \arrow{r}{\d\star\d} & \Omega^{\geq (d-p)}(\R^d)[-1], \end{tikzcd}
    \end{equation}
    thinking of it as the electric description. 
    There is another cochain complex $\cF(p)$ of vector bundles on~$\R^d$, defined by
    \begin{equation}
      \cF(p) = \begin{tikzcd} \Omega^{\geq(p+1)}(\R^d) \arrow{r}{\d\star} & \Omega^{\geq(d-p)}(\R^d)[-1], \end{tikzcd}
    \end{equation}
    which can be thought of as a BV or on-shell version of the field strength, subjected to its equation of motion. ($\cF(p)$ freely resolves the sheaf of solutions to the equations $\d F = \d {\star F} = 0$.) 
    There is a curvature map $\curv: \Scalar(p) \to \cF(p)$, defined by the vertical arrows in the diagram
    \begin{equation}
      \begin{tikzcd} \Omega^{\leq p}(\R^d)[p] \arrow{d}{\d} \arrow{r}{\d\star\d} & \Omega^{\geq (d-p)}(\R^d)[-1] \arrow{d}{\id}\\
   \Omega^{\geq(p+1)}(\R^d) \arrow{r}{\d\star} & \Omega^{\geq(d-p)}(\R^d)[-1].
 \end{tikzcd}
 \end{equation}
 It extends the usual curvature map on fields by the identity on antifields.    
 The cone of $\curv$ is a shift of the de Rham complex, as in the proof above, and so there is a kernel, consisting of the constant sheaf representing zero modes of the zero-form ghost in BV degree $-p$.

 Now, applying the Hodge star in degree zero defines an isomorphism of $\cF(p)$ with $\cF(d-p-2)$. There is thus a sequence of maps of the form 
 \begin{equation}
   \begin{tikzcd}
     \Scalar(p) \arrow{r}{\curv} & \cF(p) \arrow{r}{\cong} & \cF(d-p-2) & \Scalar(d-p-2) \arrow{l}[swap]{\curv},
   \end{tikzcd}
 \end{equation}
 encapsulating a BV description of the argument above. If electric--magnetic duality were to hold at a perturbative level, all of these maps would be quasi--isomorphisms; the curvature map, however, is not, and the duality fails at the level of the constant sheaves described above. It is interesting to note that, in the description we are giving, the antifields to the electric degrees of freedom in some  sense play the role of the magnetic degrees of freedom.
 Furthermore, we remark that $\cF(p)$ does \emph{not} admit a natural shifted presymplectic structure; it does, however, admit a shifted Poisson tensor.

 In the proof of Proposition~\ref{prop:tensorred}, a very analogous set of arguments play a  role. However, the object $\cU$ that appears there is \emph{not} the curvature; in fact, it maps \emph{into} both $\Scalar(1)$ and $\Scalar(2)$ on~$\R^5$ in symmetric fashion, defining a roof of maps between them, rather than receiving maps from each. To phrase the situation in general language, we would define 
\begin{equation}
  \label{eq:defU}
  \cU(p) = \begin{tikzcd} \Omega^{\leq p}(\R^d)[p] \arrow{r}{\star \d} & \Omega^{\leq (d-p-1)}(\R^d)[d-p-2]. \end{tikzcd}
\end{equation}
It is immediate to  see that $\cU(p)$ and~$\cU(d-p-2)$ are isomorphic via the Hodge star in BV degree $+1$, and that the map $\Xi$ can be generalized to a map $\Xi(p): \cU(p) \to \Scalar(p)$. $\cU(p)$, moreover, \emph{does} admit a natural shifted presymplectic structure, as in~\eqref{eq:red-presymp}.

The proof of Proposition~\ref{prop:tensorred} relies on electric--magnetic duality, in the sense that $\thy_+(2)^\text{red} = \cU(2)(\R^5)$ is, at first glance, most naturally interpreted as a theory of a two-form. The additional copy of a constant sheaf in the theorem is also, in some sense, dual to the issue that appeared in our attempt to perturbatively formalize the standard argument. We can sum up all of these considerations, in somewhat greater generality, with the following diagram:
\begin{equation}
  \label{eq:EMpert}
  \begin{tikzcd}[column sep = 4 em]
    \cU(p) \arrow[leftrightarrow]{r}{\cong} \arrow{d}{\Xi} & \cU(d-p-2) \arrow{d}{\Xi} \\
    \Scalar(p) \arrow{d}{\curv} \arrow[dash, dashed]{r}{\text{EM dual}} & \Scalar(d-p-2) \arrow{d}{\curv} \\
    \cF(p) \arrow[leftrightarrow]{r}{\cong} & \cF(d-p-2).
  \end{tikzcd}
\end{equation}
The kernel of each vertical map in~\eqref{eq:EMpert} is an appropriately shifted copy of the constant sheaf. The failure of these vertical maps to define quasi-isomorphisms reflects the nonperturbative nature of the duality; we offer some speculation on the correct fix for this in the next section. 

\subsubsection{Speculative remarks on global structure}

There is no doubt that the reader will have been disappointed by all of the ``errors'' in the above results, having to do with zero modes (or, for mathematicians, undesirable copies of constant sheaves). Part of the reason for the discursiveness of the above remarks on electric--magnetic duality is to emphasize that we see these as representing familiar  phenomena from the physics perspective: Any on-the-nose equivalence of perturbative theories cannot possibly be a correct representation of electric--magnetic duality. The fact that electric--magnetic duality plays a role in passing from the $\N=(2,0)$ multiplet to supersymmetric Yang--Mills theory in five dimensions is also not unreasonable; in fact, this is the key reason that the dependence on the coupling constant is inverted from the standard Kaluza--Klein expectation, as remarked above. For interacting theories, electric--magnetic duality requires an inversion of the coupling constant. (The coupling constant that scales ``correctly'' with the compactification radius is not the Yang--Mills coupling constant, but the coupling constant of its magnetically dual theory of two-forms.)

In fact, it is tempting to speculate that insisting on the correct dimensional reduction at the nonperturbative level will shine a light on the nonperturbative BV formulation of electric--magnetic duality. 
Recall that the correct nonperturbative generalization of the BRST complex of an abelian gauge field---which is perturbatively just $\Omega^{\leq p}(\R^d)[p]$---is the smooth Deligne cohomology group
\begin{equation}
  \Z_\alpha(p)_D^\infty = 
  \begin{tikzcd}[column sep = 8 ex] \underline{\Z}[p+1] \arrow{r}{(2\pi i)^p \alpha} & \Omega^{\leq p}[p].
  \end{tikzcd}
\end{equation}
(Here $\alpha$ denotes a choice of real number, which plays the role of the coupling constant or radius of the gauge group; our notation here differs from the standard notation for Deligne cohomology by indicating $\alpha$ explicitly.) 
We should thus expect that it is possible to formulate a BV, or possibly presymplectic BV, description of abelian Yang--Mills theory, using Deligne cohomology groups to represent both the electric and magnetic gauge fields. In light of the considerations above, and by directly generalizing~\eqref{eq:defU}, one would attempt to write down a complex of the form
\begin{equation}
  \Z_\alpha \cU(p) = \begin{tikzcd} \Z_\alpha(p)_D^\infty \arrow{r}{\star\d} & \Z_{1/\alpha}(d-p-1)_D^\infty [-1].
  \end{tikzcd}
  \label{eq:YMDeligne}
  \end{equation}
  The inverse choices of coupling constants are necessitated by the requirement that the complex have nontrivial cohomology in degree zero. In particular, Deligne cohomology represents the curvature of a connection in a $\U(1)$ (or $\GL(1)$) bundle, and so admits a curvature map whose image is an \emph{integral} class (for $\alpha = 1$). We can rewrite the complex above in the form
  \begin{equation}
    \begin{tikzcd}
      \Z_\alpha(p)_D^\infty \arrow{dr}{\star\d} & \\
      & \Omega^{d-p-1}[-1] . \\
      \Z_{1/\alpha}(p)_D^\infty \arrow{ur}{\d}
    \end{tikzcd}
    \label{eq:EMeqs}
  \end{equation}
  By passing to the cohomology of the internal differentials of the Deligne complexes, we see that the curvatures of the electric and magnetic degrees of freedom must be related by the Hodge star. Choosing a volume form so that the Hodge star preserves integral classes makes the requirement on the coupling constants immediate, at least up to discrete choices corresponding to finite coverings of~$\U(1)$ by~$\U(1)$.  

  Describing things in this way makes our considerations seem almost trivial; of course abelian Yang--Mills theory consists of an  electric gauge field $A$ with curvature $F$, and a magnetic gauge field $B$ with curvature $G$, subject to the constraint that $F = \star G$. We emphasize that the novelty in this way of thinking consists of the fact that this pair is interpreted \emph{as a complete (presymplectic) BV theory}, where the pairing is defined by differential operators as done for $\omega_\cU$ above. In this formulation, the equations of motion (and therefore the antifields) for $F$ have been replaced by the Bianchi identities (and therefore the gauge invariances) for~$G$. This is the sense in which the magnetic gauge fields and the electric antifields are one and the same.
Electric--magnetic duality then just amounts to the trivial or manifest  statement that
\deq{
  \Z_\alpha \cU(p) \cong \Z_{1/\alpha} \cU(d-p-2).
}
  It would be interesting to make contact with other BV approaches to electric--magnetic duality, such as~\cite{ElliottAbelian}.

  Identical considerations suggest a nonperturbative definition of the theory of self-dual $(2k)$-forms;\footnote{We thank K.~Costello for suggesting this definition to us, independently of dimensional reduction.}
the reader will probably have guessed that the complex we have in mind is 
\begin{equation}
  \Z_\alpha \thy_+(2k) = \begin{tikzcd} \Z_\alpha(2k)_D^\infty \arrow{r}{\d_+} & \Omega^{2k+1}_+[-1]
  \end{tikzcd}
   = 
   \begin{tikzcd}[column sep = 8 ex] \underline{\Z}[2k+1] \arrow{r}{(2\pi i)^{2k} \alpha} & \thy_+(2k).
   \end{tikzcd}
\end{equation}
We note that, for $k=0$, this theory describes periodic (circle-valued) chiral bosons; in general, it describes a connection on an abelian gerbe, subject to the constraint that the curvature (which now defines an integral class) must be self-dual. 

Now, placing $\Z_\alpha \thy_+(2)$ on $\R^5 \times S^1$ and pushing forward along the projection map produces \emph{precisely} the complex $\Z_\alpha \cU(2)$, under the assumption that the radius of the compactification circle is one. 
To see this, note that we must use the derived pushforward, so that $\pi_* \underline{\Z} = \underline{H^*(S^1,\Z)}$. Making sense of the maps reduces to understanding the map induced on the sheaf cohomology of the circle from the map of sheaves
\deq{
  \underline{\Z} \xto{(2\pi i)^p\alpha} \underline{\C}.
}
Using Poincar\'{e} duality for $S^1$, one can argue that the induced map on $H^0$ is given by multiplication by $(2\pi i)^p \alpha$, while the induced map on $H^1$ is given by multiplication by $(2\pi i)^p \frac{{\rm vol}(S^1)}{\alpha}$. 

We expect the considerations of~\cite{WittenM5,HopkinsSinger} to play a role in our analysis at this stage; a careful formulation  of the arguments we have sketched here should make clear how the data of a quadratic refinement of the intersection form plays a role in our analysis. Such a datum  is required to make sense of the partition function of the chiral field; describing the classical theory, however, does not seem to explicitly require such a choice, at least a perturbative level. We feel that it would be interesting to further develop the formalism presented here, to make a more rigorous analysis of nonperturbative issues, and to address issues related to quantization. It will be exciting to move farther down this path in future work. 
\printbibliography

\end{document}